\documentclass[letterpaper,11pt]{article}

\usepackage{amsmath}
\usepackage{amsfonts}
\usepackage{amssymb}
\usepackage[margin=1in]{geometry}
\usepackage{mathtools}

\usepackage[T1]{fontenc}
\usepackage[utf8]{inputenc}
\usepackage{microtype}
\usepackage{booktabs}
\usepackage{subcaption}
\usepackage[ruled]{algorithm}
\usepackage{algpseudocodex}
\AtEndPreamble{\usepackage[capitalize]{cleveref}}
\usepackage{amsthm}
\usepackage{thmtools,thm-restate}
\usepackage[hidelinks]{hyperref}
\usepackage{makecell}
\hypersetup{hypertexnames=false}
\declaretheorem[numberwithin=section]{theorem}
\declaretheorem[style=plain,sharenumber=theorem]{lemma}
\declaretheorem[style=plain,sharenumber=theorem]{corollary}
\declaretheorem[style=plain,sharenumber=theorem]{proposition}

\declaretheorem[style=definition,sharenumber=theorem]{definition}
\declaretheorem[style=definition,sharenumber=theorem]{example}

\declaretheorem[style=remark,sharenumber=theorem]{remark}

\title{On the Impact of Stability and the Helly Property on the Dominating Set Problem}

\author{Che Cheng\\
\small{National Taiwan University}\\
\small{\texttt{\href{mailto:che.cheng@rwth-aachen.de}{che.cheng@rwth-aachen.de}}}
\and 
Daniel Mock\\
\small{RWTH Aachen University}\\
\small{\texttt{\href{mailto:mock@cs.rwth-aachen.de}{mock@cs.rwth-aachen.de}}}
\and 
Peter Rossmanith\\
\small{RWTH Aachen University}
\\
\small{\texttt{\href{mailto:rossmani@cs.rwth-aachen.de}{rossmani@cs.rwth-aachen.de}}}
}

\date{}

\definecolor{lipicsGray}{RGB}{0.31,0.31,0.33}
\definecolor{lipicsLineGray}{RGB}{0.51,0.50,0.52}

\usepackage{xspace}
\usepackage{todonotes}
\usepackage{listings}
\usepackage{framed}
\usepackage{xspace}
\usepackage{tikz}
\usetikzlibrary{calc}
\usetikzlibrary{shapes}
\usetikzlibrary{arrows.meta}

\usepackage{xifthen}
\usepackage{tabularx}
\usepackage[most]{tcolorbox}

\newtcolorbox{GrayBox}[1]{
enhanced,
breakable,
arc=2mm,
boxrule=0.5pt,
colframe=lipicsLineGray,
colback=white,
width=0.97\textwidth,
center,
left=2ex,
right=2ex,
title={#1},
coltitle=black,
attach boxed title to top left={
yshift=-\tcboxedtitleheight/2,
xshift=6pt
},
boxed title style={
colback=white,
colframe=white,
outer arc=0mm,
arc=0mm,
top=0pt, bottom=0pt, left=0pt, right=0pt
},
}

\newcommand{\C}{\mathcal{C}}
\newcommand{\F}{\mathcal{F}}
\newcommand{\N}{\mathbf{N}}
\newcommand{\Z}{\mathbf{Z}}
\renewcommand{\phi}{\varphi}
\renewcommand{\epsilon}{\varepsilon}
\newcommand{\vw}{\bar{w}}
\newcommand{\vx}{\bar{x}}
\newcommand{\vy}{\bar{y}}

\newcommand{\va}{\bar{a}}

\newcommand{\vc}{\bar{c}}

\DeclareMathOperator{\mw}{mw}

\DeclareMathOperator{\dist}{dist}

\DeclareMathOperator{\SCd}{SCd}
\newcommand{\W}[1]{\mathrm{W}[#1]}
\newcommand{\dom}{\mathrm{dom}}

\providecommand\given{}
\newcommand\SetSymbol[1][]{
\nonscript\,#1\vert \allowbreak \nonscript\,\mathopen{}}
\DeclarePairedDelimiterX\Set[1]{\lbrace}{\rbrace}%
{ \renewcommand\given{\SetSymbol[\delimsize]} #1 }
\DeclarePairedDelimiterX{\Paren}[1]{(}{)}{#1}
\DeclarePairedDelimiterX{\Abs}[1]{\lvert}{\rvert}{#1}
\DeclarePairedDelimiterX{\Norm}[1]{\lVert}{\rVert}{#1}
\DeclarePairedDelimiterX{\Ceil}[1]{\lceil}{\rceil}{#1}
\DeclarePairedDelimiterX{\Floor}[1]{\lfloor}{\rfloor}{#1}

\newcolumntype{R}{>{\displaystyle}r}
\newcolumntype{C}{>{\displaystyle}c}
\newcolumntype{L}{>{\displaystyle}l}

\DeclareFontShape{T1}{cmr}{m}{scit}{<-> ssub*cmr/m/scsl}{}

\begin{document}

\begin{titlepage}
    \maketitle

    \begin{abstract}
        We extend the algorithmic framework of \emph{progressive exploration}
        [Fabiański et al., \emph{STACS 2019}], which yields simple,
        yet surprisingly general and efficient parameterized algorithms
        for \textsc{Dominating Set},
        \textsc{Independent Set},
        and some of their variants.
        While they identified
        \emph{stability} and the \emph{Helly property} as necessary for their
        approach, we show that---with a simple change---in the case of
        \textsc{Dominating Set}, one can get rid of the stability requirement. This
        yields a fixed-parameter tractable algorithm
        on exactly those graph classes which do not contain long \emph{co-matchings} or \emph{double-ladders} as semi-induced subgraphs.
        Lifting one of these two restrictions makes \textsc{Dominating Set} W[1]-hard on these classes.
        Our algorithm generalizes results on
        weakly $\gamma$-closed graphs,
        and results from Sparsity theory, e.g.,
        nowhere dense and biclique-free classes.
        At the same time, we match the time
        complexity of the previously known algorithms on those classes.
        We demonstrate that this technique can easily be applied to the \textsc{Distance-$r$ Dominating Set} and the \textsc{Set Cover} problem.
    \end{abstract}

    \paragraph*{Acknowledgements}
    We want to thank Nikolas Mählmann and Sebastian Siebertz for pointing us to the hardness results for ladder-free classes and co-matching-free classes.
    Daniel Mock and Peter Rossmanith were funded by the Deutsche Forschungsgemeinschaft (DFG, German Science Foundation) -- DFG-927/15-2.
    Che Cheng was funded by the National Science and Technology Council of Taiwan (NSTC) -- 114-2221-E-002-183-MY3.
    \thispagestyle{empty}
\end{titlepage}

\section{Introduction}\label{sec:intro}

The
\textsc{Dominating Set} problem is one of the fundamental problems in classical
and parameterized complexity. The problem is $\W{2}$-hard on general graphs when
parameterized by solution size $k$, which motivates the question of which
restricted graph classes allow for fixed-parameter tractability of this problem.
This led to a plethora of both positive and negative tractability results and
techniques for fixed-parameter tractable (short: fpt)
algorithms and kernelization.
Currently, some of the most general graph classes where fixed-parameter
tractability of \textsc{Dominating Set} is known are biclique-free
classes~\cite{PhilipRS12}, weakly $\gamma$-closed
classes~\cite{LokshtanovS21},
monadically stable classes~\cite{DreierMS23,DreierEMMPT24},
and classes of bounded merge-width~\cite{DreierT25}.%
\footnote{Assuming a construction sequence witnessing the merge-width is given with the input.}

To unify and generalize some of the above results while investigating their
relationship to the others, the authors in~\cite{FabianskiPST19} proposed
the \emph{progressive exploration} framework,
which applies, among other problems, to \textsc{Dominating Set}.
To solve the \textsc{Dominating Set} problem, the so-called
\emph{Semi-Ladder Algorithm} performs iteratively in each round two steps:
first, it searches
for a candidate solution $D$ of size $k$ that dominates all previously seen
\emph{``witnesses''}; second, it searches for a witness,
i.e., a vertex that is not
dominated by the candidate solution $D$. If no candidate solution could be found
in the first step, the algorithm terminates with the answer \textsc{no}. If no
witness can be found in the second step, $D$ dominates all vertices and the
algorithm terminates with the answer \textsc{yes}. Otherwise, it proceeds with
the next round.

As it is, the algorithm seems to be nothing more than a brute-force search with
a running time of $O(n^k)$.
However, if the number of rounds, and hence the size of the set of
witnesses, is bounded, the running time of this algorithm is fpt. Surprisingly,
for many graph classes the number of rounds can be bounded by a function of $k$
alone. Especially, this reestablishes the fixed-parameter tractability of the
\textsc{Dominating Set} problem for all biclique-free classes and all powers of
nowhere-dense classes while improving or matching the best known running times
for these classes~\cite{PhilipRS12,DawarK09}.

To analyze the running time,
the key observation is that the algorithm implicitly
constructs a \emph{semi-ladder} (see \cref{fig:semi-ladder}) of candidate
solutions and witnesses during its run inside the
\emph{agreement graph,}%
\footnote{Referred to as the \emph{auxiliary graph} in \cite{FabianskiPST19}.}
a central notion throughout their and our analysis.
The
agreement graph is a bipartite graph consisting of the candidate solutions,
which are $k$-tuples of vertices of $G$, on the one side and witnesses,
which are single vertices of
$G$, on the other. A candidate solution and a witness are adjacent if the
candidate solution dominates the witness.
Instead of constructing this graph explicitly,
the algorithm only ``explores''
parts of the agreement graph along a semi-ladder.
The length of the longest
semi-ladder in the agreement graph determines the maximum number of rounds the
algorithm executes and hence, its running time.

This length, called the \emph{semi-ladder index} of the agreement graph,
is bounded (by a function of $k$)
if and only if the \emph{co-matching index} and the \emph{ladder index} of the
agreement graph are bounded. These are the lengths of the longest co-matchings
and ladders, and they correspond to the \emph{Helly property} and
Shelah's \emph{stability} from model theory, respectively.
For an illustration of these structures, see \cref{fig:co-matching,fig:ladder}.
To show the fpt running of the Semi-Ladder Algorithm,
it suffices to bound the co-matching and ladders indices of the graph,
which is done for biclique-free classes and powers of nowhere dense classes in
\cite{FabianskiPST19}.

\begin{figure}[t]
    \centering
    \begin{subfigure}[b]{.5\textwidth}
        \centering
        \includegraphics[height=3.2cm]{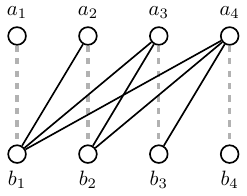}
        \caption{Semi-ladder}
        \label{fig:semi-ladder}
    \end{subfigure}%
    \begin{subfigure}[b]{.5\textwidth}
        \centering
        \includegraphics[height=3.2cm]{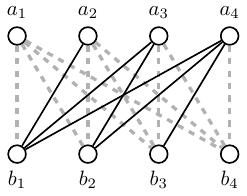}
        \caption{Ladder}
        \label{fig:ladder}
    \end{subfigure}%
    \medskip

    \begin{subfigure}[b]{.5\textwidth}
        \centering
        \includegraphics[height=3.2cm]{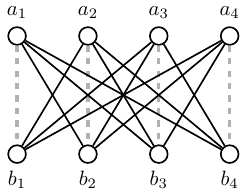}
        \caption{Co-matching}
        \label{fig:co-matching}
    \end{subfigure}%
    \begin{subfigure}[b]{.5\textwidth}
        \centering
        \includegraphics[height=3.2cm]{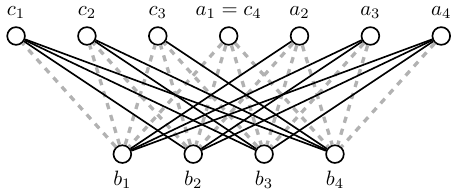}
        \caption{Double-ladder}
        \label{fig:double-ladder}
    \end{subfigure}%
    \caption{A semi-ladder, a ladder, a co-matching, and a double-ladder of
        order $4$, respectively.
        Solid lines represent edges and dashed lines represent non-edges.
        If no line is drawn between two vertices,
        then their connection can be arbitrary.}
    \label{fig:obstructions}
\end{figure}

As the tractability of the Semi-Ladder Algorithm relies on both the Helly
property and stability, we investigate whether one can generalize the framework
to work only with one of these properties.

\subsection{New Results}
We extend the progressive exploration framework to work with only the Helly
property via the Co-Matching Algorithm,
and show that the resulting algorithm is fpt for
\textsc{Dominating Set} on exactly the classes
forbidding large co-matchings and double-ladders
(\cref{fig:co-matching,fig:double-ladder}),
thereby unifying and extending semi-ladder-free and
weakly $\gamma$-closed classes.
The former is the class on which the Semi-Ladder Algorithm achieves fpt
running times,
and the latter originates from the triadic closure property in social
networks and generalizes classes of bounded degeneracy.
We also show how to adapt the framework to \textsc{Partial Dominating Set},
\textsc{Set Cover}, and \textsc{Distance-$r$ Dominating Set}.
A more detailed description of each contribution follows.

\paragraph*{Contribution 1: Algorithm for Helly Property}
The main contribution of this work is an algorithmic extension of the
progressive exploration framework. In particular, we show that the progressive
exploration framework can be made to work with only the Helly property by
introducing the \emph{Co-Matching Algorithm} for the \textsc{Dominating Set}
problem (\cref{sec:co-matching-algo}).
The algorithm extends the Semi-Ladder Algorithm by adding a
\emph{Compression Step} that reduces the size of the witness set before
searching for a new candidate.
This step removes ``redundant'' vertices from the witness set---that is,
those vertices that do not contribute to restricting the set of candidates that
dominate the witness set.
This allows the algorithm to run an unbounded
number of rounds while keeping the oracle calls efficient, and as a result
removes the need for stability.

\paragraph*{Contribution 2: Characterization of Graph Classes with Helly Property}
We further explore the graph classes on which the Co-Matching Algorithm achieves
an fpt running time for \textsc{Dominating Set},
and show that these classes can be characterized exactly by
graph classes forbidding arbitrarily large co-matchings and double-ladders
(\cref{fig:co-matching,fig:double-ladder}) as semi-induced subgraphs
(\cref{sec:graph-class}). The purely
structural characterization allows us to make connections to other well-studied
graph classes easily. The characterization is visualized in \cref{fig:classes},
where the highlighted green box (bd.\@ $\dom^k$-co-matching index) corresponds to the classes
where the Co-Matching Algorithm solves \textsc{Dominating Set} in fpt time. In
particular, we show that it unifies and extends both semi-ladder-free graph
classes~\cite{FabianskiPST19}
and weakly $\gamma$-closed graph classes~\cite{LokshtanovS21}.
The running time of the Co-Matching Algorithm on these classes matches the
previously known results.
On the negative side,
we show that classes of bounded shrubdepth
do not have this property.
Hence, while \textsc{Dominating Set} is fpt on these classes,
the Co-Matching Algorithm is not.

\begin{figure}[t]
    \centering
    \includegraphics[width=\textwidth]{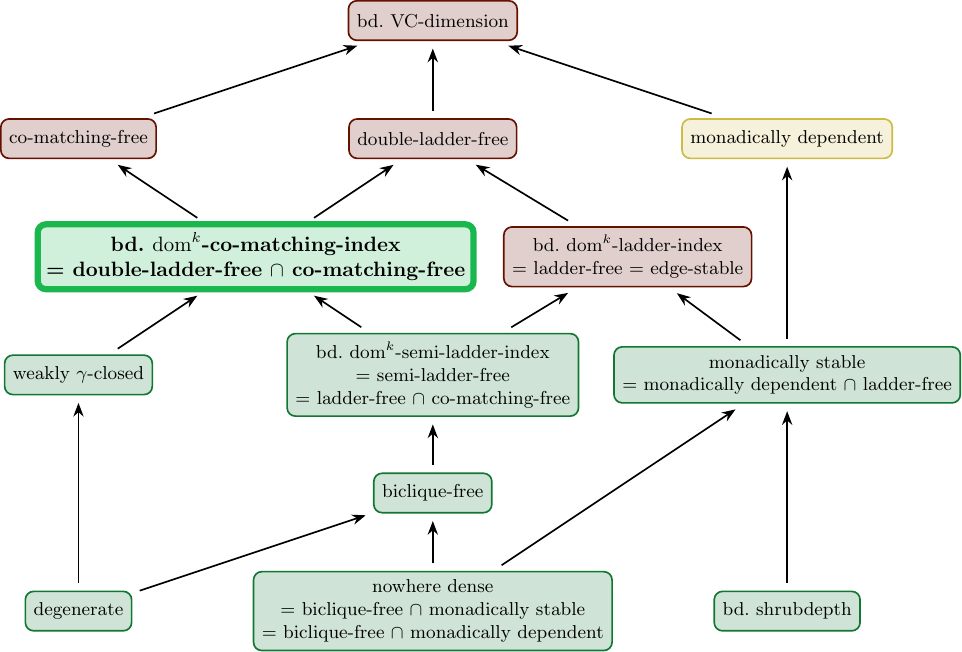}
    \caption{Inclusion structure of the graph classes discussed in this paper.
        The colors denote (in)tractability of \textsc{Dominating Set}
        parameterized by the solution size $k$:
        fpt algorithms exist for green classes;
        \textsc{Dominating Set} is $\W{1}$-hard on red classes;
        and whether it is fpt on monadically dependent classes is unknown but conjectured to be fpt~\cite{DreierMS23}.
        The highlighted class shows our result.}
    \label{fig:classes}
\end{figure}

\paragraph*{Observation: Hardness for Ladder-Free Classes and for Co-Matching-Free Classes}
Our results are tight in the following sense: If one drops the condition of
double-ladder-freeness or co-matching-freeness, i.e., only forbids co-matchings
or only forbids double-ladders as semi-induced subgraphs, then there exists such a class
where \textsc{Dominating Set} is $\W{1}$-hard.
We also note the hardness on ladder-free classes (\cref{prop:domset-hardness}).
These classes correspond to the agreement graphs having bounded ladder index, implying that the stability property alone does not ensure tractability for \textsc{Dominating Set}.
This answers a question raised by Guillemot~\cite{Guillemot25} (for hypergraphs) in the negative.
These hardness results are known from literature and are shaded red in \cref{fig:classes}.

\paragraph*{Contribution 3: Application to Dominating Set Variants}

We discuss that with straightforward modifications, the Co-Matching Algorithm can solve \textsc{Distance-$r$ Dominating Set} (\cref{sec:dist-r-ds}),
domination-type problems (\cref{sec:domination-type}),
and \textsc{Set Cover} (\cref{sec:set-cover}).
For the distance-$r$ variant,
the classes where the algorithm is fpt for \textsc{Distance-$r$ Dominating Set}
subsume those
where the Semi-Ladder Algorithm is fpt.
However,
we only have an indirect characterization of those classes and are not
aware of any natural classes of this sort beyond the capability of the
Semi-Ladder Algorithm.
For domination-type problems,
we show that while the Co-Matching Algorithm is fpt on
co-matching-free and double-ladder-free classes
for a non-trivial subset of such problems,
in some cases it only achieves fpt running time for semi-ladder-free classes.
For \textsc{Set Cover},
by defining the notions in a straightforward sense,
we show that the Co-Matching Algorithm is fpt for \textsc{Set Cover} on
co-matching-free and double-ladder-free set systems,
and that similar extension to \emph{coverage-type problems} is possible.
We also apply the progressive exploration framework to the
\textsc{Partial Dominating Set} problem,
a variant of \textsc{Dominating Set} where not all but a given number
$t\leq \Abs{G}$ of vertices have to be dominated (\cref{sec:pds}).
We adapt the Semi-Ladder Algorithm to obtain a randomized fpt approximation algorithm for this problem on semi-ladder free classes.
While this result is subsumed by existing literature~\cite{Guillemot25},
it demonstrates the flexibility of the progressive exploration framework for variants of the \textsc{Dominating Set} problem.

\subsection{Related Work}

\paragraph*{Historical Overview of \textsc{Dominating Set} Tractability}

The tractability landscape of \textsc{Dominating Set} can be organized into
two broad approaches. Table~\ref{tab:history} gives a chronological
overview. The first approach consists of direct
combinatorial algorithms tailored to \textsc{Dominating Set} that achieve
elementary fpt running times on classes defined by concrete structural
restrictions---beginning with bounded treewidth and planarity, progressing
through domination cores on nowhere dense classes
and weakly $\gamma$-closed classes and the progressive
exploration framework on biclique-free classes,
and culminating in the Co-Matching Algorithm on
co-matching-and-double-ladder-free classes (this work).
An outlier in this approach is that of claw-free graphs~\cite{HermelinMLW11},
i.e.,
graphs that do not contain $K_{1,3}$ as induced subgraph.
The class has unbounded VC-dimension,
which puts it outside of all other classes considered in the paper.
A decomposition unrelated to any other classes is used to derive the fpt result.
The second approach
comprises general first-order model-checking algorithms that solve any
FO-definable problem, including \textsc{Dominating Set}, on extremely
general classes such as monadically stable graphs and classes of bounded
merge-width;%
\footnote{Assuming a construction sequence witnessing the merge-width is given
    with the input.}
the cost is heavy machinery and non-elementary dependence on
the parameter.

\begin{table}[t]
    \centering
    \setlength{\tabcolsep}{5pt}
    \begin{tabular}{clll}
        \toprule
        Year       & Graph class                                                  & Reference & Comment \\
        \midrule
        $\sim$1999 & \llap{\textsuperscript{*}}Bounded treewidth
                   & Downey, Fellows~\cite{DowneyFellows13}
                   & DP on tree decompositions                                                          \\
        2002       & \llap{\textsuperscript{*}}Planar graphs
                   & Alber, Bodlaender et al.~\cite{AlberBFKN02}
                   & Subexponential                                                                     \\
        2004       & \llap{\textsuperscript{*}}Bounded genus
                   & Ellis, Fan, Fellows~\cite{EllisFF04}                                               \\
        2005       & \llap{\textsuperscript{*}}$H$-minor-free
                   & Demaine, Fomin et al.~\cite{DemaineFHT05}
                   & Bidimensionality                                                                   \\
        2009       & \llap{\textsuperscript{*}}Degenerate graphs
                   & Alon, Gutner~\cite{AlonG09}                                                        \\
        2009       & \llap{\textsuperscript{*}}Nowhere dense
                   & Dawar, Kreutzer~\cite{DawarK09}
                   & Domination cores                                                                   \\
        2009       & \llap{\textsuperscript{*}}$K_h$-top.-minor-free
                   & Alon, Gutner~\cite{AlonGutner09}                                                   \\
        2009       & \llap{\textsuperscript{*}}$K_{i,j}$-free
                   & Philip, Raman, Sikdar~\cite{PhilipRS09}                                            \\
        2011       & \llap{\textsuperscript{*}}Claw-free
                   & Hermelin, Mnich et al.~\cite{HermelinMLW11}
                   & Unbounded VC-dimension                                                             \\
        2019       & \llap{\textsuperscript{*}}Semi-ladder-free
                   & Fabia\'nski, Pilipczuk et al.~\cite{FabianskiPST19}
                   & Progressive exploration                                                            \\
        2020       & \llap{\textsuperscript{\textdagger}}Bounded twin-width
                   & Bonnet, Kim et al.~\cite{BonnetKTW20}
                   & FO model checking                                                                  \\
        2021       & \llap{\textsuperscript{*}}Weakly $\gamma$-closed
                   & Lokshtanov, Surianarayanan~\cite{LokshtanovS21}
                   & Domination cores                                                                   \\
        2023       & Bounded shrubdepth
                   & Bergougnoux, Chekan et al.~\cite{BergougnouxCGKMOPL23}                             \\
        2024       & Monadically stable
                   & Dreier, Eleftheriadis et al.~\cite{DreierEMMPT24,DreierMS23}
                   & FO model checking                                                                  \\
        2025       & \llap{\textsuperscript{\textdagger}}Bounded merge-width
                   & Dreier, Toru\'nczyk~\cite{DreierT25}
                   & FO model checking                                                                  \\
        \midrule
        2026       & \makecell{double-ladder-free                                                       \\$\cap$\ co-matching-free}
                   & This paper
                   & Co-Matching Algorithm                                                              \\
        \bottomrule
    \end{tabular}
    \caption{Progress of fixed-parameter tractability of
        \textsc{Dominating Set}.
        Starred classes are subsumed by our class.
        For classes of bounded twin-width/merge-width (marked by \textdagger),
        a contraction/construction sequence is assumed to be given as part of
        the input.}
    \label{tab:history}
\end{table}

\paragraph*{Connection to Domination Cores}
A \emph{$k$-domination core} is a set of vertices such that every set of size
at most $k$ that dominates it dominates the whole graph.
This notion was first introduced in \cite{DawarK09},
and later used as a kernelization technique for classes of bounded expansion and
nowhere dense classes~\cite{DrangeDFKL16,EickmeyerGKKPRS17,FabianskiPST19} and
to derive an fpt algorithm for weakly $\gamma$-closed
graphs~\cite{LokshtanovS21}.
These techniques
typically start with the set of all vertices, and repeatedly shrink the size of
the core by identifying and removing redundant vertices. A compact presentation
of the notion can be found in \cite[Chapter~4]{NesetrilM12}. As already
identified in \cite{FabianskiPST19}, the progressive exploration framework has a
natural connection to domination cores, but constructs them from the opposite
direction---%
an initially empty set of witnesses grows iteratively until it forms a
domination core, at which point the algorithm terminates. Our approach is a
hybrid of the two: The core is constructed in a bottom-up fashion while being
shrunk at each step.
We observe that the co-matching index of the agreement graph
(the maximum order of a co-matching it contains)
bounds the maximum size of a minimal domination core.
That is, the classes on which the Co-Matching Algorithm achieves an fpt
running time are exactly those where the size of all minimal domination cores
are bounded.

\paragraph*{Connection to First-Order Model Checking}
Given a first-order (FO) formula $\phi$ and a graph $G$,
the \textsc{FO Model Checking} problem asks whether $G\models \phi$.
The known tractability
frontier of such model-checking algorithms is marked by monadically stable
classes~\cite{DreierMS23,DreierEMMPT24} and classes of bounded
merge-width~\cite{DreierT25}.%
\footnote{Assuming a construction sequence witnessing the merge-width is given with the input.}
Since \textsc{Dominating Set} is FO-definable, it can be solved using such
algorithms. However, these results are notorious for their non-elementary
dependence on the parameters and provide no combinatorial insights.
Moreover,
it is proven that FO
model checking is $\W{1}$-hard on all hereditary and monadically independent
graph classes~\cite{DreierMT24}.
While our approach has strong connections to
model theory concepts such as stability and the Helly property, by focusing only
on \textsc{Dominating Set}, our result is simple to understand, has a
single-exponential parameter dependence, and applies to graph classes beyond the
FO-tractability barrier by subsuming hereditary and monadically independent
classes such as biclique-free graphs and weakly $\gamma$-closed graphs
(see \cref{fig:classes}).

\paragraph*{Known Hardness Results}
{\sc Dominating Set} is $\W{1}$-hard on unit disk graphs, unit square
graphs~\cite{Marx06}, and on $K_{1,4}$-induced-free graphs~\cite{HermelinMLW11}.
Moreover, bounded VC-dimension does not seem to be a dividing line for
\textsc{Dominating Set}, as there exist classes whose VC-dimension and dual
VC-dimension is~2 where \textsc{Dominating Set} is $\W{1}$-hard and hereditary
classes of unbounded VC-dimension where \textsc{Dominating Set} is fpt, e.g.,
claw-free graphs~\cite{HermelinMLW11}.
Recently,
the authors in \cite{DreierMS26} identified that \textsc{Dominating Set} is
$\W{1}$-hard on co-matching-free classes and on ladder-free classes by bounding
the co-matching/ladder index of known $\W{1}$-hard graph classes.

\subsection{Techniques}\label{sec:techniques}
We outline here the techniques used to derive our results.
We will write $\dom^k(G)$ to denote the agreement graph of $G$,
and the $\dom^k$-co-matching index of $G$ to denote the co-matching index
of $\dom^k(G)$,
i.e.,
the maximum order of a co-matching in $\dom^k(G)$.
Formal definitions will be given in \cref{sec:background}.

\paragraph*{Co-Matching Algorithm}
The \emph{Co-Matching Algorithm} solves the \textsc{Dominating Set} problem
with a running time bounded by the co-matching index of the agreement graph.
The algorithm maintains a set $W$ of witnesses (vertices that must be
dominated) and works in rounds, each consisting of three steps:
\begin{enumerate}
    \item \textbf{Candidate Step:} Given the current witness set $W$,
          find a candidate solution $\vc = \Set{c_1,\dots,c_k}$ that dominates
          all of $W$.
          If no such candidate exists, return \textsc{no}.
    \item \textbf{Witness Step:} Given a candidate $\vc$,
          search for a witness $w$ not dominated by $\vc$
          (i.e., $w \notin \bigcup_{i=1}^k N[c_i]$).
          If no such witness exists, $\vc$ is a dominating set and is returned.
    \item \textbf{Compression Step:} Add the new witness $w$ to $W$, then
          check whether some witness in $W$ is now \emph{redundant} (see below).
          If so, remove it and repeat this step.
          otherwise, keep all witnesses and continue with the next round.
\end{enumerate}

These steps rely on three oracles:
\begin{itemize}
    \item \textbf{Candidate Oracle:} Given a set of witnesses $W$,
          returns a candidate $(c_1,\dots,c_k)$ with $\bigcup_{i=1}^k N[c_i]
              \allowbreak \supseteq W$,
          or concludes that none exists.
    \item \textbf{Witness Oracle:} Given a candidate $(c_1,\dots,c_k)$,
          returns a vertex outside $\bigcup_{i=1}^k N[c_i]$,
          or concludes that all
          vertices are dominated.
    \item \textbf{Compression Oracle:} Given a set of witnesses $W$,
          returns a witness $w \in W$ that is redundant, or concludes that none exist.
\end{itemize}

The central innovation is the Compression Step, which keeps the witness set
small by repeatedly removing redundant witnesses. A witness $w$ is
\emph{redundant} for a set $W$ if every $k$-tuple of vertices that
dominates $W \setminus \Set{w}$ also dominates $w$
(\cref{def:redundant-witness}). In other words, removing $w$ from $W$ does
not enlarge the set of candidates that dominate $W$. irredundant witness
sets correspond to co-matchings in the agreement graph: if no witness in $W$
is redundant, then $W$ together with suitable candidates forms a co-matching
of order $\Abs{W}$ (\cref{cor:irredundant}). Hence, if the
co-matching index of the agreement graph is at most $\mu$, the Compression
Oracle ensures that $\Abs{W}\leq\mu+1$
throughout the run. The Co-Matching Algorithm
extends the Semi-Ladder Algorithm of \cite{FabianskiPST19}, which is fpt only
for classes of bounded semi-ladder index in the agreement graph.
Keeping the witness set small by removing redundant witnesses makes the
oracle calls efficient over an unbounded number of rounds
and removes the need for stability.

\begin{restatable*}{theorem}{ThmCoMatchingAlgoRuntime}
    \label{thm:co-matching-algo-runtime}
    The Co-Matching Algorithm solves \textnormal{\textsc{Dominating Set}}
    on a graph $G$ in time
    \[
        O\Paren*{\Norm{G} + \lambda\cdot (\mu+k)\cdot\Abs{G}+ \lambda\mu k\cdot\nu^G(\mu+1)^k}\,,
    \]
    where $\lambda$ is the \emph{$\dom^k$-semi-ladder index} of $G$,
    $\mu$ is the \emph{$\dom^k$-co-matching index} of $G$,
    and $\nu^G(m)$ is the \emph{neighborhood complexity}.
\end{restatable*}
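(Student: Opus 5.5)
The proof splits into correctness, a bound on the number of rounds, and the cost of each oracle call, in that order.

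\emph{Correctness.} The answer \textsc{yes} is returned only together with a tuple $\vc$ for which the Witness Oracle certified that every vertex is dominated. The answer \textsc{no} is returned only when no $k$-tuple dominates $W\subseteq V(G)$, and then certainly no $k$-tuple dominates $V(G)$. Removing a redundant witness does not change the set of $k$-tuples dominating $W$, by \cref{def:redundant-witness}. Hence the invariant ``the set of candidates dominating $W$ equals the set of candidates dominating all witnesses ever added'' is preserved.

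\emph{Number of rounds.} I would bound the number of rounds by $\lambda$. Let $\vc_1,\vc_2,\dots$ be the candidates and $w_1,w_2,\dots$ the witnesses produced. By the invariant, $\vc_j$ dominates $w_i$ for all $i<j$, and $\vc_j$ does not dominate $w_j$ by construction. Thus $(\vc_j,w_j)_j$ is a semi-ladder in $\dom^k(G)$, and the number of rounds is at most $\lambda+1$. The delicate point is that witnesses may have been deleted from $W$. Redundancy gives exactly what is needed: the set of candidates dominating the current $W$ equals the set dominating $\{w_1,\dots,w_{j-1}\}$, so deletions cause no loss.

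\emph{Cost per round.} By \cref{cor:irredundant}, after compression $W$ is irredundant, so $\Abs{W}\le\mu$; during a round it is at most $\mu+1$. I would implement the oracles as follows.
\begin{itemize}
\item The \textbf{Witness Oracle} marks $\bigcup_i N[c_i]$ and scans the vertices, taking $O(k\Abs{G})$ time if adjacency is suitably indexed. I would rather mark the neighborhoods of the $k$ candidates against a precomputed structure to stay within $O(k\Abs{G})$.
\item The \textbf{Candidate Oracle} on $W$ partitions $V(G)$ into classes by their neighborhood trace $N[v]\cap W$. There are at most $\nu^G(\Abs{W})\le\nu^G(\mu+1)$ classes, and computing them takes $O(\mu\Abs{G})$ time (compare each vertex against the $\le\mu+1$ witnesses). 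It then enumerates $k$-tuples of classes, taking one representative from each. Only traces matter for dominating $W$, and checking a tuple costs $O(k\mu)$. This gives $O(\nu^G(\mu+1)^k\cdot k\mu)$.
\item The \textbf{Compression Oracle} does the same enumeration once. For each tuple of classes it computes which witnesses are undominated; $w$ is non-redundant iff some tuple misses exactly $\{w\}$. So one pass of cost $O(k\mu\,\nu^G(\mu+1)^k)$ identifies all non-redundant witnesses.
\end{itemize}

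\emph{Main obstacle.} The hard part is the Compression Step, because it may iterate several times per round. Since only one witness is added per round, the total number of removals over the run is at most the number of rounds. Amortized, the compression calls therefore contribute $O(\lambda\mu k\,\nu^G(\mu+1)^k)$ in total. Recomputing the trace classes after each removal costs $O((\mu+k)\Abs{G})$, which is absorbed by the middle term.

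The additive $\Norm{G}$ accounts for the preprocessing: building adjacency structures, either adjacency sets or a bitmap restricted to witnesses. Summing over at most $\lambda+1$ rounds gives the claimed bound. I expect the fiddliest part to be justifying that the trace-class computation and the witness scan fit in $O((\mu+k)\Abs{G})$ per round without an $\Norm{G}$ factor. I would handle this by maintaining, for each vertex, a bitmask of its adjacency to the current witnesses. This bitmask is updated in $O(\Abs{G})$ when a witness is added or removed, with $O(1)$ adjacency queries supported by a hash-set built in the preprocessing.
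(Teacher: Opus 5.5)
Your proposal is correct and follows the paper's argument. The shared ingredients are the semi-ladder bound on the number of rounds, the amortized count of at most $2\lambda$ Compression Oracle calls, the bound $\Abs{W}\le\mu+1$ via \cref{cor:irredundant}, and oracles that enumerate $k$-tuples of trace classes over an incrementally maintained witness-adjacency structure; your per-vertex bitmasks play the role of the paper's auxiliary graph $B(W)$. Your one-pass Compression Oracle, which marks every witness that is the unique undominated one for some tuple, is a small refinement that cleanly justifies the $O(k\mu\,\nu^G(\mu+1)^k)$ per-call cost, and the hash set is unnecessary because scanning $N_G[w]$ already updates the bitmasks in $O(\Abs{G})$ time.
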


For general graphs,
we have $\lambda\leq \Abs{G}$ and $\nu^G(m)\leq 2^m$,
and thus the algorithm is fpt if $\mu$ depends only on $k$.
A more refined running time that makes use of the characterization is given in
\cref{cor:runtime-class}.
The Co-Matching Algorithm matches the runtime of the Semi-Ladder Algorithm on any $dom^k$-semi-ladder-free class, especially on biclique-free and nowhere dense classes.
For those two classes, the runtime is then $\Norm{G} + k^{O(1)}\Abs{G} + k^{O(k)}$~(\cref{cor:runtime-nwd-biclique}).

\paragraph*{Characterization by Forbidden Semi-Induced Subgraphs}

To characterize the classes on which the Co-Matching Algorithm achieves
an fpt running time for \textsc{Dominating Set} (i.e., where $\mu$ is bounded),
we show that a long co-matching in the agreement graph $\dom^k(G)$
forces large forbidden substructures in $G$.
The idea is that,
for a candidate $\vc$ to dominate a witness $w$,
one of the $k$ vertices in $\vc$ has to dominate it.
We can thus color each unordered pair $\Set{i,j}$ of indices by a color $(p_1,p_2)$
such that $c^i_{p_1}$ (the $p_1$-th vertex in $\vc^i$) dominates $w_j$
and $c^j_{p_2}$ dominates $w_i$.
By Ramsey's theorem,
a sufficiently large co-matching in the agreement graph must contain
a large monochromatic subset.
The case where $p_1=p_2$ yields a co-matching,
whereas $p_1\neq p_2$ yields a double-ladder (\cref{lem:co-matching-Ramsey}).
We further show that we can construct these co-matchings and double-ladders as
semi-induced subgraphs by selecting a constant fraction of the indices which
guarantees that no vertex appears on both side of the structure
(\cref{lem:co-matching-semi-induced-delta_1,lem:double-ladder-semi-induced-delta_1}).
We thus get the following characterization.

\begin{restatable*}{theorem}{ThmCharacterizeCoMatchingIndex}
    \label{thm:characterize-co-matching-index}
    Let $\C$ be a graph class. For $k\geq 2$, $\dom^k(\C)$ has bounded
    co-matching index if and only if $\C$ is co-matching-free and
    double-ladder-free.
\end{restatable*}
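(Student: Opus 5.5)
The forward direction is a direct construction. For the converse, suppose $\dom^k(\C)$ contains arbitrarily long co-matchings. We take a long one, apply Ramsey's theorem, and project the result down to a long co-matching or double-ladder in some $G\in\C$; this is \cref{lem:co-matching-Ramsey} followed by the cleaning \cref{lem:co-matching-semi-induced-delta_1,lem:double-ladder-semi-induced-delta_1}.

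\emph{($\Leftarrow$ fails $\Rightarrow$ unbounded, i.e.\ forbidden structures give long co-matchings in $\dom^k$.)} Let $G\in\C$ contain a semi-induced co-matching $a_1,\dots,a_n$ / $b_1,\dots,b_n$, with the two sides disjoint and $a_ib_j\in E$ iff $i\neq j$. Take candidates $\vc^i=(a_i,\dots,a_i)$ and witnesses $b_i$. Since $a_i\neq b_j$ for all $i,j$, domination coincides with adjacency, so this is a co-matching of order $n$ in $\dom^k(G)$.

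If instead $G$ contains a semi-induced double-ladder $a_i,a'_i$ / $b_i$, we use candidates $\vc^i=(a_i,a'_i,a_i,\dots,a_i)$; here we need $k\geq 2$. In a double-ladder, $a_i$ is adjacent to $b_j$ for $i<j$, $a'_i$ is adjacent to $b_j$ for $i>j$, and neither is adjacent to $b_i$. So $\vc^i$ dominates $b_j$ exactly when $j\neq i$, again a co-matching of order $n$. Hence if $\C$ is not co-matching-free or not double-ladder-free, $\dom^k(\C)$ has unbounded co-matching index.

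\emph{($\Rightarrow$.)} Suppose $\dom^k(G)$ contains a co-matching $\vc^1,\dots,\vc^N$ / $w_1,\dots,w_N$, so $\vc^i$ dominates $w_j$ iff $i\neq j$. For $i<j$, colour the pair $\{i,j\}$ by $(p,q)\in[k]^2$, where $c^i_p$ dominates $w_j$ and $c^j_q$ dominates $w_i$ (fix the choice of $p,q$ arbitrarily if several work). Ramsey's theorem with $k^2$ colours gives a monochromatic index set $I$ of size $m$, with $m\to\infty$ as $N\to\infty$ for fixed $k$.

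\textbf{Case $p=q$.} Put $x_i=c^i_p$ and $y_i=w_i$. Then $x_i$ dominates $y_j$ for all distinct $i,j\in I$. Also $x_i$ does not dominate $y_i$, since otherwise $\vc^i$ would dominate $w_i$; in particular $x_i\neq y_i$.

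\textbf{Case $p\neq q$.} Put $x_i=c^i_p$ and $x'_i=c^i_q$. Then $x_i$ dominates $w_j$ for $i<j$, $x'_j$ dominates $w_i$ for $i<j$, and neither $x_i$ nor $x'_i$ dominates $w_i$. This is a double-ladder pattern up to closed-versus-open neighbourhoods.

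The remaining work, which I expect to be the main obstacle, is making these structures semi-induced: the two sides must be disjoint, so that domination becomes adjacency. Several observations help.
\begin{itemize}
\item The witnesses $w_i$ are pairwise distinct, since they have different neighbourhoods in the agreement graph.
\item The $x_i$ are pairwise distinct. If $x_i=x_j$ with $i\neq j$, then $x_i=x_j$ dominates $y_i$ in the case $p=q$, a contradiction. In the case $p\neq q$ the same holds for whichever index order makes the relevant domination true.
\item The $x'_i$ are pairwise distinct by the same argument.
\item Each top vertex $x_i$ or $x'_i$ can coincide with at most one $w_j$.
\end{itemize}
Build an auxiliary conflict graph on $I$ by joining $i$ and $j$ whenever a top vertex of index $i$ equals $w_j$. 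Each index has at most two top vertices, each equal to at most one witness, and each witness can be hit from a bounded number of top vertices. So this graph has maximum degree at most some constant (about $4$) and can be properly coloured with $O(1)$ colours. A colour class is an independent set of size $\Omega(m)$.

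Restricted to that class, top and bottom vertices are disjoint, so domination equals adjacency. The prescribed non-edges hold because non-domination implies non-adjacency, and the free pairs of the double-ladder may be arbitrary. This yields a semi-induced co-matching or double-ladder of order $\Omega(m)$ in $G$. Hence if $\C$ is co-matching-free and double-ladder-free, the co-matching index of $\dom^k(\C)$ is bounded by a Ramsey-type function of $k$ and the two forbidden orders.
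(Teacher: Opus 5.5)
\textbf{Gap in the case $p\neq q$.} A double-ladder is defined by equivalences: $(a_i,b_j)\in E\iff i>j$ and $(b_i,c_j)\in E\iff i>j$. So it has no free pairs; you may be thinking of the semi-ladder. In your notation, $(x'_i,w_i,x_i)_{i\in I}$ is a double-ladder only if $x'_i$ dominates $w_j$ \emph{if and only if} $i>j$, and $x_i$ dominates $w_j$ \emph{if and only if} $i<j$. Your colouring delivers only the ``if'' halves plus the diagonal non-dominations. For $i<j$ in $I$, nothing stops $c^j_p$ from dominating $w_i$, or $c^i_q$ from dominating $w_j$.

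Here is a concrete failure. Take $k=2$ and suppose both $(c^i_1,w_i)_i$ and $(c^i_2,w_i)_i$ are co-matchings in $\dom(G)$. Then every pair admits every colour in $[2]^2$. If your arbitrary choice is always $(1,2)$, Ramsey returns $p\neq q$, yet each $x_i$ dominates every $w_j$ with $j\neq i$, so you have no double-ladder. In mixed situations these uncontrolled pairs need not even be homogeneous. Hence the sentence ``this yields a semi-induced co-matching or double-ladder'' does not follow from your monochromatic set.

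\textbf{The fix.} The missing idea is the priority rule in the paper's colouring (\cref{lem:co-matching-Ramsey}). For $i<j$, colour $\{i,j\}$ by $(r,r)$ whenever some $r\in[k]$ has $c^i_r$ dominating $w_j$ \emph{and} $c^j_r$ dominating $w_i$. Only if no such $r$ exists, pick $(p,q)$ with $c^i_p$ dominating $w_j$ and $c^j_q$ dominating $w_i$. The failure of the diagonal case for $r=p$ and for $r=q$ then gives exactly the missing non-dominations: $c^j_p$ does not dominate $w_i$, and $c^i_q$ does not dominate $w_j$. It also forces $p\neq q$ automatically.

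An equivalent fix is to add these two domination bits to the colour and use $4k^2$ colours. In a homogeneous set, if $c^j_p$ dominates $w_i$ for all $i<j$, then $(c^i_p,w_i)_i$ is already a co-matching. Symmetrically for $q$. Otherwise you obtain the double-ladder.

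\textbf{The rest of your argument is fine.} The reverse direction is correct; your $(a_i,a'_i)$ are just the paper's $(c_i,a_i)$ relabelled. Your cleaning step is a valid alternative to \cref{lem:co-matching-semi-induced-delta_1,lem:double-ladder-semi-induced-delta_1}. The conflict graph has maximum degree at most $4$, since $x_i\neq w_i$ and the $x_i$, the $x'_i$ and the $w_i$ are each pairwise distinct. This gives an independent set of size at least $m/5$. Your route is more uniform than the paper's case analysis and greedy matching argument, at the cost of slightly worse constants than $1/2$ and $1/3$.
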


We remark that for $k=1$,
the Ramsey argument has only one color and is therefore irrelevant,
and consequently,
we only need to forbid semi-induced co-matchings.

\Cref{thm:characterize-co-matching-index},
together with a bound on the VC-dimension of co-matching-free classes
(\cref{cor:delta_1-co-matching-VC-dimension}),
gives a more concrete bound on the running time of the Co-Matching Algorithm.

\begin{restatable*}{corollary}{CorRuntimeClass}
    \label{cor:runtime-class}
    Let $G$ be a graph of co-matching index and double-ladder index
    at most $\ell$.
    Then,
    the Co-Matching Algorithm solves the \textnormal{\textsc{Dominating Set}} problem
    on $G$ in time
    \[
        O\Paren*{k^{O(\ell k^2)}\Abs{G}^2 + k^{O(\ell^2 k^3)}\Abs{G}}\,.
    \]
\end{restatable*}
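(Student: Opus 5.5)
The plan is to plug explicit bounds on $\lambda$, $\mu$ and $\nu^G$ into \cref{thm:co-matching-algo-runtime}. All three should follow from the hypotheses that the co-matching index and double-ladder index of $G$ are at most $\ell$.

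\textbf{Bounding $\mu$.} I would use the quantitative form of \cref{thm:characterize-co-matching-index}, namely the Ramsey argument of \cref{lem:co-matching-Ramsey} together with \cref{lem:co-matching-semi-induced-delta_1,lem:double-ladder-semi-induced-delta_1}.
\begin{itemize}
\item A co-matching of order $m$ in $\dom^k(G)$ colours each pair of indices with one of $k^2$ colours $(p_1,p_2)$.
\item A monochromatic subset of size $t$ yields a co-matching or double-ladder in $G$ of order about $t/c$ for a constant $c$, once we pass to a constant fraction so that it is semi-induced.
\item Since neither structure can have order above $\ell$, we need $t=O(\ell)$.
\item Multicolour Ramsey for $k^2$ colours and target $O(\ell)$ gives $\mu\le k^{O(\ell k^2)}$, e.g.\ via the bound $R\le (k^2)^{k^2\cdot O(\ell)}$.
\end{itemize}
The exponent $\ell k^2$ in the first term of the claimed bound suggests exactly this count.

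\textbf{Bounding $\nu^G$.} By \cref{cor:delta_1-co-matching-VC-dimension}, the VC-dimension of $G$ is $O(\ell)$ (or perhaps $O(\log \ell)$). Sauer--Shelah then gives $\nu^G(m)\le m^{O(\ell)}$.

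\textbf{Assembling the bound.} For $\lambda$, use the trivial bound $\lambda\le\Abs{G}$.
\begin{itemize}
\item The term $\Norm{G}\le\Abs{G}^2$ is absorbed.
\item The term $\lambda(\mu+k)\Abs{G}\le k^{O(\ell k^2)}\Abs{G}^2$.
\item The last term is $\lambda\mu k\,\nu^G(\mu+1)^k\le \Abs{G}\cdot k^{O(\ell k^2)}\cdot (\mu+1)^{O(\ell k)}$.
\end{itemize}
Substituting $\mu=k^{O(\ell k^2)}$ gives $(\mu+1)^{O(\ell k)} = k^{O(\ell k^2)\cdot O(\ell k)}=k^{O(\ell^2k^3)}$, which is the second term.

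\textbf{Main obstacle.} The delicate step is getting the Ramsey bound in the form $k^{O(\ell k^2)}$ rather than a tower or $2^{O(\ell k^2)}$-type expression. I would first use the standard multicolour estimate $R_r(t)\le r^{rt}$ with $r=k^2$ and $t=O(\ell)$. This gives $k^{2k^2\cdot O(\ell)}=k^{O(\ell k^2)}$, and the lemmas only lose constant factors in $t$.

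A second check is that the VC-dimension bound is linear, or smaller, in $\ell$. Otherwise the exponent of the second term changes. If \cref{cor:delta_1-co-matching-VC-dimension} only gives $O(\ell)$, the computation above still yields $\ell^2k^3$, so that suffices.
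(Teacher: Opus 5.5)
Your proposal is correct and follows the paper's proof: take $\lambda\le\Abs{G}$, obtain $\mu<R^{k^2}(3\ell+1)=k^{O(\ell k^2)}$ via \cref{lem:co-matching-Ramsey,lem:co-matching-semi-induced-delta_1,lem:double-ladder-semi-induced-delta_1}, bound $\nu^G(m)=m^{O(\ell)}$ via \cref{cor:delta_1-co-matching-VC-dimension} and Sauer--Shelah, and substitute into \cref{thm:co-matching-algo-runtime}. Your hedge about $O(\log\ell)$ is unnecessary: \cref{lem:co-matching-semi-induced-delta_1} gives $\dom$-co-matching index at most $2\ell$, so \cref{cor:delta_1-co-matching-VC-dimension} makes the VC-dimension at most $2\ell$, which is exactly the linear bound your computation needs.
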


The purely
structural characterization also allows us to make connections to
other well-studied
graph classes easily. The relation is visualized in \cref{fig:classes},
where the highlighted green box (bd.\@ $\dom^k$-co-matching index)
corresponds to the classes
where the Co-Matching Algorithm solves \textsc{Dominating Set} in fpt time. In
particular, we show that it unifies and extends both semi-ladder-free graph
classes (i.e., the graph classes on which the Semi-Ladder Algorithm is fpt) and
\emph{weakly $\gamma$-closed} graph classes~\cite{LokshtanovS21}.
The connection to weakly $\gamma$-closed graphs can already be established
through observing that the authors in \cite{LokshtanovS21}
essentially bound the co-matching index in the agreement graph
(which corresponds to the maximum size of a minimal $k$-domination cores)
in order to show their fpt results (\cref{sec:threshold-set}).
Using our characterization,
we reprove and improve their bound
from $(\gamma-1)k\cdot R^{3^{15}k^2}(3\gamma)$ to $R^{k^2}(3\gamma)$.
Here,
$R^c(n)$ denotes the \emph{$c$-color Ramsey number}; Ramsey's theorem
guarantees $R^c(n)\leq c^{cn-1}$ for $c\geq 2$.

\begin{restatable*}{theorem}{ThmWeaklyClosedDomkCoMathing}\label{thm:weakly-closed-dom-k-co-matching}
    Every weakly $\gamma$-closed graph $G$ has
    $\dom^k$-co-matching index smaller than $R^{k^2}(3\gamma)$.
\end{restatable*}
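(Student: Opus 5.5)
The plan is to run the Ramsey coloring from the characterization argument (\cref{lem:co-matching-Ramsey}) with explicit numbers, and then show that neither monochromatic outcome of size $3\gamma$ can occur in a weakly $\gamma$-closed graph. Recall that $G$ is weakly $\gamma$-closed if every induced subgraph $H$ contains a vertex $v$ with fewer than $\gamma$ non-neighbours $u$ satisfying $\Abs{N_H(u)\cap N_H(v)}\ge\gamma$. I will call such $u$ \emph{bad} for $v$.

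\textbf{Setup and Ramsey step.} Suppose $\dom^k(G)$ contains a co-matching of order $m\ge R^{k^2}(3\gamma)$, given by candidates $\vc^1,\dots,\vc^m$ and witnesses $w_1,\dots,w_m$, where $\vc^i$ dominates $w_j$ if and only if $i\ne j$. For $i<j$, choose $p_1$ with $c^i_{p_1}$ dominating $w_j$ and $p_2$ with $c^j_{p_2}$ dominating $w_i$, and colour the pair $\Set{i,j}$ by $(p_1,p_2)\in[k]^2$. This uses $k^2$ colours. Ramsey's theorem gives a monochromatic index set $I$ with $\Abs{I}=3\gamma$ and colour $(p_1,p_2)$. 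Put $a_i=c^i_{p_1}$ and $b_i=c^i_{p_2}$.

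\textbf{The two configurations.}
\begin{itemize}
\item If $p_1=p_2$, then $a_i$ dominates $w_j$ exactly when $i\ne j$, for $i,j\in I$. This is a co-matching in closed neighbourhoods.
\item If $p_1\ne p_2$, then for $i<j$ we have that $a_i$ dominates $w_j$ and $b_j$ dominates $w_i$, while neither $a_i$ nor $b_i$ dominates $w_i$. This is a double-ladder in closed neighbourhoods.
\end{itemize}
In both cases $a_i\ne w_i$ and $a_i\notin N(w_i)$, and the $a_i$ are pairwise distinct. However, some $a_i$ may coincide with some $w_j$ for $j\ne i$.

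\textbf{Contradiction with weak closure.} Let $H=G[\Set{a_i,b_i,w_i : i\in I}]$. I aim to show that every vertex of $H$ has at least $\gamma$ bad non-neighbours. I expect the factor $3$ in $3\gamma$ to pay for two losses, each costing up to a third of $I$: collisions between the sides, and the fact that in the double-ladder only ``half'' the pairs are witnessed by $a$ and the other half by $b$.

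The argument starts from a dichotomy on the witnesses.
\begin{itemize}
\item \emph{Witnesses mostly non-adjacent.} Take $w_i,w_j$ non-adjacent. In the co-matching case they share the neighbours $a_l$ for $l\notin\Set{i,j}$, which are at least $\gamma$ many after discarding collisions. In the double-ladder case they share the $a_l$ with $l<\min(i,j)$ together with the $b_l$ with $l>\max(i,j)$. So $w_i$ has at least $\gamma$ bad non-neighbours among the other witnesses.
\item \emph{Witnesses largely adjacent.} Then the non-edge $a_iw_i$ has as common neighbours all the $w_j$ adjacent to $w_i$ that are also dominated by $a_i$.
\end{itemize}
Vertices $a_i$ and $b_i$ have to be handled symmetrically: every $w_l$ dominated by $a_i$ lies in its neighbourhood, which gives $a_i$ many potential common neighbours with other non-adjacent vertices.

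I expect this verification to be the main obstacle: making it uniform over all vertices of $H$, including the degenerate cases where the $w$'s are partly adjacent and where vertices coincide. If it does not go through directly, my fallback is to pass to a sub-induced subgraph. I would first apply Ramsey once more inside $I$ to make $\Set{w_i}$ homogeneous, either a clique or independent; then weak closure applied to that sub-induced subgraph gives the contradiction in each homogeneous case. The cost is absorbed by adjusting the constant in the bound.
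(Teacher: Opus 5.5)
\textbf{Overall.} Your Ramsey step matches the paper's. The part that actually uses weak closure, namely excluding a co-matching or a double-ladder of order $3\gamma$ in $\dom(G)$, is not carried out. Your plan for it also rests on a misreading of the definition.

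\textbf{The definition.} Weak $\gamma$-closure asks that every induced subgraph $H$ contain a vertex $v$ with \emph{no} non-neighbour $u$ satisfying $\Abs{N_H(u)\cap N_H(v)}\ge\gamma$. It does not say fewer than $\gamma$ such $u$. Under your reading the theorem is false for every $\gamma\ge 2$. Take $K_{2n}$ minus a perfect matching $\Set{a_iw_i\given i\in[n]}$. Every vertex of every induced subgraph has at most one non-neighbour, so your condition holds. Yet $(a_i,w_i)_{i\in[n]}$ is a co-matching of order $n$ in $\dom(G)$, and, using constant tuples, also in $\dom^k(G)$. So your target ``every vertex of $H$ has at least $\gamma$ bad non-neighbours'' cannot be met in general. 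For instance, in the co-matching case with the $a_i$ forming a clique, the only non-neighbour of $a_i$ in $H$ is $w_i$. Your own second bullet in fact only produces one bad non-neighbour ($a_i$ for $w_i$, and vice versa).

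\textbf{The missing idea.} With the correct definition, one bad non-neighbour for each vertex of a suitable induced subgraph suffices. The case split should be done \emph{per vertex, relative to a block}, not globally on whether the witnesses are mostly adjacent.

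For a co-matching $(c^i,w^i)$ of order $\gamma+2$: if $c^i$ is adjacent to all other $c^j$, the non-edge $c^iw^i$ has the $\gamma+1$ common neighbours $C-c^i$. Otherwise $c^i$ misses some $c^j$, and these two share the $\gamma$ common neighbours $W-w^i-w^j$. Witnesses are handled symmetrically.

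For a double-ladder of order $3\gamma$, cut the indices into three blocks of size $\gamma$ and pass to $G[A^{\gamma+1,3\gamma}\cup B\cup C^{1,2\gamma}]$, where $A^{i,j}=\Set{a^t\given i\le t\le j}$. For each vertex, either use its diagonal non-edge with a whole block as common neighbours, or use a non-neighbour inside one block with another block as common neighbours. That is where the factor $3$ comes from. Collisions cost nothing: common neighbours of a non-adjacent pair are automatically distinct from the pair, and a vertex occurring in several roles need only be handled in one.

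\textbf{Your fallback.} A further Ramsey step on the witnesses, ``absorbed by adjusting the constant'', would at best give a bound weaker than $R^{k^2}(3\gamma)$. It therefore would not prove the statement as given.

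\textbf{A minor point.} Your colouring does not prefer a colour $(p,p)$ when one is available. As a result, the case $p_1\neq p_2$ only yields two interleaved semi-ladders, not a double-ladder. This happens to be harmless here, because the argument above uses only the positive edges and the diagonal non-edges.
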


Together with \cref{cor:runtime-class},
we also get a running time that matches the algorithm proposed in
\cite{LokshtanovS21} for \textsc{Dominating Set} on weakly
$\gamma$-closed graphs.

\begin{restatable*}{corollary}{CorWeaklyClosedRuntime}
    The Co-Matching Algorithm solves the \textnormal{\textsc{Dominating Set}} problem
    on weakly $\gamma$-closed graphs in time
    \[
        O\Paren*{k^{O(\gamma k^2)}\Abs{G}^2 + k^{O(\gamma^2 k^3)}\Abs{G}}\,.
    \]
\end{restatable*}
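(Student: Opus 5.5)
The plan is to obtain this corollary by plugging concrete bounds for weakly $\gamma$-closed graphs into \cref{cor:runtime-class}. That corollary needs a graph whose co-matching index and double-ladder index are both at most some $\ell$, and it then gives running time $O(k^{O(\ell k^2)}\Abs{G}^2 + k^{O(\ell^2 k^3)}\Abs{G})$. So the whole task is to show that in a weakly $\gamma$-closed graph $G$, both indices are $O(\gamma)$. Substituting $\ell = O(\gamma)$ then yields exactly the claimed bound.

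\textbf{Bounding the co-matching index.} Suppose $G$ contains a semi-induced co-matching of order $m$, given by vertices $a_1,\dots,a_m$ and $b_1,\dots,b_m$. Here $a_i b_j$ is an edge for $i\neq j$ and $a_i b_i$ is a non-edge. Consider any induced subgraph of $G$ containing these vertices.
\begin{itemize}
\item Each nonadjacent pair $a_i,b_i$ has at least $m-2$ common neighbours among the $a_j,b_j$.
\item Weak $\gamma$-closedness gives an ordering in which some vertex $v$ has fewer than $\gamma$ common neighbours with each vertex nonadjacent to it.
\item Applying this to the sub-structure repeatedly, or just choosing the first vertex of the co-matching in the elimination order, shows that $m-2 < \gamma$.
\end{itemize}
Hence the co-matching index is at most $\gamma+1$.

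\textbf{Bounding the double-ladder index.} The same argument applies to a semi-induced double-ladder of order $m$. Take the first of its vertices in a closure ordering, say $a_i$. Its partner, $b_i$ (or the designated non-neighbour), shares with it all ladder neighbours on one side of index $i$. This gives roughly $m/2$ or more common neighbours once we choose $i$ in the middle. More carefully, we first restrict to a sub-double-ladder so that the first eliminated vertex sits centrally, losing a factor of about $3$. This matches the constant $3\gamma$ that appears in \cref{thm:weakly-closed-dom-k-co-matching}. The conclusion is a double-ladder index of $O(\gamma)$.

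\textbf{Main obstacle.} The main difficulty is handling the double-ladder case cleanly, since common neighbours are guaranteed only on one side. An alternative route avoids the structural bounds altogether: use \cref{thm:weakly-closed-dom-k-co-matching} directly, setting $\mu < R^{k^2}(3\gamma) \le k^{O(\gamma k^2)}$, and insert it into \cref{thm:co-matching-algo-runtime} together with $\lambda \le \Abs{G}$ and a VC-dimension-based neighborhood-complexity bound. I expect the first route, through \cref{cor:runtime-class}, to reproduce the stated exponents exactly. I would therefore follow it, checking only that the constants in the co-matching and double-ladder bounds are linear in $\gamma$.
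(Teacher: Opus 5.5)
Your top-level reduction matches the paper's: the corollary is \cref{cor:runtime-class} with $\ell=O(\gamma)$. The two index bounds you need are exactly \cref{lem:weakly-closed-co-matching} ($<\gamma+2$) and \cref{lem:weakly-closed-double-ladder} ($<3\gamma$). These bound the $\dom$-indices and hence the semi-induced ones, so citing them finishes the proof. The gap is in your own re-derivation of these bounds, which rests on false common-neighbour counts. In a semi-induced co-matching, the edges inside $\Set{a_1,\dots,a_m}$ and inside $\Set{b_1,\dots,b_m}$ are arbitrary, so $a_i$ and $b_i$ need not share any neighbour. If both sides are independent (the crown graph), then $N(a_i)=\Set{b_j\given j\neq i}$ and $N(b_i)=\Set{a_j\given j\neq i}$ are disjoint. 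When weak closure hands you $a_i$, you need a dichotomy. If $a_i$ is adjacent to every other $a_j$, these $m-1$ vertices are common neighbours of $a_i$ and $b_i$. Otherwise $a_i$ misses some $a_j$, and the $b_t$ with $t\notin\Set{i,j}$ are $m-2$ common neighbours of the non-adjacent pair $a_i,a_j$. Only then do you get $m\le\gamma+1$.

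The double-ladder sketch has the same defect: if $A\cup C$ and $B$ are independent, $a_i$ and $b_i$ lie on opposite sides of a bipartite graph and share no neighbour. It also has a second defect. You do not get to choose where the vertex supplied by weak closure sits, and restricting to a sub-double-ladder cannot force it to be central. In every sub-double-ladder, the $a$ of smallest index has no neighbour in $B$, and symmetrically for the $c$ of largest index. So if $A\cup C$ is independent, that $a$ is isolated, satisfies the closure condition vacuously, and may always be the vertex you are given.

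The paper instead uses an induced subgraph that is not a double-ladder, namely $G[A^{\gamma+1,3\gamma}\cup B\cup C^{1,2\gamma}]$, which discards the low-index $a$'s and high-index $c$'s. It then checks block by block, with the same dichotomy, that \emph{every} vertex has a non-neighbour with at least $\gamma$ common neighbours. For example, take $a^i$ with $\gamma<i\le2\gamma$. Either $A^{2\gamma+1,3\gamma}\subseteq N(a^i)$, and this block is common to $a^i$ and $b^i$; or some $a^j$ in it is a non-neighbour, and $B^{1,\gamma}$ is common to $a^i$ and $a^j$.

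Your alternative route is sound: $\mu$ from \cref{thm:weakly-closed-dom-k-co-matching}, $\lambda\le\Abs{G}$, and $\nu^G$ from \cref{cor:delta_1-co-matching-VC-dimension} applied via \cref{lem:weakly-closed-co-matching}. It is just the proof of \cref{cor:runtime-class} unrolled.
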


On the negative side,
we show two non-inclusion results that suggest the new tractability frontier
is incomparable with other tractability parameters for dense classes.
On the one hand,
we show that classes of bounded shrubdepth (and thus,
monadically stable classes and classes of bounded merge-width)
may contain arbitrarily large co-matchings.
Hence, the Co-Matching Algorithm is not fpt on these classes.

\begin{restatable*}{theorem}{ThmShrubdepthCoMatching}
    \label{thm:shrubdepth-co-matching}
    The class of co-matchings has bounded shrubdepth.
\end{restatable*}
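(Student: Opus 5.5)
The plan is to exhibit, for every $n$, an explicit tree-model of fixed depth and with a fixed number of labels for the co-matching of order $n$. I will use the standard definition of shrubdepth via tree-models: a class has shrubdepth at most $d$ if there is a finite label set $\Lambda$ such that every graph $G$ in the class has a tree-model. Such a tree-model consists of a rooted tree $T$ of depth $d$ whose leaves are exactly $V(G)$, a labelling $\lambda\colon V(G)\to\Lambda$, and a symmetric adjacency rule $S\subseteq \Lambda\times\Lambda\times\{0,\dots,d\}$. Two vertices $u\neq v$ are adjacent if and only if $(\lambda(u),\lambda(v),\operatorname{depth}(\operatorname{lca}(u,v)))\in S$. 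It therefore suffices to show that co-matchings admit tree-models of depth $2$ with $2$ labels.

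\textbf{Step 1 (setting).} Let the co-matching of order $n$ have vertices $a_1,\dots,a_n$ and $b_1,\dots,b_n$. Here $a_i b_j$ is an edge if and only if $i\neq j$, and there are no edges inside $\{a_i\}$ or inside $\{b_j\}$.

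\textbf{Step 2 (tree).} Take a root $r$ with $n$ children $x_1,\dots,x_n$, and give each $x_i$ exactly two leaf children $a_i$ and $b_i$. This tree has depth $2$.

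\textbf{Step 3 (labels and rule).} Label every $a_i$ with $A$ and every $b_i$ with $B$, and set $S=\{(A,B,0),(B,A,0)\}$. The least common ancestor of $a_i$ and $b_j$ is $x_i$, at depth $1$, when $i=j$; it is $r$, at depth $0$, when $i\neq j$. So $a_i$ and $b_j$ are declared adjacent exactly when $i\neq j$. Pairs with equal labels are never adjacent. Hence the model reproduces the co-matching exactly.

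\textbf{Step 4 (conclusion).} The depth $2$ and the label set $\{A,B\}$ do not depend on $n$, so the class of all co-matchings has shrubdepth at most $2$.

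The only real obstacle is making sure the construction matches the paper's precise definition of a co-matching. If that definition allows each side to be a clique rather than independent (or fixes it as one of the two), the same tree still works after adding the appropriate triples $(A,A,0),(A,A,1)$ and/or $(B,B,0),(B,B,1)$ to $S$. A bounded number of fixed variants can also be merged into one class with finitely many labels.

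As an alternative route, one could note that the co-matching is obtained from a perfect matching, which has bounded shrubdepth, by a single flip between the two sides. Bounded shrubdepth is preserved under flips, i.e.\ under first-order transductions with a bounded number of colours. The direct tree-model above is simpler, though, so that is the argument I would write out.
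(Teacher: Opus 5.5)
Your proof is correct, but it takes a different route from the paper. The paper never builds a tree-model. It works with SC-depth, which it uses as functionally equivalent to shrubdepth. It observes that $G\oplus(U\cup V)\oplus U\oplus V$ is a perfect matching, which has SC-depth $1$. Three flips on top of that give SC-depth at most $4$ for the co-matching.

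Your depth-$2$, two-label tree-model checks the definition of shrubdepth directly. It uses one internal node $x_i$ per pair $\{a_i,b_i\}$, and declares an edge exactly when the labels differ and the least common ancestor is the root. All leaves sit at depth exactly $2$, so your lca-depth formulation agrees with the usual distance-based signature. This buys you two things: you avoid appealing to the SC-depth/shrubdepth equivalence, and you get the explicit, sharper bound of shrubdepth $2$. The paper's argument is equally short, but it fits the flip-based viewpoint the paper uses elsewhere.

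Your "alternative route" is in fact the paper's argument in disguise. The three flips $\oplus(U\cup V)$, $\oplus U$, $\oplus V$ compose to flipping exactly the pairs between $U$ and $V$, which is the single cross-side flip you mention.

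Your hedge about cliques on the sides is unnecessary. The paper explicitly treats a co-matching of order $n$ as a bipartite graph on $2n$ vertices, so both sides are independent sets.
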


On the other hand,
we show that $c$-closed graphs,
which generalize classes of bounded degree and have bounded
$\dom^k$-co-matching index,
have unbounded radius-$1$ merge-width.
This indicates that the tractability of such classes falls outside of the
merge-width regime.
\begin{restatable*}{theorem}{ThmWeaklyClosedMergeWidth}\label{thm:weakly-closed-merge-width}
    The class of all $c$-closed graphs has unbounded radius-$1$ merge-width. In
    particular, there exists a graph class that is biclique-free and $c$-closed
    that has unbounded degeneracy.
\end{restatable*}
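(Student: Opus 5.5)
The plan is to exhibit a single explicit family of graphs that is simultaneously $2$-closed, $K_{2,2}$-free, and of unbounded degeneracy. I would then transfer unbounded degeneracy to unbounded radius-$1$ merge-width using the known relationship between merge-width and sparsity for weakly sparse classes. This proves both claims at once, since the second sentence of the statement is just the witness family used for the first.

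\textbf{The family.} For a prime power $q$, let $G_q$ be the point--line incidence graph of the projective plane of order $q$. This graph is bipartite and $(q+1)$-regular. Two distinct points lie on exactly one common line and two distinct lines meet in exactly one point, so any two vertices have at most one common neighbour. Hence $G_q$ contains no $C_4$, and in particular no $K_{2,2}$, so the family $\Set{G_q}$ is biclique-free. Any two non-adjacent vertices have at most $1<2$ common neighbours, so every $G_q$ is $2$-closed, and a fortiori $c$-closed for every $c\ge 2$. Finally, $G_q$ is $(q+1)$-regular, so its minimum degree, and hence its degeneracy, is $q+1\to\infty$.

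\textbf{From degeneracy to merge-width.} It remains to show that $\Set{G_q}$ has unbounded radius-$1$ merge-width. I would invoke the result of Dreier and Toru\'nczyk~\cite{DreierT25} that weakly sparse classes of bounded merge-width have bounded expansion. More precisely, I would use the radius-$1$ part of their argument: for $K_{t,t}$-free graphs, radius-$1$ merge-width at most $m$ bounds the degeneracy by a function of $t$ and $m$. Applied with $t=2$, unbounded degeneracy then forces unbounded radius-$1$ merge-width.

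\textbf{Main obstacle.} The main obstacle is making this last implication precise with only radius $1$. If the cited statement is only formulated for all radii simultaneously, I would prove the radius-$1$ degeneracy bound directly.
\begin{itemize}
\item Consider a construction sequence of radius-$1$ width $m$ for a $K_{t,t}$-free graph $H$, and look at the step at which the number of resolved edges first exceeds $d\Abs{H}$ for a suitable $d$.
\item At that step, use the width bound to show that each part has resolved-edge neighbourhoods touching at most $m$ parts.
\item Observe that a large resolved positive edge set between two parts is a complete bipartite graph, since edges between two parts are resolved together. Bicliqueness-freeness therefore forces one side of each such pair to be small.
\item Double counting then yields a subgraph of average degree $O_{t,m}(1)$ that contains all edges, which bounds the degeneracy.
\end{itemize}
I expect the bookkeeping of which edges are resolved at which step to be the delicate part.
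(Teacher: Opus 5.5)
Your proof is correct. Its last step is the paper's own: both you and the paper conclude via \cite[Corollary~7.7]{DreierT25} (the paper's \cref{thm:mw_sparse}). That corollary is already stated for radius-$1$ merge-width together with an excluded $K_{t,t}$, so your fallback sketch and the resolved-edge bookkeeping you were worried about are not needed.

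Where you differ is the witness family. The paper works on the single vertex set $\Z_p^2$ and makes $u$ adjacent to the points of the non-vertical line $L_u$. To make adjacency symmetric it must also add the dual lines $T_u$. Because $L_u=T_v$ would give two non-adjacent vertices $p$ common neighbours, it then adds correction edges to $f(u),g(u)$. After a somewhat delicate count, this yields an $8$-closed, $K_{3,13}$-free graph of minimum degree at least $p-1$.

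Your incidence graph of the projective plane of order $q$ is essentially the unfolded, bipartite form of the same geometry. Putting points and lines on different sides removes the symmetrization problem entirely. Then $2$-closedness, $C_4$-freeness (hence $K_{2,2}$-freeness) and $(q+1)$-regularity all follow in one line from ``two points span one line, two lines meet in one point''. This gives strictly better constants and a much shorter verification. As far as this statement is concerned, the paper's non-bipartite folding buys nothing extra.

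If you want to avoid prime powers, taking $q$ prime works just as well. So does the incidence graph between $\Z_p^2$ and its $p^2$ non-vertical lines, which is $p$-regular and has at most one common neighbour for any pair.
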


\subsection{Organization of the Paper}
The rest of the paper is organized as follows.
\Cref{sec:background} establishes notation, formally defines the
agreement graph, and introduces the four obstructions---%
semi-ladders, ladders,
co-matchings, and double-ladders.
\Cref{sec:co-matching-algo} describes the Co-Matching Algorithm,
proves its correctness,
instantiates the three oracles for the
\textsc{Dominating Set} problem,
and derives the running time.
\Cref{sec:graph-class} characterizes the graph classes of bounded
$\dom^k$-co-matching index by forbidden semi-induced subgraphs, proves the
improved bound for weakly $\gamma$-closed graphs, and relates the new classes
to VC-dimension, merge-width, and shrubdepth.
\Cref{sec:beyond-domset} extends the framework to
\textsc{Distance-$r$ Dominating Set} and other domination-type problems,
explains why the Compression Step does
not help for \textsc{Independent Set}, adapts the Semi-Ladder Algorithm to
\textsc{Partial Dominating Set}, and generalizes both algorithms to
\textsc{Set Cover}.
\Cref{sec:conclusion} concludes with open questions.

\section{Preliminaries}
\label{sec:background}
We use the following notation. The set of all natural numbers is $\N$. For
$n\in\N$, we denote $\Set{1,\dots,n}$ by $[n]$. A graph $G$ has vertex set $V(G)$
and edge set $E(G)$. The order $\Abs{G}$ of a graph is $\Abs{V(G)}$ and the size
$\Norm{G}$ of a graph is $\Abs{E(G)}+\Abs{V(G)}$. A tuple of variables
$x^1,\dots,x^k$ is often abbreviated by $\vx$.

\paragraph*{Agreement Graph}
The \emph{agreement graph} is an intermediary structure which is used to analyze
the behavior of algorithms in the progressive exploration framework.
Given an FO formula of the form $\exists\vx\forall\vy\phi(\vx;\vy)$
and a graph $G$,
the agreement graph $\phi(G)$ intuitively captures the relationship between
candidate solutions to $\vx$ and their effects on $\vy$.
For the purpose of this paper, we focus on
the \textsc{Dominating Set} problem, which is defined by the formula
\[
    \dom^k(\vx; y)  \coloneq \bigvee_{i=1}^k \dom(x^i ; y) \text{, where } \dom(x ; y)\coloneq \dist(x,y) \leq 1\,.
\]

\begin{definition}[Agreement graph]\label{def:agreement-graph} Given $k\in\N$,
    the agreement graph $\dom^k(G)$ of a graph $G$ is a bipartite graph
    $(C^*,W^*,E)$, where
    \begin{itemize}
        \item $C^*$ is the set of all $k$-tuples of vertices in $G$, called the
              \emph{candidates,}
        \item $W^*$ is the set of all vertices in $G$, called the
              \emph{witnesses,} and
        \item an edge $\vc w \in E$ if and only if $G\models \dom^k(\vc; w)$.
              We say that $\vc$ and $w$ \emph{agree} in that case.
    \end{itemize}
    For a class $\C$ of graphs, we define
    $\dom^k(\C) = \Set{\dom^k(G) \given G\in \C}$.
    The agreement graph $\dom(G)$ is defined to be $\dom^1(G)$.
\end{definition}

Note that a set of size $k$ is a dominating set in $G$ if and only if it (or more formally, its corresponding candidate)
agrees with all witnesses in the
agreement graph $\dom^k(G)$.
While this problem is trivial to solve in time linear in $\Norm{\dom^k(G)}$,
constructing $\dom^k(G)$ explicitly takes $\Omega(\Abs{G}^k)$ time and space,
which is too large for an fpt algorithm. Hence,
algorithms in the progressive exploration framework only access the agreement
graph via certain oracles that can be efficiently implemented for certain graph
classes $\C$.
For instance, the \emph{Semi-Ladder Algorithm}~\cite{FabianskiPST19} has access
to the \emph{Candidate Oracle} and the \emph{Witness Oracle,} which can be
efficiently implemented on graph classes forbidding certain structures.

\paragraph*{Forbidden Structures (Obstructions)}

We define here various obstructions that are of interest. Let $G=(L,R,E)$ be a
bipartite graph. The sequence $(a_1,b_1),\dots,(a_n,b_n)\allowbreak\in L\times
    R$ forms
\begin{itemize}
    \item a \emph{co-matching} of order $n$ in $G$ if we have $(a_i,b_j)\in
              E\iff i\neq j$, for all $i,j\in [n]$;
    \item a \emph{ladder} of order $n$ in $G$ if we have $(a_i,b_j)\in E\iff i>
              j$, for all $i,j\in [n]$;
    \item a \emph{semi-ladder} of order $n$ in $G$ if we have $(a_i,b_j)\in E$
          for all $i,j\in [n]$ with $i>j$, and $(a_i,b_i)\notin E$ for all
          $i\in[n]$,
\end{itemize}
and the sequence $(a_1,b_1,c_1),\dots,(a_n,b_n,c_n)\in L\times R\times L$ forms
\begin{itemize}
    \item a \emph{double-ladder} of order $n$ if we have $(a_i,b_j)\in E\iff i>
              j$ and $(b_i,c_j)\in E\iff i>j$, for all $i,j\in [n]$.
\end{itemize}
An illustration of the structures can be found in \cref{fig:obstructions}.

We remark that given a co-matching $(a_1,b_1),\dots,(a_n,b_n)$ of order $n$ and
a subset $X\subseteq [n]$, the subsequence $((a_i,b_i))_{i\in X}$ obtained from
restricting the indices to $X$ forms a co-matching of order $\Abs{X}$. The same
applies to ladders, semi-ladders, and double-ladders. We later make use of this
fact to find ``well-behaved'' obstructions given an arbitrary one.

Since the structures are defined as sequences of tuples of vertices, there can
potentially be repeated occurrences of the same vertex within the sequence. For
a co-matching, a ladder, or a semi-ladder, the neighborhood structures guarantee
that $a_i\neq a_j$ and $b_i\neq b_j$ for $n\geq i>j\geq 1$, as
$b_j\in N(a_i)-N(a_j)$ and $a_i\in N(b_j)-N(b_i)$. For a double-ladder,
the neighborhood
structure guarantees that $a_i\neq a_j$, $b_i\neq b_j$, and $c_i\neq c_j$ for
each $i,j\in [n]$ with $i\neq j$ for the same reason. Moreover, for $i>1$ and
any $j\in [n]$, we have $b_1\in N(a_i)-N(c_j)$. Similarly, for $i\in[n]$ and
$j<n$, we have $b_n\in N(c_j)-N(a_i)$. It follows that $a_i=c_j$ is only
possible when $i=1$ and $j=n$, and indeed $N(a_1)=N(c_n)=\emptyset$ can always
be chosen to be the same vertex in $L$. Therefore, when we refer to these
structures as (sub)graphs, we understand a co-matching, ladder, or semi-ladder
of order $n$ as a bipartite graph with $2n$ vertices, and a double-ladder of
order $n$ as a bipartite graph with $3n-1$ vertices, where $a_1$ and $c_n$ are
realized by the same vertex (see $a_1$ and $c_4$ in \cref{fig:double-ladder}).

Let $G$ be a graph and $H$ be a bipartite graph. We say that $H$ is a
\emph{semi-induced subgraph} of $G$ if there exist disjoint vertex sets
$A,B\subseteq V(G)$ such that the bipartite graph
$(A,B,\Set{uv\in E(G)\given u\in A, v\in B})$
is isomorphic to $H$. The \emph{co-matching index} of a graph is the maximum
order of a co-matching that it contains as a semi-induced subgraph, and the
co-matching index of a class of graphs is the supremum of the co-matching
indices of its members. We also write the \emph{$\dom^k$-co-matching index}
(resp. $\dom$-co-matching index) of a graph $G$ or a class $\C$ to denote the
corresponding index of the agreement graph $\dom^k(G)$ (resp. $\dom(G)$) or
class of agreement graphs $\dom^k(\C)$ (resp. $\dom(\C)$). A graph class $\C$ is
\emph{co-matching-free} if for some $\ell > 0$ and for every graph $G\in \C$,
$G$ does not contain a co-matching of order $\ell$ as a semi-induced subgraph.
We define the same for semi-ladder, ladder, and double-ladder analogously.

\section{Dominating Set Algorithm for the Helly Property}
\label{sec:co-matching-algo}
We introduce the \emph{Co-Matching Algorithm} that solves the
\textsc{Dominating Set} problem,
and show that its running time is fpt if the
$\dom^k$-co-matching index of the graph is bounded, i.e.,
its agreement graph does not contain arbitrarily long co-matchings.
The central, yet simple algorithmic technique of this algorithm is the
so-called ``Compression Step'' that
augments the Semi-Ladder Algorithm of \cite{FabianskiPST19}.
We first describe the algorithm on an abstract level, relying on the existence
of three oracles and analyze the number and input size of the oracle calls in
terms of the co-matching and semi-ladder indices of the agreement graph.
Then, we present an efficient implementation of the required
oracles for the \textsc{Dominating Set} problem and analyze the overall
running time, finalizing the description of the Co-Matching Algorithm.

\subsection{The Algorithm in the Abstract, the Oracles and Redundant Witnesses}
The {Co-Matching Algorithm} relies on access to the \emph{Candidate,}
\emph{Witness,} and \emph{Compression Oracles,} as defined in the following,
to explore the agreement graph without constructing it in full. We
defer their implementation to the next subsection.

\begin{description}
    \item[Candidate Oracle:] Given a set of witnesses $W \subseteq W^*$, the
          oracle either returns a candidate $\vc \in C^*$ that agrees with all
          witnesses in $W$, or it concludes that no such candidate exists.
    \item[Witness Oracle:] Given a candidate $\vc \in C^*$, the oracle either
          returns a witness $w \in W^*$ that does not agree with $\vc$, or it
          concludes that no such witness exists.
    \item[Compression Oracle:] Given a set of witnesses $W \subseteq W^*$, the
          oracle either returns a \emph{redundant} witness (see below) for $W$,
          or it concludes that no such witness   exists.
\end{description}

\begin{definition}[Redundant witness]\label{def:redundant-witness}
    In an agreement graph $H =
        (C^*,W^*,E)$, we say a witness $w$ is \emph{redundant} for a set of
    witnesses $W\subseteq W^*$ if every candidate of $C^*$ that agrees with all
    witnesses of $W-w$ also agrees with $w$.
\end{definition}
In other words, a redundant witness is a witness that can be removed from the
set of witnesses $W$ without enlarging the set of candidates that agree with
$W$. We show that irredundant witness sets are co-matchings in disguise.

\begin{corollary} \label{cor:irredundant} An agreement graph $H = (C^*,W^*,E)$
    has co-matching index at most $\mu$ if and only if every set of witnesses
    $W \subseteq W^*$ of size at least $\mu+1$ contains a redundant witness.
\end{corollary}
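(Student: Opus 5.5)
The plan is to prove the equivalence through a single combinatorial fact: a finite witness set $W=\Set{w_1,\dots,w_n}$ with $n\geq 2$ contains no redundant witness if and only if there are candidates $\vc^1,\dots,\vc^n$ such that $(\vc^1,w_1),\dots,(\vc^n,w_n)$ is a co-matching of order $n$ in $H$. Once this is in place, the corollary follows from the observation recorded in \cref{sec:background} that co-matchings pass to subsequences. That observation makes ``some irredundant set of size at least $\mu+1$ exists'' equivalent to ``a co-matching of order at least $\mu+1$ exists''. One small point needs care. For $n\geq 2$ the definition of a co-matching already forces the $\vc^i$ to be pairwise distinct and the $w_i$ to be pairwise distinct, and candidates and witnesses lie on different sides of $H$. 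Hence the sequence form and the semi-induced-subgraph form of a co-matching coincide inside the bipartite graph $H$.

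\textbf{Irredundant implies co-matching.} Suppose no $w_i$ is redundant for $W$. By \cref{def:redundant-witness}, for each $i$ there is a candidate $\vc^i$ that agrees with every witness in $W-w_i$ but does not agree with $w_i$. So $\vc^i w_j\in E$ holds exactly when $i\neq j$, which is precisely the co-matching condition.

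\textbf{Co-matching implies irredundant.} Suppose $(\vc^i,w_i)_{i\in[n]}$ is a co-matching. Then $\vc^i$ agrees with all of $W-w_i$ and disagrees with $w_i$, so $\vc^i$ witnesses that $w_i$ is not redundant. Since this holds for every $i$, the set $W$ has no redundant witness.

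\textbf{Assembling the two directions.} For ($\Leftarrow$), assume every witness set of size at least $\mu+1$ contains a redundant witness, and suppose for contradiction that $H$ contains a co-matching of order $n\geq\mu+1$. By the second direction, its witness set is an irredundant set of size $n$, which is a contradiction. For ($\Rightarrow$), assume the co-matching index is at most $\mu$, and suppose some $W$ with $\Abs{W}\geq\mu+1$ has no redundant witness. By the first direction, $H$ contains a co-matching of order $\Abs{W}>\mu$, which is again a contradiction.

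I expect no step to be a real obstacle, since each direction is just an unfolding of the definition. The one delicate point is the degenerate boundary case of very small $\mu$ (for instance $\mu=0$, where $W$ is a singleton and $W-w=\emptyset$). Here I would argue directly. A singleton $\Set{w}$ is redundant exactly when every candidate agrees with $w$. A co-matching of order $1$ is a pair $(\vc,w)$ with $\vc w\notin E$. These two conditions match, so the equivalence still holds in this case.
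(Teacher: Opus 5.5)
Your proof is correct and follows essentially the same route as the paper: both directions unfold the definition of a redundant witness, so the candidates certifying irredundancy of each $w\in W$ form a co-matching with $W$, and conversely. Your extra bookkeeping (witness sets larger than $\mu+1$, distinctness of the witnesses in a co-matching, and the $\mu=0$ case) is correct and slightly more careful than the paper's short argument, which treats only sets of size exactly $\mu+1$.
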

\begin{proof}
    Let $H = (C^*,W^*,E)$ be a bipartite graph and $W =
        \Set{w^1,\dots,w^{\mu+1}}\subseteq W^*$ be a set of irredundant
    witnesses of order $\mu+1$. For any $w \in W$, since it is irredundant,
    there exists a candidate $\vc^w \in C^*$
    that agrees with all witnesses in
    $W - w$ but disagrees with $w$. Hence, $(\vc^{w^1},w^1), \dots,
        (\vc^{w^{\mu+1}},w^{\mu+1})$ forms a co-matching of order $\mu+1$ in
    $H$.

    Conversely, if $(C,W)$ forms a co-matching of order $\mu+1$ in $H$, then
    all witnesses in $W$ are irredundant by the same argument.
\end{proof}

The Co-Matching Algorithm extends the Semi-Ladder Algorithm from
\cite{FabianskiPST19} by introducing the Compression Step and Compression
Oracle. Otherwise, these algorithms are identical.

\begin{GrayBox}{\textbf{Co-Matching Algorithm}}
    The algorithm maintains a set of witnesses $W \subseteq W^*$ which is set to
    be empty initially. It repeatedly performs the following three steps until
    either a solution is found or a set of witnesses $W$ is found such that no
    candidate agrees with all witnesses in $W$.
    \begin{itemize}
        \item \textbf{Candidate Step:} The Candidate Oracle is called to find a
              candidate $\vc \in C^*$ that agrees with the current set of
              witnesses $W$. If no such candidate exists, the algorithm returns
              \textsc{no}.
        \item \textbf{Witness Step:} The Witness Oracle is called to find a
              witness $w \in W^*$ that does not agree with $\vc$. If there is no
              such witness, the algorithm returns $\vc$ as the solution.
              Otherwise, add $w$ to~$W$.
        \item \textbf{Compression Step:} The Compression Oracle is called on the
              current set of witnesses~$W$. If a redundant witness $w$ is
              returned, remove $w$ from $W$ and repeat this step. Otherwise,
              proceed with the next round.
    \end{itemize}
\end{GrayBox}

We will argue the correctness and running time of the algorithm first in terms
of oracle calls, and later discuss how to implement the oracles efficiently.
First, we observe that the algorithm constructs a semi-ladder in $\dom^k(G)$ and
maintains a co-matching in $\dom^k(G)$ throughout its run.

\begin{lemma}\label{lem:co-matching-algo-analysis}
    The Co-Matching Algorithm satisfies the following properties when
    applied to an agreement graph $H = (C^*, W^*, E)$ with co-matching index at most
    $\mu$ and semi-ladder index $\lambda$.
    \begin{enumerate}
        \item \textnormal{\textbf{(Correctness)}}
              If the algorithm returns a candidate $\vc$,
              then $\vc$ agrees with all
              witnesses in $W^*$. If the algorithm returns
              \textnormal{\textsc{no}}, then no candidate in $C^*$
              agrees with all witnesses in $W^*$.
        \item \textnormal{\textbf{(Oracle-Call Bounds)}}
              The algorithm makes at most $\lambda+1$ calls to the
              Candidate Oracle, at most
              $\lambda+1$ calls to the Witness Oracle, and at most $2\lambda$
              calls to the
              Compression Oracle.
        \item \textnormal{\textbf{(Witness-Set Bound)}}
              Throughout the execution, $\Abs{W}\leq\mu+1$.
    \end{enumerate}
\end{lemma}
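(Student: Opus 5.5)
We prove the three properties separately, using one invariant throughout. Let $\mathcal{A}(W)$ denote the set of candidates that agree with all of $W$.

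\textbf{Invariant.} Removing a redundant witness never changes $\mathcal{A}(W)$. This holds by \cref{def:redundant-witness}: every candidate agreeing with $W-w$ also agrees with $w$, so $\mathcal{A}(W-w)=\mathcal{A}(W)$. Adding a witness can only shrink $\mathcal{A}(W)$. Hence, if $W_t$ denotes the witness set at the start of round $t$, and $w_t$ the witness found in round $t$, then
\[
\mathcal{A}(W_{t+1})=\mathcal{A}(W_t\cup\{w_t\}) = \mathcal{A}(W_t)\cap N(w_t).
\]
By induction, $\mathcal{A}(W_t)=\mathcal{A}(\{w_1,\dots,w_{t-1}\})$. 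In words, the compressed witness set constrains the candidates exactly as much as the full history of witnesses.

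\textbf{Correctness.} If the algorithm returns $\vc$, then the Witness Oracle certified that no witness in $W^*$ disagrees with $\vc$, so $\vc$ agrees with all of $W^*$. If it returns \textsc{no}, then $\mathcal{A}(W)=\emptyset$ for some $W\subseteq W^*$. Since $\mathcal{A}(W^*)\subseteq\mathcal{A}(W)$, no candidate agrees with all of $W^*$.

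\textbf{Oracle-call bounds.} Let $\vc_t$ be the candidate found in round $t$, and suppose rounds $1,\dots,m$ each complete a Witness Step that returns a witness. By the invariant, $\vc_t\in\mathcal{A}(\{w_1,\dots,w_{t-1}\})$, so $\vc_t$ agrees with $w_j$ for all $j<t$. By the choice of $w_t$, $\vc_t$ disagrees with $w_t$. Thus $(\vc_1,w_1),\dots,(\vc_m,w_m)$ is a semi-ladder in $H$ (indexed with the opposite orientation to the definition, which is harmless after reversing the pairing convention), and therefore $m\le\lambda$. The next round makes at most one Candidate call and one Witness call and then terminates. This gives at most $\lambda+1$ calls to each of these two oracles.

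For the Compression Oracle, each round that adds a witness makes exactly one call that reports no redundant witness. Every other call removes a witness. At most $m\le\lambda$ witnesses are ever added, so the total number of removals is at most $\lambda$. The total number of calls is therefore at most $2\lambda$.

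\textbf{Witness-set bound.} After each Compression Step terminates, $W$ contains no redundant witness. By \cref{cor:irredundant}, this forces $\Abs{W}\le\mu$. During a round, $W$ grows by one only in the Witness Step, before compression, so $\Abs{W}\le\mu+1$ throughout.

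The main subtlety is the semi-ladder argument. Witnesses get deleted, so $\vc_t$ a priori agrees only with the current $W_t$, not with every earlier $w_j$. The invariant $\mathcal{A}(W_t)=\mathcal{A}(\{w_1,\dots,w_{t-1}\})$ is exactly what closes this gap. One must also check that the $w_t$ are pairwise distinct, which follows because $\vc_t$ agrees with each earlier $w_j$ but disagrees with $w_t$.
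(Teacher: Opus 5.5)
Your proposal is correct and follows the paper's proof essentially step by step. The correctness argument is the same, the semi-ladder $(\vc_t,w_t)$ bounds the number of witness-producing rounds by $\lambda$ in the same way, you split the Compression Oracle calls into successful and unsuccessful ones as the paper does, and the witness-set bound comes from the same appeal to \cref{cor:irredundant}. Your invariant $\mathcal{A}(W_t)=\mathcal{A}(\{w_1,\dots,w_{t-1}\})$ only makes explicit what the paper summarizes as ``the contract of all previous compression oracle calls.''

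One small point: no reorientation is needed. The fact that $\vc_t$ agrees with $w_j$ for $j<t$ already matches the paper's semi-ladder convention $(a_i,b_j)\in E$ for $i>j$.
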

\begin{proof}
    \textbf{(Correctness.)}
    If $\vc$ is returned,
    the Witness Oracle was called with $\vc$ and returned no witness,
    certifying that no witness in $W^*$ disagrees with $\vc$. Hence, $\vc$
    agrees with all of $W^*$.
    If \textsc{no} is returned, the Candidate Oracle was called with some set
    $W\subseteq W^*$ and returned no candidate,
    certifying that no candidate agrees
    with all of $W$. Since $W\subseteq W^*$, any candidate agreeing with all of
    $W^*$ would in particular agree with all of $W$, contradicting the oracle's
    response. Hence, no solution exists.

    \textbf{(Oracle-Call Bounds.)}
    We first bound the number of rounds.
    Let $\vc^1,\dots,\vc^r$ be the candidates returned by the
    Candidate Oracle called on witness sets
    $W^0,\dots,W^{r-1}$,
    and let the subsequent Witness Oracle call return the witnesses
    $w^1,\dots,w^r$.
    By the Candidate Step's contract,
    $\vc^i$ agrees with all witnesses in $W^i$,
    which,
    by the contract of all previous compression oracle calls,
    implies that $\vc^i$ agrees with all witnesses in $\Set{w^1,\dots,w^{i-1}}$.
    By the Witness Step's contract, $w^i$
    disagrees with $\vc^i$.
    Therefore,
    $(\vc^1,w^1),\dots,(\vc^r,w^r)$ forms a semi-ladder of order $r$ in $H$;
    consequently,
    $r\leq\lambda$.

    There are exactly $r$ rounds that produce a witness, plus at most one final
    round where either the Candidate Oracle returns \textsc{no} or the Witness
    Oracle returns \textsc{no}, for a total of at most $\lambda+1$
    calls to each.
    In each of the $r$ witness-producing rounds,
    the Compression Oracle is called at least once.
    Each such call either removes a redundant witness (a successful call)
    or does not (an unsuccessful call). Since a witness can be removed at most
    once, there are at most $r\leq\lambda$ successful calls.
    There is at most one
    unsuccessful call per round, giving at most
    $r\leq\lambda$ unsuccessful calls.
    Note that no compression is done in the final round.
    Hence, the Compression Oracle is called
    at most $2\lambda$ times.

    \textbf{(Witness-Set Bound.)}
    By \cref{cor:irredundant}, after each Compression Step the set $W$ contains
    only irredundant witnesses, so $\Abs{W}\leq\mu$.
    The subsequent Witness Step
    adds exactly one witness, so $\Abs{W}\leq\mu+1$ before the next Compression
    Step. The bound holds throughout the execution.
\end{proof}

We then state the running time of the Co-Matching Algorithm while treating the
oracles as black boxes.

\begin{lemma}\label{thm:co-matching-algo-runtime-abstract}
    Let $H = (C^*, W^*, E)$ be an agreement graph with co-matching index $\mu$
    and semi-ladder index $\lambda$.
    Let $T_{\mathrm{cand}}(m)$, $T_{\mathrm{witn}}$,
    $T_{\mathrm{comp}}(m)$ be the worst-case running times of a single call to
    the Candidate, Witness, and Compression Oracles on inputs of size at most
    $m$.
    The Co-Matching Algorithm runs in time
    \[
        O\Paren[\big]{\lambda(T_{\mathrm{cand}}(\mu+1)+T_{\mathrm{witn}}+
        T_{\mathrm{comp}}(\mu+1))}\,.
    \]
\end{lemma}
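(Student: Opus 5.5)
The bound follows directly from \cref{lem:co-matching-algo-analysis}. The running time is the sum of the oracle costs plus bookkeeping, so we count calls by type and bound the input size of each call.

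First, the number of calls. By the Oracle-Call Bounds there are at most $\lambda+1$ Candidate Oracle calls, at most $\lambda+1$ Witness Oracle calls, and at most $2\lambda$ Compression Oracle calls.

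Second, the input sizes. By the Witness-Set Bound, $\Abs{W}\leq\mu+1$ at every point of the execution. Every Candidate call receives the current set $W$, so its input has size at most $\mu+1$. The same holds for every Compression call, which is made either right after a witness is added or after a removal. The Witness Oracle receives a single candidate, so its cost is $T_{\mathrm{witn}}$. We assume $T_{\mathrm{cand}}$ and $T_{\mathrm{comp}}$ are monotone in $m$; we take the worst case over inputs of size at most $m$, which makes them monotone by definition. Each call then costs at most $T_{\mathrm{cand}}(\mu+1)$ or $T_{\mathrm{comp}}(\mu+1)$ respectively.

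Third, the overhead. Maintaining $W$ takes $O(1)$ per insertion or removal, for example as a linked list with an index array, or it is absorbed into the oracle costs since each oracle reads its input anyway. There are $O(\lambda)$ such operations in total.

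Summing, the total time is
\[
    (\lambda+1)\,T_{\mathrm{cand}}(\mu+1)+(\lambda+1)\,T_{\mathrm{witn}}+2\lambda\,T_{\mathrm{comp}}(\mu+1)+O(\lambda).
\]
To reach the claimed form we need $\lambda+1=O(\lambda)$, which holds when $\lambda\geq1$. The degenerate case $\lambda=0$ needs a short remark. If $\lambda=0$, the first round produces no witness, so the algorithm makes one Candidate call and one Witness call on constant-size inputs. This is covered by reading the $O$ with $\lambda$ replaced by $\max(\lambda,1)$. Alternatively, observe that the semi-ladder index is at least $1$ whenever some candidate disagrees with some witness, and otherwise the run is trivial.

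This convention is the only subtle point. All remaining steps are direct substitutions from \cref{lem:co-matching-algo-analysis}.
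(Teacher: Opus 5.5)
Your proof is correct and follows the paper's argument: both take the call counts and the bound $\Abs{W}\leq\mu+1$ from \cref{lem:co-matching-algo-analysis} and multiply them by the per-call costs. Your extra remarks are details the paper leaves implicit: the monotonicity of the worst-case costs, the $O(\lambda)$ bookkeeping, and the degenerate case $\lambda=0$, where the bound must be read with $\max(\lambda,1)$.
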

\begin{proof}
    By \cref{lem:co-matching-algo-analysis}, there are at most $\lambda+1$
    calls to
    the Candidate and Witness Oracles, each with $\Abs{W}\leq\mu+1$, and at most
    $2\lambda$ calls to the Compression Oracle, also with $\Abs{W}\leq\mu+1$.
    Multiplying call counts by per-call costs yields the stated bound.
\end{proof}

It remains to show that the
oracles can be implemented efficiently for the agreement graph $\dom^k(G)$ to achieve an fpt algorithm
for \textsc{Dominating Set} on graph classes of bounded $\dom^k$-co-matching index.

\subsection{Implementation of the Oracles for Dominating Set}
\label{sec:domset-oracles}

We give an efficient implementation of the oracles for the agreement graph
$\dom^k(G)$ of a graph $G$.
In the following, only the graph $G$ is given as input,
and the agreement graph $\dom^k(G)$ is not constructed explicitly.
As the $\dom^k$-semi-ladder index $\lambda$ may be as large as $O(\Abs{G})$,
this implementation will be a bit more technical than the one for the
Semi-Ladder Algorithm
to keep the polynomial dependency on $\Abs{G}$ as low as possible.

\begin{definition}
    For a set $S$ of vertices in $G$, we call $P \subseteq S$ a
    \emph{realized neighborhood} on $S$ if there exists a
    vertex $v$ with $P = N_G[v] \cap S$.
    The set of realized neighborhoods on $S$ is then
    \[
        \mathcal P^G(S) \coloneq \Set{N_G[v] \cap S\mid v\in V(G)}\subseteq 2^{S}\,.
    \]
\end{definition}
We represent any realized neighborhood by a bitvector of length $\Abs{S}$.
We implement the oracles by the following procedures.

\begin{description}
    \item[Witness Oracle:]
          Given $\vc=(c_1,\dots,c_k)$, compute $\bigcup_{i=1}^k N[c_i]$.
          If this union is $V(G)$, return \textsc{no}; otherwise return any vertex
          not in the union.
    \item[Compression Oracle:]
          We first observe that a witness $w \in W$ is not redundant for $W$ in $\dom^k(G)$
          if and only if there exists a candidate $\vc \in V(G)^k$ that agrees with
          all witnesses in $W - w$ but disagrees with $w$ in $\dom^k(G)$.
          That is, in the original graph $G$, there exists a vertex set
          $S \subseteq V(G)$ of size at most $k$ corresponding to $\vc$ that
          does not dominate $w$ and every vertex in $W - w$ is dominated by some
          vertex in $S$.

          We check the existence of $S$ as follows: First, we compute the set of
          realized neighborhoods $\mathcal P^G(W)$.
          For each possible $k$-combination of realized neighborhoods $P_1,\dots,P_k \in \mathcal P^G(W)$, we check whether
          the union of the realized neighborhoods contains $W\setminus\Set{w}$ but not $w$.
          If such a combination exists,
          we know there exist vertices $v_1,\dots,v_k \in V(G)$ realizing the neighborhoods
          $P_1,\dots,P_k$ respectively, and forming the desired set $S$.
          In such a case, we return that $w$ is not redundant.
          If no such combination exists, we conclude $w$ is redundant.
    \item[Candidate Oracle:]
          Enumerate all $k$-combinations of realized neighborhoods
          whose union is $W$.
          For the first such combination $P_{1},\dots,P_{k}$, materialize a concrete
          tuple $\vc$ by scanning $V(G)$ until one vertex matching each chosen realized neighborhood is found.
          Return $\vc$ in this case or \textsc{no} if no such combination of realized neighborhoods exists.
\end{description}

Before we analyze the running time of the oracle implementation,
we give more details on how to compute $\mathcal P^G(W)$ efficiently.
The implementation maintains an auxiliary bipartite graph
\[
    B(W)\coloneq\Paren[\big]{V(G),\,W,\;\Set*{vw\given v\in V(G),\,w\in W,\;vw\in E(G)\text{ or }v=w}}
\]
that captures, for each vertex $v\in V(G)$, which witnesses are in its closed
neighborhood. Its size is $|W|\cdot |G|$.
We have that the set of realized neighborhoods of $W$ in $G$ equals the set of realized neighborhoods of $W$ in $B(W)$,
i.e. \(
\mathcal P^G(W) =  \mathcal P^{B(W)}(W)
\).
During the run of the Co-Matching Algorithm, we keep $B(W)$ up to date incrementally.
Initially, $W=\emptyset$ and
$B(W)$ has no right-side vertices.  When a witness $w$ is added to $W$,
we scan $N_G[w]$ and add the vertex $w$ together with edges
$\Set{vw\given v\in N_G[w]}$ to $B(W)$.  When $w$ is removed, we delete $w$
and its incident edges.
The implementation of the oracles described above then is executed on $B(W)$ instead of $G$, especially the computation of $\mathcal P^G(W)$ and finding vertices $\vc$ that materialize a specific $k$-combination of realized neighborhoods.

\subsection{Overall Running Time Analysis}
\label{sec:domset-runtime}

We now analyze the per-call running times of the three oracles implemented
as described above, then combine them with the abstract bounds of
\cref{lem:co-matching-algo-analysis,thm:co-matching-algo-runtime-abstract}
to obtain the concrete running time for \textsc{Dominating Set}.
First, we need to introduce
the notion of neighborhood complexity~\cite{ReidlSS19}.

\begin{definition}[Neighborhood complexity]
    The \emph{neighborhood complexity} of a graph $G$ is the function $\nu^G
        \colon \N \to \N$ with
    \[
        \nu^G(m) = \max_{S \subseteq V(G), \Abs{S} \leq m}\Abs*{\mathcal P^G(S)}\,.
    \]
    That is, $\nu^G(m)$ is the maximum number of realized neighborhoods
    on a set of size $m$.
\end{definition}

The neighborhood complexity of a class $\C$ is defined as
$\nu^{\C}(m) = \sup_{G \in\C} \nu^G(m)$.
Since $\mathcal P^G(S)\subseteq 2^S$,
we can always upper bound $\nu^G(m)$ by $2^m$.
For nowhere
dense classes, the  neighborhood complexity is almost linear~\cite{EickmeyerGKKPRS17}
and for $K_{t,t}$-free classes, $\nu^G(m) = O(m^t)$ holds~\cite{FabianskiPST19}.

\begin{lemma}\label{lem:oracle-costs}
    For the agreement graph $\dom^k(G)$, the three oracles can be implemented
    with the following worst-case running times on an input $G$ and a set $W$ of
    witnesses of size $m=\Abs{W}$, assuming the auxiliary graph $B(W)$ is given:
    \[
        \begin{aligned}
            T_{\mathrm{cand}}(m) & = O\Paren[\big]{km\cdot\nu^G(m)^k + m\cdot\Abs{G}}, \\
            T_{\mathrm{witn}}    & = O(k\cdot\Abs{G}),                                 \\
            T_{\mathrm{comp}}(m) & = O\Paren[\big]{km\cdot\nu^G(m)^k}.
        \end{aligned}
    \]
\end{lemma}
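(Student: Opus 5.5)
The plan is to analyze each oracle separately, working throughout on the auxiliary graph $B(W)$. It has $\Abs{G}$ left vertices, $m$ right vertices and at most $m\Abs{G}$ edges. The common preprocessing for the Candidate and Compression Oracles is computing $\mathcal P^G(W)=\mathcal P^{B(W)}(W)$.

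\textbf{Computing $\mathcal P^G(W)$.} For every $v\in V(G)$ I form the bitvector of $N_G[v]\cap W$ by scanning the edges of $v$ in $B(W)$, which costs $O(m\Abs{G})$ in total. I then deduplicate the bitvectors, e.g.\ by radix sort on length-$m$ vectors or by inserting them into a trie of depth $m$, again in $O(m\Abs{G})$. The result has at most $\nu^G(m)$ distinct elements.

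This $O(m\Abs{G})$ term is harmless for the Candidate Oracle but not for the Compression Oracle, whose claimed bound $O(km\,\nu^G(m)^k)$ contains no $\Abs{G}$ term. This is the step I expect to be the main obstacle. My first attempt would be to maintain $\mathcal P^G(W)$ incrementally together with $B(W)$. Keep a dictionary (trie) mapping each realized bitvector to the number of vertices realizing it, plus a pointer to one realizer. When a witness $w$ is added or removed, only the vertices in $N_G[w]$ change their bitvector; all other vertices just get a $0$ appended or a coordinate dropped. I would therefore store bitvectors in a representation indexed by witness identity, so that this uniform change is implicit. Each update then costs $O(m\cdot\Abs{N_G[w]})$, which is charged to the update of $B(W)$ rather than to the oracle call. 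With $\mathcal P^G(W)$ available as an explicit list, the lemma's hypothesis ``assuming $B(W)$ is given'' is read as including this maintained structure.

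\textbf{Compression Oracle.} For each $w\in W$ and each $k$-multiset $P_1,\dots,P_k$ of realized neighborhoods, I compute $P_1\cup\dots\cup P_k$ with $k$ bitwise ORs of length-$m$ vectors and test whether it equals $W\setminus\Set{w}$. Doing this per $w$ would cost $O(m\cdot k m\,\nu^G(m)^k)$, one factor of $m$ too many. To avoid this, I loop over combinations only once and compute the union $U$ for each. If $U=W\setminus\Set{w}$ for exactly one missing $w$, that $w$ is marked non-redundant. The correctness observation is that some $S$ dominating $W-w$ but not $w$ exists iff some union equals exactly $W\setminus\Set{w}$. The union cannot miss other witnesses, since it must cover $W-w$, and it misses $w$ by assumption. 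So after the single pass, any unmarked $w$ is redundant. Each combination costs $O(km)$, giving $T_{\mathrm{comp}}(m)=O(km\,\nu^G(m)^k)$.

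\textbf{Candidate Oracle.} I enumerate the $\nu^G(m)^k$ combinations and test each union against $W$ in $O(km)$. For the first success, I materialize realizers: either use the stored pointers, or scan the left side of $B(W)$ once, computing each vertex's bitvector and matching it against the $k$ targets, in $O(m\Abs{G})$. This gives $T_{\mathrm{cand}}(m)=O(km\,\nu^G(m)^k+m\Abs{G})$.

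\textbf{Witness Oracle.} I mark $\bigcup_i N[c_i]$ in a boolean array of size $\Abs{G}$ and scan for an unmarked vertex. Marking costs $\sum_i \deg(c_i)+k\le k\Abs{G}$, and the scan costs $\Abs{G}$, for $O(k\Abs{G})$ total.
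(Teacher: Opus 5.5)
Your proposal is correct. For the Candidate and Witness Oracles it coincides with the paper's implementation: enumerate $k$-combinations of realized neighborhoods and scan $B(W)$ to materialize a tuple; mark $\bigcup_i N[c_i]$ and look for an unmarked vertex. Your $O(k\Abs{G})$ witness bound matches the statement, whereas the paper's proof text writes $O(km\cdot\Abs{G})$.

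The genuine difference is the Compression Oracle, where your argument is more careful than the paper's one-line tally. The paper enumerates the combinations separately for each $w\in W$ and quotes $O(m\cdot\nu^G(m)^k)$, which omits the cost of forming and testing each union. With the $O(km)$ per test used in $T_{\mathrm{cand}}$, a per-$w$ loop costs $O(km^2\,\nu^G(m)^k)$. Your single pass, which records the unique witness (if any) missed by each union, is what actually attains the stated $O(km\,\nu^G(m)^k)$.

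You also correctly spot that extracting $\mathcal P^G(W)$ from $B(W)$ requires an $O(m\Abs{G})$ scan not covered by the stated $T_{\mathrm{comp}}$. The lemma's proof is silent on this, and the proof of \cref{thm:co-matching-algo-runtime} simply charges $O(\mu\Abs{G})$ to every Compression Oracle call. Your incremental dictionary of traces removes that term at a cost of $O(m\Abs{N_G[w]})=O((\mu+1)\Abs{G})$ per insertion or deletion of a witness, hence $O(\lambda(\mu+1)\Abs{G})$ overall. This fits inside the theorem's $\lambda(\mu+k)\Abs{G}$ term, so your strengthened reading of ``$B(W)$ is given'' is harmless downstream. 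The alternative is to keep an $O(m\Abs{G})$ term in $T_{\mathrm{comp}}$, as the theorem effectively does.

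Two small implementation remarks. First, a single stored realizer pointer goes stale when that vertex's trace changes, so keep a list of realizers per class. Alternatively, rely on your scan fallback, which suffices since only the Candidate Oracle needs realizers. Second, the identity-keyed traces should be converted to positional bitvectors of length $m$ at the start of each call, costing $O(m\,\nu^G(m))$, so that each union really costs $O(m)$.
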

\begin{proof}
    The implementation is described in \cref{sec:domset-oracles}. The {Candidate Oracle} enumerates all $k$-combinations of realized neighborhoods whose union
    covers $W$; with at most $\nu^G(m)$ realized neighborhoods, this costs
    $O(km\cdot\nu^G(m)^k)$.  Materializing a concrete tuple from the chosen
    realized neighborhoods requires a scan of $B(W)$ costing $O(m\cdot\Abs{G})$.  The {Witness Oracle} scans the closed
    neighborhoods of the $k$ candidate vertices, costing $O(km\cdot\Abs{G})$.
    The {Compression Oracle} enumerates $k$-combinations of realized neighborhoods for
    each $w\in W$, costing $O(m\cdot\nu^G(m)^k)$.
\end{proof}

\ThmCoMatchingAlgoRuntime
\begin{proof}
    The term $\Norm{G}$ accounts for reading the input initially.
    By \cref{lem:co-matching-algo-analysis}, the algorithm makes at most
    $\lambda+1$ calls to each of the Candidate and Witness Oracles and at most
    $2\lambda+1$ calls to the Compression Oracle, with $\Abs{W}\leq\mu+1$
    throughout.

    Building and maintaining the auxiliary graph $B(W)$ costs $O(\Abs{G})$ per round (one witness added and up to one witness removed on average),
    accumulating to $O(\lambda \cdot \Abs{G})$ overall.
    Each of the at most $\lambda+1$ Witness Oracle calls costs $O(k\cdot\Abs{G})$,
    contributing $O(\lambda k\cdot\Abs{G})$.
    The Candidate Oracle and the Compression Oracle cost
    $O(\mu k\cdot\nu^G(\mu+1)^k + \mu \cdot \Abs{G})$ per call by \cref{lem:oracle-costs};
    There are $O(\lambda)$ calls to each,
    contributing $O(\lambda \mu k\cdot\nu^G(\mu+1)^k + \lambda \mu \Abs{G})$.

    Summing all terms yields
    the stated bound.
\end{proof}

When the input graph class has bounded $\dom^k$-co-matching index, $\mu$ is
bounded by a function of $k$ alone, yielding an overall fpt running time.
\begin{corollary}
    The Co-Matching Algorithm solves the \textnormal{\textsc{Dominating Set}} problem
    in fpt time parameterized by $k$
    on graph classes of bounded $\dom^k$-co-matching indices.
\end{corollary}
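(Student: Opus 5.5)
The corollary follows by combining \cref{thm:co-matching-algo-runtime} with the trivial bounds on $\lambda$ and on the neighborhood complexity. Fix a class $\C$ of bounded $\dom^k$-co-matching index, so there is a function $f$ with $\mu\le f(k)$ for every $G\in\C$. By \cref{lem:co-matching-algo-analysis} the Co-Matching Algorithm is correct regardless of $\mu$ and $\lambda$. It therefore only remains to show that the expression in \cref{thm:co-matching-algo-runtime},
\[
O\Paren*{\Norm{G} + \lambda\cdot (\mu+k)\cdot\Abs{G}+ \lambda\mu k\cdot\nu^G(\mu+1)^k},
\]
is of the form $g(k)\cdot \Norm{G}^{O(1)}$.

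\textbf{Bounding $\lambda$.} We use $\lambda\le\Abs{G}+1$. In a semi-ladder $(\vc^1,w^1),\dots,(\vc^r,w^r)$ the witnesses are pairwise distinct, since $w^j\in N(\vc^i)\setminus N(\vc^j)$ for $i>j$, as observed in \cref{sec:background}. The witnesses are vertices of $G$, so $r\le\Abs{G}$.

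\textbf{Bounding $\nu^G$.} We use $\nu^G(\mu+1)\le 2^{\mu+1}\le 2^{f(k)+1}$, because $\mathcal P^G(S)\subseteq 2^S$.

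Substituting both bounds, the running time becomes
\[
O\Paren*{\Norm{G} + (f(k)+k)\Abs{G}^2 + f(k)\,k\,2^{k(f(k)+1)}\Abs{G}},
\]
which is fpt in $k$.

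The only point needing care is that $\mu$ must be bounded by a function of $k$ alone. This is exactly what bounded $\dom^k$-co-matching index of the class provides, interpreted per value of $k$. Since the bound $f(k)$ may be non-computable, we should remark that the algorithm does not need to know $\mu$: it never uses $\mu$ explicitly, and the bound $\Abs{W}\le\mu+1$ holds automatically. Hence uniformity of the fpt algorithm is not an issue.
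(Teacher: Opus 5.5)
Your proposal is correct and is essentially the paper's own argument: substitute $\lambda\le\Abs{G}$ (witnesses in a semi-ladder are distinct), $\nu^G(m)\le 2^m$, and $\mu\le f(k)$ into \cref{thm:co-matching-algo-runtime}, exactly as the paper indicates in \cref{sec:techniques} and in the sentence preceding the corollary. Two small remarks. First, you announce $\lambda\le\Abs{G}+1$ but actually prove the sharper $\lambda\le\Abs{G}$. Second, your uniformity comment does not address computability of the time bound; this is unproblematic anyway, because the characterization in \cref{sec:graph-class-characterization} bounds $\mu$ by an explicit computable function of $k$ and the constants of the class.
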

A more concrete running time is given in \cref{cor:runtime-class} using the
characterization in \cref{sec:graph-class}.

As the co-matching index of $\dom^k(G)$ is bounded by the semi-ladder index of $\dom^k(G)$, the Co-Matching Algorithm recovers the runtime of the Semi-Ladder Algorithm of \cite[Theorem 21]{FabianskiPST19} on classes of bounded $\dom^k$-semi-ladder index which a slight improvement on the polynomial dependency on $G$.
\begin{corollary} \label{cor:runtime-nwd-biclique}
    The Co-Matching Algorithm solves the \textnormal{\textsc{Dominating Set}} problem
    in time
    $\Norm{G} + k^{O(1)}\Abs{G} + k^{O(k)}$ when $\mathcal C$ is either a nowhere-dense or biclique-free graph class and where $O(\cdot)$ hides constants depending on the graph class $\mathcal C$.
\end{corollary}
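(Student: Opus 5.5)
The plan is to obtain \cref{cor:runtime-nwd-biclique} as a specialization of \cref{thm:co-matching-algo-runtime}. On nowhere dense and biclique-free classes we need to bound the three quantities $\lambda$, $\mu$ and $\nu^G(\mu+1)$ by functions of $k$ alone. If all three are polynomial in $k$ (with exponents depending only on $\C$), the bound
\[
O\Paren*{\Norm{G} + \lambda(\mu+k)\Abs{G} + \lambda\mu k\cdot\nu^G(\mu+1)^k}
\]
becomes $\Norm{G} + k^{O(1)}\Abs{G} + k^{O(1)}\cdot(k^{O(1)})^k = \Norm{G} + k^{O(1)}\Abs{G} + k^{O(k)}$, which is exactly the claim.

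\textbf{Step 1: bound $\lambda$ and $\mu$.} Every co-matching is a semi-ladder: the edges below the diagonal are present and the diagonal edges are absent. Hence $\mu \le \lambda$, so it suffices to bound $\lambda$. For this we invoke \cite{FabianskiPST19}, which shows that on biclique-free classes and on nowhere dense classes the $\dom^k$-semi-ladder index is bounded. In their Theorem~21 the Semi-Ladder Algorithm runs for $\lambda$ rounds and achieves the running time $k^{O(k)}$ plus linear terms, which I expect requires $\lambda \le k^{O(1)}$. I will cite that polynomial bound on $\lambda$ explicitly; the constant in the exponent depends on $\C$, namely on $t$ for $K_{t,t}$-free classes and on the nowhere dense parameters otherwise.

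\textbf{Step 2: bound $\nu^G$.} For $K_{t,t}$-free classes we have $\nu^G(m)=O(m^t)$ \cite{FabianskiPST19}. For nowhere dense classes the neighborhood complexity is $O(m^{1+\epsilon})$ for any fixed $\epsilon>0$ \cite{EickmeyerGKKPRS17}. In both cases $\nu^G(m)\le c\,m^{d}$ for constants $c,d$ depending on $\C$. Setting $m=\mu+1\le k^{O(1)}$ gives $\nu^G(\mu+1)^k \le (c\,k^{O(1)})^k = k^{O(k)}$, since the constant $c$ is absorbed for $k\ge2$.

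\textbf{Step 3: substitute.} Plugging these bounds into the running time gives
\[
\lambda(\mu+k) \le k^{O(1)}
\quad\text{and}\quad
\lambda\mu k \cdot k^{O(k)} = k^{O(k)}.
\]
The term $\Norm{G}$ remains as it is.

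The main obstacle is Step~1. One must check that the cited semi-ladder bounds of \cite{FabianskiPST19} are really polynomial in $k$ for nowhere dense classes, not merely some function of $k$, because otherwise the $k^{O(1)}\Abs{G}$ term fails. If only the bound on the total running time is stated there, I would extract the polynomial bound on $\lambda$ from their proof, which bounds the semi-ladder index via uniform quasi-wideness or neighborhood complexity.
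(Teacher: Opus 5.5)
Your proposal matches the paper's argument. The paper justifies this corollary only by the remark preceding it: the $\dom^k$-co-matching index is at most the $\dom^k$-semi-ladder index, so the bounds behind \cite[Theorem 21]{FabianskiPST19} carry over to \cref{thm:co-matching-algo-runtime}. Your write-up makes that substitution explicit, using polynomial bounds on $\lambda$ and $\mu$ and polynomial neighborhood complexity, and the reliance you flag on a bound $\lambda \le k^{O(1)}$ from \cite{FabianskiPST19} is the same citation the paper depends on.
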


\subsection{Connection to Threshold Sets}\label{sec:threshold-set}
In \cite{LokshtanovS21},
the authors identify that \emph{minimal} $k$-domination cores
(see \cref{sec:intro}) are so-called \emph{$k$-threshold sets}.

\begin{definition}[$k$-threshold set~\cite{LokshtanovS21}]
    A vertex set $S$ is a \emph{$k$-threshold set} if for every $v \in S$ there
    exists a set $X_v$ of size at most $k$ so that
    $N [X_v ] \cap S = S\setminus \Set{v}$,
    i.e.,
    $S\setminus\Set{v}$ is a neighborhood realized by $X_v$.
\end{definition}

Using this observation,
they bound the maximum size of $k$-threshold sets
to obtain the fpt result for weakly $\gamma$-closed graphs
(see \cref{def:closed}).

\begin{lemma}[{\cite[Lemmas 20 and 21]{LokshtanovS21}}]
    \label{lem:weakly-closed-threshold-set}
    Every $k$-threshold set of a weakly $\gamma$-closed graph
    has size at most $(\gamma-1)k^{O(\gamma k^2)}$.
\end{lemma}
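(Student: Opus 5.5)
We would prove the bound by turning a $k$-threshold set into a co-matching in the agreement graph $\dom^k(G)$ and then bounding the length of such co-matchings with a Ramsey argument combined with weak $\gamma$-closure.

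\textbf{Step 1 (threshold sets are co-matchings).} Let $S=\Set{v_1,\dots,v_n}$ be a $k$-threshold set. For each $i$, pad $X_{v_i}$ (by repeating a vertex) to a $k$-tuple $\vc^i$. Then $\vc^i$ agrees with every $v_j$, $j\neq i$, and disagrees with $v_i$. So $(\vc^1,v_1),\dots,(\vc^n,v_n)$ is a co-matching of order $n$ in $\dom^k(G)$, exactly as in the proof of \cref{cor:irredundant}. It therefore suffices to show that every co-matching in $\dom^k(G)$ has order at most $k^{O(\gamma k^2)}$, with the factor $(\gamma-1)$ absorbed.

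\textbf{Step 2 (Ramsey colouring).} For $i<j$, choose $p_1$ such that $c^i_{p_1}\in N[v_j]$ and $p_2$ such that $c^j_{p_2}\in N[v_i]$, and colour the pair $\Set{i,j}$ by $(p_1,p_2)\in[k]^2$. Also record in the colour whether $v_iv_j\in E(G)$ and whether $c^i_{p_1}c^j_{p_1}\in E(G)$. This still uses $O(k^2)$ colours. If $n\ge R^{O(k^2)}(3\gamma)$, we obtain a monochromatic index set $I$ of size $3\gamma$. The bound $R^c(m)\le c^{cm}$ then gives $n\le k^{O(\gamma k^2)}$, as required.

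\textbf{Step 3 (contradiction with weak $\gamma$-closure).} Let $x_i$ denote the relevant vertex of $\vc^i$.

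In the diagonal case $p_1=p_2$, each $x_i$ dominates every $v_j$ with $j\neq i$ but not $v_i$. This is a co-matching pattern in $G$, where in addition all $x_i$ are distinct and $x_i\neq v_i$. Consider the induced subgraph on $\Set{x_i,v_i\given i\in I}$; weak $\gamma$-closure gives a vertex $u$ whose non-neighbours each share fewer than $\gamma$ common neighbours with it. If $u=x_i$, then $v_i$ is a non-neighbour of $u$, and their common neighbours include either all $v_j$ (when the $v$'s form a clique) or all $x_j$ (when the $x$'s form a clique). In both cases there are at least $3\gamma-2\ge\gamma$ common neighbours, a contradiction. If the relevant set is an independent set, we instead pick a pair $x_i,x_{i'}$ (respectively $v_j,v_{j'}$), which is non-adjacent and shares all $v_j$ (respectively $x_j$) with $j\notin\Set{i,i'}$ as common neighbours. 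The case $u=v_i$ is symmetric. One also has to handle coincidences $x_i=v_j$ coming from closed neighbourhoods, which costs at most a constant fraction of $I$; this is why $3\gamma$ rather than $2\gamma$ is used.

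In the off-diagonal case $p_1\neq p_2$, the vertices $a_i=c^i_{p_1}$ and $b_i=c^i_{p_2}$ yield a half-graph (double-ladder) pattern around the $v$'s. Here we run the same argument on the "middle" portion of the index set, where each relevant pair has at least $\gamma$ common neighbours among the $v$'s.

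\textbf{Main obstacle.} We expect Step 3 to be the main obstacle: choosing which non-adjacent pair to exhibit in every adjacency subcase, and ensuring that the weak-closure vertex $u$ of the induced subgraph really has a non-neighbour with at least $\gamma$ common neighbours. We would first settle the diagonal case cleanly, and then reduce the off-diagonal case to it by restricting to the first or last third of $I$.
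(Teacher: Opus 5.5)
Your route is the one the paper uses to reprove, with an improved constant, this bound via \cref{thm:weakly-closed-dom-k-co-matching}. Threshold sets are co-matchings in $\dom^k(G)$. The $(p_1,p_2)$-colouring of \cref{lem:co-matching-Ramsey} then yields a co-matching or a double-ladder in $\dom(G)$, and weak closure rules out both.

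Your diagonal case is fine. The extra colour bits (whether the $x_i$, resp.\ the $v_i$, form a clique) replace the paper's dichotomy in \cref{lem:weakly-closed-co-matching}. The coincidences $x_i=v_j$ actually cost nothing: each claimed common neighbour lies in $N[u]\cap N[u']$ while $u\notin N[u']$, so it can be neither $u$ nor $u'$.

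The gap is the off-diagonal case, which carries most of the work, and your plan for it fails. Restricting to the first or last third of $I$ still leaves a double-ladder: $a_i$ dominates only the $v_j$ with $j>i$, and $b_i$ only those with $j<i$. So there is nothing to reduce to the co-matching case. The two obstructions are genuinely different, which is why \cref{thm:characterize-co-matching-index} has to forbid both.

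Nor can the common neighbours always be found among the $v$'s. Suppose the $v_i$ are independent and $a_i$ is adjacent to all other $a$'s and $b$'s in your subgraph. Then every non-neighbour of $a_i$ is a witness with no neighbour among the witnesses, so $a_i$ and any non-neighbour have no common neighbour among the $v$'s. The $\gamma$ common neighbours must then be candidate-side vertices, such as the $a_l$ with $l<i$, which dominate $v_i$.

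Your colouring gives homogeneity only for the $v$'s and the $a$'s, not for the $b$'s or the $a$--$b$ pairs, so the clique/independent split is unavailable here. Also, the $a_i$ among the top $\gamma$ indices and the $b_i$ among the bottom $\gamma$ dominate fewer than $\gamma$ witnesses. They have to be left out of the subgraph to which weak closure is applied.

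What is missing is the argument of \cref{lem:weakly-closed-double-ladder}. Split $I$ into consecutive blocks $I_1,I_2,I_3$ of size $\gamma$ in increasing order. Apply weak closure to the subgraph induced by $\{a_i : i\in I_1\cup I_2\}\cup\{v_i : i\in I\}\cup\{b_i : i\in I_2\cup I_3\}$.

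For each vertex $u$, take a block $S$ that the structure forces to be adjacent to the diagonal partner of $u$. If $u$ is adjacent to all of $S$, the partner is a non-neighbour of $u$ with $S$ as common neighbours. Otherwise some non-neighbour $u'\in S$ and $u$ are both forced to dominate a further full block. Concretely:
\begin{itemize}
\item For $u=a_i$ with $i\in I_1$, take $S=\{b_l : l\in I_3\}$. A non-adjacent $b_l$ shares $\{v_j : j\in I_2\}$ with $a_i$.
\item For $u=a_i$ with $i\in I_2$, take $S=\{a_l : l\in I_1\}$. A non-adjacent $a_l$ shares $\{v_j : j\in I_3\}$.
\item For $u=v_i$ with $i\in I_1$, take partner $a_i$ and $S=\{v_j : j\in I_2\}$. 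A non-adjacent $v_j$ shares $\{b_l : l\in I_3\}$.
\end{itemize}
The remaining cases are symmetric.

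The separating middle block is what forces $3\gamma$; in the diagonal case $\gamma+2$ suffices by \cref{lem:weakly-closed-co-matching}. Your stated reason, handling coincidences, is not where $3\gamma$ comes from. Your unrefined colouring does not force $a_i\notin N[v_j]$ for $j<i$, but that is enough here, since only the forced adjacencies and the diagonal non-adjacencies are used.
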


Using our terminology,
a set of witnesses $W\subseteq W^*$ is a $k$-threshold set if for
every $w \in W$, there exists a candidate $\vc^w$
that agrees with all witnesses
in $W - w$ but disagrees with $w$.
That is, every $w\in W$ is irredundant,
and they
(together with $\Set{\vc^w \given w \in W}$)
form a co-matching of order $\Abs{W}$ in $\dom^k(G)$.
Hence, a graph class has bounded $\dom^k$-co-matching index if and only if
the size of all minimal $k$-domination cores is bounded.

\begin{remark}
    Let $G$ be a graph,
    let $\dom^k(G)=(C^*,W^*,E)$ be its agreement graph,
    and let $W=\Set{w^1,\dots,w^n}\subseteq W^*$.
    Then,
    the following are equivalent:
    \begin{enumerate}
        \item $W$ is a $k$-threshold set in $G$.
        \item Every $w\in W$ is irredundant.
        \item There exists $\vc^1,\dots,\vc^n\in C^*$ such that the sequence
              $(\vc^1,w^1),\dots,(\vc^n,w^n)$ forms a co-matching.
    \end{enumerate}
\end{remark}

The bound on
the size of threshold-sets on weakly $\gamma$-closed graphs~\cite{LokshtanovS21} therefore gives a bound on the
$\dom^k$-co-matching index on those graphs.
We remark that \emph{any} algorithm that ``greedily''
constructs $k$-domination cores to solve or kernelize the
\textsc{Dominating Set} problem cannot work on classes of unbounded
$\dom^k$-co-matching index.

In the next section, we characterize the graph classes of bounded
$\dom^k$-co-matching index via forbidden semi-induced subgraphs.
As an application, we reprove that weakly
$\gamma$-closed graphs have bounded $\dom^k$-co-matching index (i.e.,
bounded-sized minimal $k$-domination cores).
Our bound (\cref{thm:weakly-closed-dom-k-co-matching})
asymptotically matches that of~\cref{lem:weakly-closed-threshold-set}
with improved constants and using simpler proofs.
The resulting running time of the Co-Matching
Algorithm also matches that of the algorithm proposed in~\cite{LokshtanovS21}.

\section{Graph Classes of Bounded
\texorpdfstring{$\dom^k$}{dom\^{}k}-Co-Matching Index}\label{sec:graph-class}
In this section,
we characterize the graph classes that have bounded $\dom^k$-co-matching index,
and relate them to previously studied notions.

\subsection{Characterization of Graphs of Bounded
    \texorpdfstring{$\dom^k$}{dom\^k}-Co-Matching Index}
\label{sec:graph-class-characterization}

We start by showing that boundedness of $\dom^k$-co-matching index is exactly
characterized by forbidding co-matchings and double-ladders as semi-induced
subgraphs.

\ThmCharacterizeCoMatchingIndex

To prove~\cref{thm:characterize-co-matching-index},
we need to be able to argue that long co-matchings in the agreement graph
$\dom^k(G)$ translate to long semi-induced co-matchings or double-ladders
in $G$.
We do this in two steps.

First,
we show that the existence of a long co-matching in $\dom^k(G)$ implies
that of
either a long co-matching or a long double-ladder in $\dom(G)$.
For $k=1$,
$\dom^k(G)$ is exactly $\dom(G)$,
and thus forbidding co-matchings suffice.
For $k\geq 2$,
we make use of Ramsey's theorem,
which states that organized substructures can always be found from large
enough structures.
We state here a multicolor version.
\begin{theorem}[Multicolor Ramsey's theorem]\label{thm:ramsey}
    For any $n\in \N$ and $c\geq 2$,
    there exists a number $R^c(n)\leq c^{cn-1}$
    such that if the edges of a complete graph
    $G$ of order $R^c(n)$ is colored using $c$ colors,
    then there exists a \emph{monochromatic} subset of vertices
    $S\subseteq V(G)$ of size at least $n$,
    i.e.,
    all edges between vertices in $S$ have the same color.
\end{theorem}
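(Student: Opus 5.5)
This is the standard multicolor Ramsey theorem, so I would prove it by the classical neighbourhood-chasing argument, organised as an induction. The plan is to show the stronger, more induction-friendly claim that any $c$-coloured complete graph on $c^{cn-1}$ vertices (in fact on $c^{c(n-1)}+1$ vertices) contains a monochromatic set of size $n$. The bound $c^{c(n-1)}+1\le c^{cn-1}$ holds for $c\ge 2$, since $c^{cn-1}-c^{cn-c}=c^{cn-c}(c^{c-1}-1)\ge 1$.

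\textbf{Greedy construction.} Let $N=c^{c(n-1)}+1$ and fix a colouring of $K_N$. I build a sequence of vertices $v_1,v_2,\dots$ and nested sets $V_0\supseteq V_1\supseteq\cdots$, starting from $V_0=V(K_N)$. At step $t$, pick any $v_t\in V_{t-1}$. The remaining vertices $V_{t-1}\setminus\{v_t\}$ split into $c$ classes according to the colour of their edge to $v_t$. Let $V_t$ be a largest class and record its colour as $\chi(v_t)$. Then $\Abs{V_t}\ge(\Abs{V_{t-1}}-1)/c$.

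\textbf{Length of the sequence.} Starting from $\Abs{V_0}=c^{m}+1$ with $m=c(n-1)$, an easy induction shows $\Abs{V_t}\ge c^{m-t}$ for $t\le m$: indeed $\Abs{V_1}\ge c^{m-1}$, and if $\Abs{V_{t-1}}\ge c^{m-t+1}\ge 1$ then $\Abs{V_t}\ge\lceil(c^{m-t+1}-1)/c\rceil=c^{m-t}$, using integrality. So $V_m$ is nonempty, and I can take one more vertex $v_{m+1}\in V_m$. This gives $m+1=c(n-1)+1$ vertices.

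\textbf{Pigeonhole and monochromaticity.} By construction, every edge $v_iv_j$ with $i<j$ has colour $\chi(v_i)$, because $v_j\in V_i$. Among $v_1,\dots,v_m$ the $m=c(n-1)$ recorded colours take only $c$ values. So some colour is recorded by at least $n-1$ of these vertices. Those $n-1$ vertices together with $v_{m+1}$ form a set of size $n$. Every edge inside it has the colour recorded at its earlier endpoint, which is that common colour, so the set is monochromatic.

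The only delicate point is the off-by-one bookkeeping in the size bound, namely making the rounding $\lceil(c^{j}-1)/c\rceil=c^{j-1}$ work. This is why I start from $c^{m}+1$ rather than $c^m$. The conversion to the stated form $c^{cn-1}$ is the trivial inequality checked above, which uses $c\ge 2$ (for $c=2$ it reads $2^{2n-1}\ge 2^{2n-2}+1$, fine for $n\ge 1$). Everything else is routine.
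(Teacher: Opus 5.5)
Your proof is correct. The greedy neighbourhood-chasing construction, the bound $\Abs{V_t}\ge c^{m-t}$, and the pigeonhole step over the $c(n-1)$ recorded colours all check out. The inequality $c^{c(n-1)}+1\le c^{cn-1}$ also holds for $c\ge 2$ and $n\ge 1$.

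The paper gives no proof of \cref{thm:ramsey}. It quotes the multicolour Ramsey theorem with this bound as a standard fact, and your argument is the classical proof that justifies it.

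One small correction to your commentary: the $+1$ is not needed. The identity $\lceil (c^{j}-1)/c\rceil=c^{j-1}$ already gives $\Abs{V_1}\ge c^{m-1}$ starting from $\Abs{V_0}=c^{m}$, so $c^{c(n-1)}$ vertices suffice. Adding the $+1$ only loosens your bound slightly and does not affect correctness.
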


In particular,
we color each pair of indices with a pair of colors $(p_1,p_2)$,
where $p_1=p_2$ corresponds to a co-matching and $p_1\neq p_2$ corresponds
to a double-ladder in $\dom(G)$.

\begin{lemma}\label{lem:co-matching-Ramsey}
    Suppose $G$ is a graph whose $\dom$-double-ladder and $\dom$-co-matching
    indices are both smaller than $\ell$. Then, $\dom^k(G)$ has co-matching
    index smaller than $R^{k^2}(\ell)$.
\end{lemma}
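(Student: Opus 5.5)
We argue by contraposition. Suppose $\dom^k(G)$ contains a co-matching $(\vc^1,w^1),\dots,(\vc^N,w^N)$ of order $N=R^{k^2}(\ell)$. We will extract from it either a co-matching of order $\ell$ or a double-ladder of order $\ell$ in $\dom(G)$, contradicting the hypothesis. Write $\vc^i=(c^i_1,\dots,c^i_k)$.

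\emph{Step 1: colouring.} For each pair $i<j$, the co-matching property says that $\vc^i$ agrees with $w^j$ and $\vc^j$ agrees with $w^i$. So we can fix some $p_1\in[k]$ with $\dom(c^i_{p_1};w^j)$ and some $p_2\in[k]$ with $\dom(c^j_{p_2};w^i)$. We colour the edge $\{i,j\}$ (with $i<j$) by the ordered pair $(p_1,p_2)$. Since there are $k^2$ colours, \cref{thm:ramsey} yields a monochromatic set $S$ of size $\ell$, with colour $(p,q)$, say. We relabel $S$ as $1,\dots,\ell$, preserving the order. We also record the diagonal fact: $\vc^i$ disagrees with $w^i$, so $c^i_r$ does not dominate $w^i$ for every $r$.

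\emph{Step 2, case $p=q$.} For $i<j$ in $S$, the vertex $c^i_p$ dominates $w^j$ and $c^j_p$ dominates $w^i$. For $i=j$, $c^i_p$ does not dominate $w^i$. Hence $(c^i_p,w^i)_{i\in S}$ is a co-matching of order $\ell$ in $\dom(G)$, which is a contradiction.

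\emph{Step 2, case $p\neq q$.} Set $a_i=c^i_q$, $b_i=w^i$ and $c_i=c^i_p$.
\begin{itemize}
\item For $i>j$: the pair $\{j,i\}$ gives that $c^i_q$ dominates $w^j$, so $(a_i,b_j)$ is an edge.
\item For $i<j$: $c^i_p$ dominates $w^j$, so $(c_i,b_j)$ is an edge.
\item On the diagonal, neither $a_i$ nor $c_i$ dominates $b_i$.
\end{itemize}
This is the main obstacle. The double-ladder requires non-edges $(a_i,b_j)\notin E$ for $i<j$, and the Ramsey colouring gives no control over these entries. We only know the required positive entries and the diagonal non-edges, i.e.\ two ``semi-ladders'' interleaved in opposite directions.

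To handle this we would apply a second Ramsey refinement. The plan is to colour each pair $i<j$ additionally by the two bits $[\dom(c^i_q;w^j)]$ and $[\dom(c^j_p;w^i)]$, which gives $4k^2$ colours.
\begin{itemize}
\item If both bits are $0$ on the homogeneous set, we obtain an honest double-ladder. Reindex $b$ with the $c$-side reversed as in the definition: $(b_i,c_j)\in E\iff i>j$ corresponds to $c_j$ dominating $b_i$ exactly when $j<i$, which after reading $c$ in the appropriate orientation matches the pattern above.
\item If a bit is $1$, then, for instance, $c^i_q$ dominates $w^j$ for all $i\ne j$. Together with the diagonal non-edges, this gives a co-matching $(c^i_q,w^i)$ in $\dom(G)$, which is again a contradiction.
\end{itemize}
The symmetric bit is handled the same way.

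This refinement inflates the colour count beyond $k^2$, so to match the stated bound $R^{k^2}(\ell)$ we would try to fold the bit information into the choice of $(p_1,p_2)$. When several indices work, prefer ones that also dominate in the other direction, so that the ``bit $=1$'' situation is automatically recognized within the $k^2$-colouring. If this folding fails, we would accept a slightly weaker constant.
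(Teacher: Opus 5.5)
\textbf{Gap: your argument only proves a weaker bound.} Your refined argument is correct. With the two extra bits, a homogeneous set with a bit equal to $1$ gives a co-matching $(c^i_q,w^i)$ or $(c^i_p,w^i)$ in $\dom(G)$. A homogeneous set with both bits $0$ gives a double-ladder. No re-orientation is needed in that case: with $a_i=c^i_q$, $b_i=w^i$, $c_i=c^i_p$, the conditions $(a_i,b_j)\in E\iff i>j$ and $(b_i,c_j)\in E\iff i>j$ hold exactly as written. However, this argument needs $4k^2$ colours, so it only shows that the co-matching index of $\dom^k(G)$ is below $R^{4k^2}(\ell)$, not below $R^{k^2}(\ell)$ as stated. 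The folding that would recover $k^2$ colours is only a tentative plan with a fallback. So the proposal does not yet prove the statement. Apart from this point, your structure matches the paper's proof: colour pairs by $(p_1,p_2)$, apply Ramsey, and split into the co-matching case $p_1=p_2$ and the double-ladder case $p_1\neq p_2$.

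\textbf{The fix.} The folding works, and it is exactly what the paper does. It must be a canonical rule, not just a preference. For $i<j$, colour $\{i,j\}$ by $(p,p)$ whenever some single coordinate $p$ satisfies both $\dom(c^i_p;w^j)$ and $\dom(c^j_p;w^i)$. Otherwise, colour it by any $(p_1,p_2)$ with $\dom(c^i_{p_1};w^j)$ and $\dom(c^j_{p_2};w^i)$. In the second case the first rule fails for $p_1$, which forces $\neg\dom(c^j_{p_1};w^i)$. It also fails for $p_2$, which forces $\neg\dom(c^i_{p_2};w^j)$. Hence $p_1\neq p_2$, and both of your bits are $0$ automatically. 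A homogeneous colour $(p,q)$ with $p\neq q$ then already gives the double-ladder, and $k^2$ colours suffice.

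Simply preferring both-direction coordinates for $p_1$ and $p_2$ separately is not enough. If two different coordinates each worked in both directions, you could get a colour $(p_1,p_2)$ with $p_1\neq p_2$ that does not certify the non-edges. A homogeneous set of that colour could then mix such pairs with genuine double-ladder pairs, and yield neither structure.
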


\begin{proof}
    Let $G$ be a graph with vertex set $V$ and suppose that $\dom^k(G)$ has
    co-matching index at least $q\coloneq R^{k^2}(\ell)$. Then, there are
    tuples $(\va^1,b^1),\dots,(\va^q,b^q)\in V^{k+1}$ such that for all
    $i,j\in[q]$, $G\models \dom^k(\va^i;b^j)$ if and only if $i\neq j$.

    Since $\dom^k(\vx;y)=\bigvee_{p\in[k]}\dom(x_p;y)$, whenever $G\models
        \dom^k(\va;b)$ and $G\nvDash \dom^k(\va';b')$, there must be some
    $p\in[k]$ such that $G\models \dom(a_p;b)$ and $G\nvDash \dom(a_p';b')$.
    For all $q\geq i>j\geq 1$, we color the pair $(i,j)$ as follows:
    \begin{enumerate}
        \item If for some $p\in [k]$, $G\models \dom(a_p^i;b^j)$ and
              $G\models \dom(a_p^j;b^i)$, then we color $(i,j)$ with
              $(p,p)$.
        \item Otherwise, we color $(i,j)$ with $(p_1,p_2)\in [k]^2$ such
              that $G\models \dom(a_{p_1}^i;b^j)$ and $G\models
                  \dom(a_{p_2}^j;b^i)$. Note that since the previous case does
              not hold for $p_1$ and $p_2$, we have $G\nvDash
                  \dom(a_{p_1}^j;b^i)$ and $G\nvDash \dom(a_{p_2}^i;b^j)$, and
              thus $p_1\neq p_2$.
    \end{enumerate}

    By \cref{thm:ramsey}, there is a set $X\subseteq[q]$ of size $\ell$ and a
    color $(p_1,p_2)\in[k]^2$ such that each pair $(i,j)\in X^2$ with $i>j$
    is colored with the color $(p_1,p_2)$. We distinguish two cases. If
    $p_1=p_2=p$, then
    \begin{itemize}
        \item $G\nvDash \dom(a_{p}^i;b^i)$ for all $i\in X$ and
        \item $G\models \dom(a_{p}^i;b^j)$ for all $i\neq j\in X$.
    \end{itemize}
    It follows that $(a_p^i,b^i)_{i\in X}$ forms a co-matching of order
    $\ell$ in $\dom(G)$.

    On the other hand, if $p_1\neq p_2$, then
    \begin{itemize}
        \item $G\nvDash \dom(a_{p_1}^i;b^i)$ for all $i\in X$,
        \item $G\nvDash \dom(a_{p_2}^i;b^i)$ for all $i\in X$,
        \item $G\nvDash \dom(a_{p_1}^i;b^j)$ for all $i<j\in X$,
        \item $G\nvDash \dom(a_{p_2}^i;b^j)$ for all $i>j\in X$,
        \item $G\models \dom(a_{p_1}^i;b^j)$ for all $i>j\in X$, and
        \item $G\models \dom(a_{p_2}^i;b^j)$ for all $i<j\in X$.
    \end{itemize}
    It follows that $(a_{p_1}^i,b^i,a_{p_2}^i)_{i\in X}$ forms a
    double-ladder of order $\ell$ in $\dom(G)$.
\end{proof}

Next,
we relate the size of co-matchings and double-ladders in $\dom(G)$ to
semi-induced ones.

\begin{lemma}\label{lem:co-matching-semi-induced-delta_1}
    Let $G$ be a graph, let $\ell_1$ denote the co-matching index of
    $\dom(G)$, and let $\ell_2$ denote the order of the largest semi-induced
    co-matching in $G$. Then, $\ell_2\leq \ell_1\leq 2\ell_2$.
\end{lemma}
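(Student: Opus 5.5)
We prove the two inequalities separately. For $\ell_2\leq\ell_1$, take a semi-induced co-matching of order $\ell_2$ in $G$. This gives disjoint sets $A=\Set{a_1,\dots,a_{\ell_2}}$ and $B=\Set{b_1,\dots,b_{\ell_2}}$ with $a_ib_j\in E(G)$ iff $i\neq j$. Disjointness gives $a_i\neq b_j$ for all $i,j$. Hence $b_j\in N_G[a_i]$ iff $a_ib_j\in E(G)$ iff $i\neq j$. So $(a_i,b_i)_{i\in[\ell_2]}$ is a co-matching of order $\ell_2$ in $\dom(G)$, where candidates are single vertices and agreement means closed-neighborhood membership.

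For $\ell_1\leq 2\ell_2$, start from a co-matching $(a_1,b_1),\dots,(a_n,b_n)$ in $\dom(G)$ with $n=\ell_1$. Then $b_j\in N_G[a_i]$ iff $i\neq j$. We want a subset $X\subseteq[n]$ with $\Abs{X}\geq n/2$ such that
\begin{itemize}
\item $\Set{a_i\given i\in X}\cap\Set{b_j\given j\in X}=\emptyset$, and
\item $a_i\neq b_j$ for all $i\neq j$ in $X$.
\end{itemize}
Once $a_i\neq b_j$, closed-neighborhood domination coincides with adjacency. The restricted sequence is again a co-matching in $\dom(G)$, as remarked after the definitions. So on $X$ it becomes a semi-induced co-matching in $G$ of order $\Abs{X}$.

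First, $a_i\neq b_i$ holds automatically, since $b_i\notin N_G[a_i]\ni a_i$. Also, the $a_i$ are pairwise distinct and the $b_i$ are pairwise distinct, by the preliminaries. So the only possible collisions are $a_i=b_j$ with $i\neq j$. Consider the directed graph $D$ on $[n]$ with an arc $i\to j$ whenever $a_i=b_j$. Each index has out-degree at most $1$ because the $b_j$ are distinct, and in-degree at most $1$ because the $a_i$ are distinct. Hence $D$ is a disjoint union of directed paths and cycles, and it has no loops. Its underlying graph has maximum degree $2$, so it can be properly $3$-colored; if all cycles are even it is bipartite. We then need an independent set of size at least $n/2$ in $D$.

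The main obstacle is odd cycles, which prevent an independent set of exactly half. So I will check whether collisions can actually form cycles of length at least $3$. Suppose $a_i=b_j$ and $a_j=b_m$ with $i,j,m$ distinct. Since $m\neq i$, we get $b_m\in N[a_i]=N[b_j]$, so $b_m$ and $b_j$ are adjacent or equal, hence $a_j$ and $b_j$ are adjacent or equal. But $b_j\notin N[a_j]$, a contradiction. Therefore every path in $D$ has at most one arc. A $2$-cycle is also fine for an independent set: pick one endpoint. So $D$ is a matching. Choosing one endpoint of each arc, together with all isolated indices, yields $X$ with $\Abs{X}\geq n/2$. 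Thus $\ell_2\geq \ell_1/2$.
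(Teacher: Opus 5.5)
Your proof is correct and follows essentially the paper's route: your collision digraph $D$ (arc $i\to j$ when $a_i=b_j$) is a cleaner bookkeeping of the paper's partition into $U,V,W,X,Y,Z$, which correspond to isolated indices, tails and heads of single arcs, and $2$-cycles. Your step ruling out $i\to j\to m$ with $m\neq i$ is exactly the paper's observation that $b^i=a^j$ and $b^j=a^k$ force $k=i$, and both arguments then keep one index from each collision pair to get the factor $2$.
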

\begin{proof}
    Since a semi-induced co-matching in $G$ forms a co-matching in
    $\dom(G)$, we have $\ell_2\leq \ell_1$. On the other hand, let
    $(a^1,b^1),\dots,(a^{\ell_1},b^{\ell_1})$ be a co-matching in $\dom(G)$.
    We write $a^i\in B$ to denote that for some $j\in[\ell_1]$, $a^i$ and
    $b^j$ correspond to the same vertex in $G$, and vice versa. We assign
    and relabel the vertices as follows. For each $i\in[\ell_1]$,
    \begin{itemize}
        \item if $a^i\notin B$ and $b^i\notin A$, assign $a^i$ to $U$ and
              $b^i$ to $V$;
        \item if $a^i\notin B$ and $b^i\in A$, assign $a^i$ to $W$ and $b^i$
              to $Y$;
        \item if $a^i\in B$ and $b^i\notin A$, assign $a^i$ to $Y$ and $b^i$
              to $X$;
        \item if $a^i\in B$ and $b^i\in A$, assign both $a^i$ and $b^i$ to
              $Z$.
    \end{itemize}
    In particular,
    note that if $(a^i,b^i)$ is assigned to $(W,Y)$ with $b^i=a^j$ for some
    $j\in[\ell_1]$,
    then $b^j$ cannot appear in $A$,
    justifying the assignment of $(a^j,b^j)$ to $(Y,X)$.
    To see this,
    assume for contradiction that $b^j=a^k$ for some $k\in[\ell_1]$.
    Then,
    we have $a^jb^j\notin E(G)\implies b^ia^k\notin E(G)\implies i=k$,
    which contradicts with the fact that $a^i\notin B$.
    We conclude that $\Set{U,V,W,X,Y,Z}$ forms a partition of the vertices in
    the co-matching in $\dom(G)$.

    We write $U=\Set{u^1,\dots,u^p}$, $V=\Set{v^1,\dots,v^p}$,
    $W=\Set{w^1,\dots,w^q}$, $X=\Set{x^1,\dots,x^q}$,
    $Y=\Set{y^1,\dots,y^q}$, and $Z=\Set{z^1,\dots,z^{2r}}$, such that
    $(u^1,v^1),\allowbreak\dots,\allowbreak(u^p,v^p)$ forms a co-matching,
    $(w^1,y^1),\allowbreak\dots,\allowbreak(w^q,y^q)$ forms a co-matching,
    $(y^1,x^1),\allowbreak\dots,\allowbreak(y^q,x^q)$ forms a co-matching, and
    $(z^1,z^{r+1}),\allowbreak\dots,(z^r,z^{2r})$ forms a co-matching. Note that each
    of these is a semi-induced co-matching in $G$. It follows that $(U\cup
        W\cup \Set{z^1,\dots,z^r},V\cup Y\cup \Set{z^{r+1},\dots,z^{2r}})$ forms
    a semi-induced co-matching of order $p+q+r$. Thus, we have $2\ell_2\geq
        2(p+q+r)\geq p+2q+2r=\ell_1$.
\end{proof}

\begin{lemma}\label{lem:double-ladder-semi-induced-delta_1}
    Let $G$ be a graph, let $\ell_1$ denote the double-ladder index of
    $\dom(G)$, and let $\ell_2$ denote the order of the largest semi-induced
    double-ladder in $G$. Then, $\ell_2\leq \ell_1\leq 3\ell_2$.
\end{lemma}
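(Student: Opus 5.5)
The inequality $\ell_2\leq\ell_1$ is direct. I would argue the converse $\ell_1\leq 3\ell_2$ by picking a third of the indices of a double-ladder in $\dom(G)$ so that no vertex of $G$ is used on both sides.

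\emph{The direction $\ell_2\leq\ell_1$.} Take a semi-induced double-ladder in $G$ with disjoint sides $A\ni a_i,c_i$ and $B\ni b_i$. For $u\in A$ and $v\in B$ we have $u\neq v$, so $G\models\dom(u;v)$ holds if and only if $uv\in E(G)$. The same tuples therefore form a double-ladder in $\dom(G)$. The identification $a_1=c_n$ causes no problem, because both vertices lie on the left side.

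\emph{Setting up the converse.} Let $(a^i,b^i,c^i)_{i\in[\ell_1]}$ be a double-ladder in $\dom(G)$. A vertex of $G$ may occur on both sides, that is, $a^i=b^j$ or $c^i=b^j$. The plan is to find $X\subseteq[\ell_1]$ with $\Abs{X}\geq\ell_1/3$ such that no such collision occurs among indices in $X$. Then $A=\Set{a^i,c^i\given i\in X}$ and $B=\Set{b^i\given i\in X}$ are disjoint. On $A\times B$, $\dom$ coincides with adjacency, so the restricted sequence is a semi-induced double-ladder of order $\Abs{X}$ in $G$. This gives $\ell_1\leq 3\ell_2$.

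\emph{Structure of collisions.} Equal vertices satisfy $\dom$, since the distance is $0$.
\begin{itemize}
\item If $a^i=b^j$, then $G\models\dom(a^i;b^j)$, which forces $i>j$. In particular $a^i\neq b^i$.
\item If $c^i=b^j$, then $G\models\dom(b^j;c^i)$, which forces $j>i$. In particular $c^i\neq b^i$.
\end{itemize}
The $a^i$ are pairwise distinct, and so are the $c^i$. Hence each $b^j$ equals at most one $a^i$ and at most one $c^{i'}$.

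I claim it cannot equal both. Otherwise $a^i=c^{i'}$, and by the discussion in \cref{sec:background} this forces $i=1$ and $i'=\ell_1$. But $a^1=b^j$ would require $1>j$, which is impossible.

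Consequently every index $j$ has at most one ``partner'' index $f(j)\neq j$, namely the index whose $a$- or $c$-vertex equals $b^j$. Every collision is of this form.

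\emph{Choosing $X$.} Consider the conflict graph on $[\ell_1]$ with edges $\Set{j,f(j)}$. It is the underlying graph of a functional digraph (out-degree at most $1$, no loops). So each component has at most one cycle, and the graph is $3$-colorable: $2$-color the trees and use a third color on one vertex of an odd cycle. The largest color class is an independent set $X$ of size at least $\ell_1/3$, which is the set required above.

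The main obstacle is the step that each $b^j$ collides with at most one left vertex. It is the only place where the double-ladder structure, rather than just distinctness, is used. Without it the conflict graph could have out-degree $2$, and a greedy choice would only give a worse constant such as $\ell_1/5$.
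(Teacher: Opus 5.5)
Your proof is correct and has the same overall plan as the paper. The direction $\ell_2\le\ell_1$ is immediate. For the converse, both arguments keep about a third of the indices so that no vertex of $G$ lies on both sides. Both rely on the fact that every cross-side collision involves some $b^j$ and obeys the order constraints ($a^i=b^j\Rightarrow i>j$, $c^i=b^j\Rightarrow j>i$).

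Where you differ is the index-selection step. The paper asserts that the collisions form a matching $M$ on the double-ladder. It then proves \cref{lem:double-ladder-matching} by a greedy count: repeatedly remove the index destroying the most remaining pairs, with $p$ removals destroying at least two pairs each and $q$ destroying one each. This gives $|X|\ge q$. Since every pair contains a $b$-vertex, $n\ge|M|\ge 2p+q$, hence $|X|\ge p$ and $3|X|\ge n$. You instead orient each collision from the index of $b^j$ to its partner index. This gives a functional digraph whose underlying graph is a pseudoforest, and you $3$-colour it.

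Your route proves explicitly that $b^j$ cannot coincide with both some $a^i$ and some $c^{i'}$, via $a^i=c^{i'}\Rightarrow i=1$. The paper uses this tacitly when it calls the collision set a matching, and it is exactly what yields $|M|\le n$ there. The greedy argument, conversely, needs only that edge count, i.e.\ average degree at most $2$ in the conflict graph, so it is a Tur\'an-type bound independent of the out-degree structure. Your colouring exploits the finer structure and gives an explicit partition into three conflict-free index classes. Both routes yield $\lceil\ell_1/3\rceil$.
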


To prove \cref{lem:double-ladder-semi-induced-delta_1},
we first show that given an arbitrary matching in the double-ladder,
we can find a large subset $X$ of indices that does not intersect with
the matching.
Let $M$ be a matching on a double-ladder $(a^1,b^1,c^1),\dots,(a^n,b^n,c^n)$
and let $X\subseteq [n]$ be an index set.
We write $M[X]\coloneq \Set{a^ib^j\in
        M\given i,j\in X}\cup \Set{b^ic^j\in M\given i,j\in X}$ to denote the
matching induced by restricting $M$ to the indices in $X$.
We have the following lemma.

\begin{lemma}\label{lem:double-ladder-matching} Let
    $(a^1,b^1,c^1),\dots,(a^n,b^n,c^n)$ be a double-ladder, and let $M$ be a
    matching on the double-ladder. Then, there exists an index set
    $X\subseteq [n]$ of size at least $\Ceil{\frac{n}{3}}$ such that
    $M[X]=\emptyset$.
\end{lemma}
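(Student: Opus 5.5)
The idea is to turn the matching $M$ into an auxiliary graph on the index set $[n]$. Then $M[X]=\emptyset$ says exactly that $X$ is an independent set there, and a greedy $3$-colouring finishes the argument.

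\textbf{The auxiliary graph.} Define $\Gamma$ on vertex set $[n]$. For every edge $a^ib^j\in M$ and every edge $b^ic^j\in M$, add the edge $\Set{i,j}$ to $\Gamma$. By definition of $M[X]$, we have $M[X]=\emptyset$ if and only if $X$ is an independent set in $\Gamma$. So it suffices to find an independent set of size at least $\Ceil{\frac{n}{3}}$ in $\Gamma$.

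\textbf{Structure of the edges.} I use that the edges of $M$ are edges of the double-ladder. Hence $a^ib^j\in M$ forces $i>j$, and $b^ic^j\in M$ forces $i>j$. In particular $\Gamma$ has no loops: an index is never matched to itself, since $a^ib^i$ and $b^ic^i$ are non-edges.

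The vertex $a^1=c^n$ is isolated in the double-ladder, so it lies in no matching edge and causes no ambiguity.

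Now take an edge $\Set{i,j}$ of $\Gamma$ with $i>j$. The endpoint of the underlying matching edge belonging to the larger index $i$ is either $a^i$ or $b^i$. Since $M$ is a matching, each of $a^i$ and $b^i$ lies in at most one edge of $M$. Therefore every index $i$ has at most $2$ neighbours $j<i$ in $\Gamma$. I expect the main care to be needed at this step: the naive degree bound only says each index has degree at most $3$ (one edge for each of $a^i,b^i,c^i$), which gives just $\frac{n}{4}$. The orientation by index order is what improves this to $\frac{n}{3}$.

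\textbf{Greedy colouring.} Colour the indices $1,2,\dots,n$ in increasing order with colours $\Set{1,2,3}$. When index $i$ is coloured, at most two of its neighbours are already coloured, so a free colour always exists. This gives a proper $3$-colouring of $\Gamma$.

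By pigeonhole, some colour class $X$ has size at least $\Ceil{\frac{n}{3}}$. This class is independent in $\Gamma$, so $M[X]=\emptyset$.
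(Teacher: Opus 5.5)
Your proof is correct, but it takes a different route from the paper's.

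\textbf{The paper's argument.} It uses only a global count. Every matched pair contains some $b^i$, so $|M|\le n$; in your language, the index graph has at most $n$ edges, i.e., average degree at most $2$. The paper then repeatedly deletes an index of maximum degree until no matched pair survives. Let $p$ count the deletions that remove at least two pairs and $q$ those that remove exactly one. The paper argues $|X|\ge q$, and from $2p+q\le n$ it gets $|X|=n-p-q\ge p$, hence $3|X|\ge n$. This is essentially the Tur\'an/Caro--Wei bound for average degree at most $2$.

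\textbf{Your argument.} You use the orientation $i>j$ of every edge of the double-ladder. Each index then has at most two smaller neighbours, one through $a^i$ and one through $b^i$. So the index graph is $2$-degenerate, and a greedy colouring in increasing index order gives a proper $3$-colouring.

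\textbf{What each approach buys.} Yours is purely local. It avoids having to argue, as the paper does, that the final $q$ deletions leave $q$ distinct partners behind in $X$. It also proves slightly more: $[n]$ splits into three index sets, each with $M[X]=\emptyset$. The paper's counting argument, on the other hand, never uses the direction of the matched pairs. It needs only that each pair uses a distinct $b$-vertex and that no index is paired with itself.
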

\begin{proof}
    We construct $X$ be greedily removing the index in $[n]$ that removes
    the most matched pairs until $M[X]=\emptyset$. Let $p$ denote the number
    of indices whose removal reduces the number of matched pairs by at least
    two, and $q$ denote the number of indices whose removal reduces the
    number of matched pairs by one. For each of the $q$ matched pairs, at
    least one index must remain in $X$, and thus we have $\Abs{X}\geq q$. On
    the other hand, every pair in $M$ must include some $b^i$, and thus
    $n\geq\Abs{M}\geq 2p+q$, i.e., $\Abs{X}=n-p-q\geq p$. It follows that
    \[
        \Abs{X}\geq \Ceil*{\frac{p+q+\Abs{X}}{3}}=\Ceil*{\frac{n}{3}}\,.\qedhere
    \]
\end{proof}

We are now equipped to prove \cref{lem:double-ladder-semi-induced-delta_1}.

\begin{proof}[Proof of \cref{lem:double-ladder-semi-induced-delta_1}]
    Since a semi-induced double-ladder in $G$ forms a double-ladder in
    $\dom(G)$, we have $\ell_2\leq \ell_1$.

    On the other hand, consider a double-ladder
    $(a^1,b^1,c^1),\dots,(a^{\ell_1},b^{\ell_1},c^{\ell_1})$ in $\dom(G)$.
    It is a semi-induced double-ladder in $G$ if no vertex in $V(G)$ appears
    in both $U\cup W$ and $V$. We achieve this by restricting the
    double-ladder to a subset of indices $X\subseteq[\ell_1]$. Recall that
    $a^i\neq a^j$, $b^i\neq b^j$, and $c^i\neq c^j$ for $i\neq j$. Moreover,
    since $G\models \dom(v;v)$ for each $v\in V(G)$, $a^i=b^j$ or $b^i=c^j$
    is possible only if $i>j$. It follows that the vertices that appear
    in both $A\cup C$ and $B$ form a matching $M$ in the double-ladder.

    Therefore, By \cref{lem:double-ladder-matching}, there exists some
    $X\subseteq [\ell_1]$ of size at least $\Ceil{\frac{\ell_1}{3}}$ such
    that when restricting the indices to $X$, each matched pair in $M$
    appears at most once in the double-ladder. It follows that every vertex
    in the resulting double-ladder is distinct, and thus we obtain a
    semi-induced double-ladder of order $\Ceil{\frac{\ell_1}{3}}$.
    Therefore, we have $3\ell_2\geq \ell_1$.
\end{proof}

\Cref{thm:characterize-co-matching-index} then follows from
\cref{lem:co-matching-Ramsey,lem:co-matching-semi-induced-delta_1,lem:double-ladder-semi-induced-delta_1}.

\begin{proof}[Proof of \cref{thm:characterize-co-matching-index}]
    If $\dom^k(\C)$ has unbounded co-matching index, then for each $n\geq
        1$, there exists some $G_n\in\C$ such that $\dom^k(G)$ has a co-matching
    index at least $R^{k^2}(3n)$. By \cref{lem:co-matching-Ramsey},
    $\dom(G)$ has either co-matching index at least $3n$ or double-ladder
    index at least $3n$. In the former case, by
    \cref{lem:co-matching-semi-induced-delta_1}, $G$ contains a co-matching
    of order at least $\frac{3n}{2}\geq n$ as a semi-induced subgraph. In
    the latter case, by \cref{lem:double-ladder-semi-induced-delta_1}, $G$
    contains a double-ladder of order at least $n$ as a semi-induced
    subgraph.

    On the other hand, if for each $n\geq 1$, there exists some $G_n\in \C$
    such that $G_n$ contains a co-matching $(u_1,v_1),\dots,(u_n,v_n)$ as a
    semi-induced subgraph, then, by choosing $\va_i\coloneq
        (u_i,\dots,u_i)\in V^k$ and $b_i\coloneq v_i\in V$, we can construct a
    co-matching of order $n$ in $\dom^k(G_n)$. Similarly, if $G_n$ contains
    a double-ladder $(u_1,v_1,w_1),\dots,(u_n,v_n,w_n)$ as a semi-induced
    subgraph, then, by choosing $\va_i\coloneq (u_i,w_i,w_i,\dots,w_i)\in
        V^k$ and $b_i\coloneq v_i\in V$, we can again construct a co-matching
    of order $n$ in $\dom^k(G_n)$.
\end{proof}

For completeness, we also show that bounded $\dom$-ladder index is
equivalent to forbidding semi-induced ladders.
\begin{lemma}\label{lem:ladder-semi-induced-delta_1}
    Let $G$ be a graph, let $\ell_1$ denote the ladder index of $\dom(G)$,
    and let $\ell_2$ denote the order of the largest semi-induced ladder in
    $G$. Then, $\ell_2\leq \ell_1\leq 2\ell_2$.
\end{lemma}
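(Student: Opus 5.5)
The inequality $\ell_2\leq\ell_1$ is immediate: a semi-induced ladder in $G$ with sides $A,B$ is already a ladder in $\dom(G)$, because for $a\in A$ and $b\in B$ we have $a\neq b$, so $\dom(a;b)$ holds exactly when $ab\in E(G)$. The work is in $\ell_1\leq 2\ell_2$. I would follow the proof of \cref{lem:double-ladder-semi-induced-delta_1} and delete indices so that the two sides become vertex-disjoint.

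Fix a ladder $(a^1,b^1),\dots,(a^{\ell_1},b^{\ell_1})$ in $\dom(G)$. As noted in the preliminaries, the $a^i$ are pairwise distinct and the $b^i$ are pairwise distinct. The only obstruction to semi-inducedness is therefore a coincidence $a^i=b^j$. I would first identify which coincidences can occur. If $a^i=b^j$, then $\dom(a^i;b^j)$ holds trivially since $\dist(v,v)=0$, so the ladder condition forces $i>j$. Because the $a$'s are distinct and the $b$'s are distinct, each vertex gives at most one such coincidence. Hence the set $M$ of pairs $\Set{i,j}$ with $a^i=b^j$ is a matching, viewed either in the ladder or as a graph on the index set $[\ell_1]$.

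Next I would choose $X\subseteq[\ell_1]$ with $|X|\geq\Ceil{\ell_1/2}$ that contains no pair of $M$ entirely. On indices, each index $i$ meets at most two pairs of $M$: one through $a^i$ and one through $b^i$. The conflict graph on $[\ell_1]$ therefore has maximum degree $2$, so it is a disjoint union of paths and cycles. An independent set in such a graph only has size about $\ell_1/3$ when odd cycles are present, so I need more structure. The key observation is that every pair $\Set{i,j}$ with $a^i=b^j$ has $i>j$, so orienting each pair from $j$ to $i$ makes every edge increase the index. This rules out cycles: the conflict graph is a disjoint union of monotone paths $j_1<j_2<\dots$, where $a^{j_{t+1}}=b^{j_t}$. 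A path of $m$ vertices has an independent set of size $\Ceil{m/2}$, and summing over the paths gives $|X|\geq\Ceil{\ell_1/2}$.

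Restricting the ladder to the indices in $X$ gives a ladder of order $|X|$ whose $a$-side and $b$-side are disjoint. Within $X$, every pair $(a^i,b^j)$ then consists of distinct vertices, so its adjacency in $G$ agrees with $\dom$. This is a semi-induced ladder, which gives $\ell_2\geq\ell_1/2$. I expect the main obstacle to be the acyclicity step, in particular checking the degree bound and confirming that the orientation argument really excludes odd cycles. If it failed, I would fall back on \cref{lem:double-ladder-matching}-style greedy removal, which only guarantees a factor of $3$.
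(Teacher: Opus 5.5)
Your argument is correct, and it takes a different route from the paper's. The paper also starts from the coincidences $a_i=b_j$, which force $i>j$. For any two of them, $a_i=b_j$ and $a_{i'}=b_{j'}$, it then uses whether $a_ia_{i'}\in E(G)$ to derive interleaving constraints: either $i'>j$ and $i>j'$, or $i'\le j$ and $i\le j'$. From these it bounds the number of coincidence pairs by $\lceil \ell_1/2\rceil$ via an induction on $\ell_1$ with an exchange step, and the removal of one index per pair is left implicit. You use only injectivity on each side together with $a^i=b^j\Rightarrow i>j$. This gives a conflict graph on $[\ell_1]$ that is a disjoint union of paths, and you take alternate vertices. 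Your route avoids the induction and the exchange argument, and it gives you the surviving index set $X$ directly with $|X|\ge\lceil\ell_1/2\rceil$, which is exactly what $\ell_1\le 2\ell_2$ needs. By contrast, deleting one index per coincidence from a count of $\lceil\ell_1/2\rceil$ would leave only $\lfloor\ell_1/2\rfloor$ indices when $\ell_1$ is odd.

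There are two points to tighten.

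First, an index-increasing orientation does not by itself exclude undirected cycles: the transitive triangle $1\to2$, $2\to3$, $1\to3$ is acyclic. What you need, and what your ``monotone paths'' tacitly uses, is that the coincidence through $a^i$ is always an in-edge of $i$ and the one through $b^i$ is always an out-edge. Then in- and out-degrees are at most $1$, so any undirected cycle would be a directed one.

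Second, your remark that $M$ is a matching on the index set is not justified by what you wrote, since you only show degree at most $2$. It is nevertheless true, and it reflects the extra ladder structure the paper exploits. Suppose $a^t=b^j$ and $b^t=a^{i}$. Then $i>t>j$, and $a^{i}\neq b^j$ because $a^t\neq b^t$. So $\dom(a^{i};b^j)$, which the ladder requires, would need $a^tb^t\in E(G)$, contradicting the diagonal non-edge. With this, every path has at most two vertices, and you can simply delete one endpoint of each coincidence.
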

\begin{proof}
    Since a semi-induced ladder in $G$ forms a ladder in $\dom(G)$, we have
    $\ell_2\leq \ell_1$.

    On the other hand, consider a ladder
    $(a_1,b_1),\dots,(a_{\ell_1},b_{\ell_1})$ in $\dom(G)$. We first
    consider the possible set of vertices that appear in both parts of the
    ladder. Suppose $a_i=b_j$ and $a_{i'}=b_{j'}$. We obviously have $i>j$
    and $i'>j'$. Moreover, if $a_ia_{i'}\in E(G)$, then we have $i'>j$ and
    $i>j'$, and if $a_ia_{i'}\notin E(G)$, then we have $i'\leq j$ and
    $i\leq j'$. We prove that if any $(i,j),(i'j')\in S\subseteq [\ell_1]^2$
    satisfies the conditions above, then $\Abs{S}\leq
        \Ceil{\frac{\ell_1}{2}}$.

    We proceed by induction on $\ell_1$. Note that $(1,j)$ and $(i,\ell_1)$
    never appears in $S$ for $i,j\in[\ell_1]$. If $\ell_1=0$ or $\ell_1=1$,
    $\Abs{S}=0\leq\Ceil{\frac{\ell_1}{2}}$ holds. For $\ell_1>1$, consider
    the vertices If $(i,1)\notin S$ for each $i\in[\ell_1]$,
    $(\ell_1,j)\notin S$ for each $j\in[\ell_1]$, or $(\ell_1,1)\in S$, then
    $\Abs{S}\leq 1+\Ceil{\frac{\ell_1-2}{2}}=\Ceil{\frac{\ell_1}{2}}$, and
    the statement still holds. Finally, consider the case where
    $(t_1,1),(\ell_1,t_2)\in S$ for some $t_1,t_2\in\Set{2,\dots,\ell_1-1}$.
    Then, we have $1<\ell_1$, and thus we have $t_2<t_1$. For any
    $(i'',j'')\in S$, since $1\leq j''<i''\leq \ell_1$, we must have
    $t_2<i''$ and $t_1>j''$. Thus, $S'=S\setminus
        \Set{(t_1,1),(\ell_1,t_2)}\cup \Set{(\ell_1,1),(t_1,t_2)}$ is also a
    valid set. By the induction hypothesis, $\Abs{S}=1+\Abs{S\setminus
            \Set{(t_1,1),(\ell_1,t_2)}\cup \Set{(t_1,t_2)}}\leq
        1+\Ceil{\frac{\ell_1-2}{2}}=\Ceil{\frac{\ell_1}{2}}$, which completes
    the proof.
\end{proof}

As a corollary, every semi-ladder-free graph class has bounded
$\dom^k$-semi-ladder index. The proof follows from
\cref{lem:ladder-semi-induced-delta_1,lem:co-matching-semi-induced-delta_1}
and \cite[Lemmas 2 and 4]{FabianskiPST19}.
\begin{corollary}\label{cor:semi-ladder-free-delta_1^k} Let $\C$ be a
    semi-ladder-free class. Then, $\dom^k(\C)$ has bounded semi-ladder
    index. In particular, if $\C$ forbids semi-ladders of order $\ell$ as
    semi-induced subgraphs, then $\dom^k(\C)$ has semi-ladder index less
    than $R^{2k}(2\ell)$.
\end{corollary}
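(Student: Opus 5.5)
The plan is to combine \cite[Lemmas 2 and 4]{FabianskiPST19} with \cref{lem:ladder-semi-induced-delta_1,lem:co-matching-semi-induced-delta_1}. These two lemmas let us move from semi-induced structures in $G$ to structures in $\dom(G)$. I recall \cite[Lemma 2]{FabianskiPST19} as the Ramsey-type fact that a bipartite graph with semi-ladder index at least $R(\ell)$ contains either a ladder or a co-matching of order $\ell$. Concretely, in a semi-ladder $(a_i,b_i)$ we color each pair $i<j$ by whether $(a_i,b_j)\in E$. A monochromatic set of size $\ell$ then gives a ladder (colour ``non-edge'') or a co-matching (colour ``edge''). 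I recall \cite[Lemma 4]{FabianskiPST19} as the lifting from $\dom$ to $\dom^k$: if $\dom(G)$ has semi-ladder index less than $m$, then $\dom^k(G)$ has semi-ladder index less than $R^{2k}(m)$ or a similar bound.

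First, fix $\ell$ such that $\C$ excludes semi-induced semi-ladders of order $\ell$. Every semi-induced ladder and every semi-induced co-matching of order $\ell$ is a semi-induced semi-ladder of order $\ell$, so $G\in\C$ has no semi-induced ladder or co-matching of order $\ell$. \Cref{lem:ladder-semi-induced-delta_1,lem:co-matching-semi-induced-delta_1} then give that $\dom(G)$ has ladder index and co-matching index less than $2\ell$.

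Second, I would run the Ramsey argument directly on $\dom^k(G)$, which avoids passing through $\dom$-semi-ladder index. Take a semi-ladder $(\va^i,b^i)_{i\in[q]}$ in $\dom^k(G)$ with $q=R^{2k}(2\ell)$. For each pair $i>j$ we know $\dom^k(\va^i;b^j)$ holds, so pick $p\in[k]$ with $\dom(a^i_p;b^j)$. Colour the pair by $p$ together with one bit recording whether $\dom(a^j_p;b^i)$ holds, which uses $2k$ colours. A monochromatic set $X$ of size $2\ell$ with colour $(p,\text{bit})$ gives, via $(a^i_p,b^i)_{i\in X}$, edges for all $i>j$ and non-edges on the diagonal, because $\neg\dom^k(\va^i;b^i)$ implies $\neg\dom(a^i_p;b^i)$. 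The bit decides the remaining entries $i<j$ uniformly. If the bit is ``edge'' we get a co-matching of order $2\ell$ in $\dom(G)$, and if it is ``non-edge'' we get a ladder of order $2\ell$. Both contradict the first step, so the semi-ladder index of $\dom^k(G)$ is less than $R^{2k}(2\ell)$.

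The main point to check is the diagonal and ordering bookkeeping. We need the chosen index $p$ to witness $i>j$ consistently, and we need the order convention of ladders ($i>j$ edges) to match the semi-ladder definition. Beyond that, the only care needed is that the lemmas relating $\dom$-indices to semi-induced ones go in the direction used here, namely $\ell_1\le 2\ell_2$.
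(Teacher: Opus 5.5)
Your proof is correct and follows the paper's route. \cref{lem:ladder-semi-induced-delta_1,lem:co-matching-semi-induced-delta_1} bound the $\dom$-ladder and $\dom$-co-matching indices by $2\ell-2$, and the Ramsey step from $\dom^k(G)$ down to $\dom(G)$ is the step the paper delegates to \cite[Lemmas~2 and~4]{FabianskiPST19}; you instead inline it as a single $2k$-colouring (the dominating coordinate $p$ plus one bit for the entry above the diagonal), which makes the argument self-contained and gives exactly the stated bound $R^{2k}(2\ell)$ in one application of \cref{thm:ramsey}.
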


Similarly,
every ladder-free graph class has bounded $\dom^k$-ladder index.
The proof follows from \cref{lem:ladder-semi-induced-delta_1} and
the fact that stability is closed under arbitrary Boolean combinations.
\begin{corollary}\label{cor:ladder-free-delta_1^k} Let $\C$ be a
    ladder-free class. Then, $\dom^k(\C)$ has bounded ladder
    index.
\end{corollary}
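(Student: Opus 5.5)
The statement to prove is \cref{cor:ladder-free-delta_1^k}: if $\C$ is ladder-free, then $\dom^k(\C)$ has bounded ladder index. The plan has two steps. First, transfer ladder-freeness of $\C$ to bounded ladder index of $\dom(\C)$. Second, lift this to the $k$-fold disjunction $\dom^k$, using that stability is preserved under Boolean combinations. I would not argue model-theoretically; instead I would run a direct Ramsey argument, mirroring \cref{lem:co-matching-Ramsey}.

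\textbf{Step 1 (transfer to $\dom(\C)$).} Suppose every $G\in\C$ excludes semi-induced ladders of order $\ell$. By \cref{lem:ladder-semi-induced-delta_1}, the ladder index of $\dom(G)$ is at most twice the largest semi-induced ladder in $G$. Hence it is less than $2\ell$, uniformly over $G\in\C$.

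\textbf{Step 2 (lift to $\dom^k$).} Suppose $(\va^1,b^1),\dots,(\va^q,b^q)$ is a ladder in $\dom^k(G)$. Then $\dom^k(\va^i;b^j)$ holds if and only if $i>j$.
\begin{itemize}
\item For each pair $i>j$, choose $p\in[k]$ with $G\models\dom(a^i_p;b^j)$, and colour $(i,j)$ by $p$.
\item By \cref{thm:ramsey} with $k$ colours, there is a monochromatic set $X$ of size $m$ with some colour $p$, provided $q\ge R^k(m)$.
\item On $X$ we then have $\dom(a^i_p;b^j)$ for all $i>j$. Also $\neg\dom(a^i_p;b^j)$ for all $i\le j$, because the whole disjunction $\dom^k(\va^i;b^j)$ fails there.
\end{itemize}
So $(a^i_p,b^i)_{i\in X}$ is a ladder of order $m$ in $\dom(G)$. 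Taking $m=2\ell$ gives a contradiction with Step 1. The conclusion is that the ladder index of $\dom^k(G)$ is less than $R^k(2\ell)$.

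\textbf{Main obstacle.} One might expect the negative side of the ladder to be the hard part, but it is free: a non-edge of $\dom^k$ forces every coordinate to be a non-edge. Only the positive edges need colouring, so the step that looked hardest collapses. The remaining care points are in Step 1. The ladder extracted in $\dom(G)$ may reuse vertices across the two sides, and that is exactly what \cref{lem:ladder-semi-induced-delta_1} accounts for. Applying Ramsey requires $k\ge2$, and for $k=1$ the claim is just Step 1.
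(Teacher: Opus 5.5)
Your proof is correct, and it reaches the result by a different route from the paper in its second step.

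Your Step 1 is the same as the paper's: it applies \cref{lem:ladder-semi-induced-delta_1} to get ladder index below $2\ell$ in $\dom(G)$. For Step 2 the paper gives no argument. It invokes the model-theoretic fact that stability, i.e.\ bounded ladder index, is preserved under arbitrary Boolean combinations, applied to the disjunction $\bigvee_{p}\dom(x^p;y)$.

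You replace that black box with an explicit $k$-colour Ramsey argument in the style of \cref{lem:co-matching-Ramsey}. Your observation that the non-edges of a ladder in $\dom^k(G)$ are non-edges in every coordinate is exactly why one colour per pair suffices. The co-matching lemma needs colour pairs $(p_1,p_2)$ because there both orientations of a pair are edges, whereas in a ladder only the $i>j$ orientation is.

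\textbf{What your route buys.} It is self-contained and gives an explicit single-exponential bound, ladder index below $R^k(2\ell)\le k^{2k\ell-1}$. This parallels the quantitative form of \cref{cor:semi-ladder-free-delta_1^k}.

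\textbf{What the paper's route buys.} The cited closure property covers any Boolean combination, including conjunctions and negations. There the negative side of the ladder is no longer free, and a direct proof would need a more elaborate colouring by the truth values of all atoms. The price is that the bound is not explicit as stated and the argument relies on outside model theory.
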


\paragraph{Hardness of (Double)-Ladder-Free and Co-Matching-Free Classes}
We conclude the characterization with a remark on the hardness of
\textsc{Dominating Set}.
On the agreement graph,
the authors in \cite{FabianskiPST19} show via the
Semi-Ladder Algorithm that having bounded semi-ladder index
suffices for an fpt algorithm for \textsc{Dominating Set},
which is equivalent to having both bounded ladder index and co-matching index
in the agreement graph.
We extend this result by showing that it suffices to have bounded co-matching
index on the agreement graph by introducing the Co-Matching Algorithm.
With the characterization in
\cref{cor:semi-ladder-free-delta_1^k,thm:characterize-co-matching-index},
we see that the Semi-Ladder Algorithm works on classes that are both
ladder-free and co-matching-free,
while the Co-Matching Algorithm relaxes the ladder-free condition
with the weaker notion of double-ladder-free.

These results are tight in the following sense:
As identified in \cite{DreierMS26},
if one drops the condition of
(double-)ladder-freeness or co-matching-freeness,
i.e., only forbids co-matchings
or (double-)ladders as semi-induced subgraphs,
then there exists such a class
where \textsc{Dominating Set} is $\W{1}$-hard.
\begin{proposition}[{\cite[Theorem~5 and Corollary~2]{DreierMS26}}]
    \label{prop:domset-hardness}
    There exists a
    co-matching-free graph class and a ladder-free graph class on which the
    \textnormal{\textsc{Dominating Set}} problem is $\W{1}$-hard.
\end{proposition}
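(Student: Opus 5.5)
The plan is to give one witness class for each half of the statement. Each class comes from a known $\W{1}$-hardness reduction for \textsc{Dominating Set}, followed by a bound on the relevant obstruction index of the produced graphs. By \cref{lem:co-matching-semi-induced-delta_1,lem:ladder-semi-induced-delta_1}, it does not matter whether we bound semi-induced obstructions in $G$ or obstructions in $\dom(G)$; the two indices differ by at most a factor of $2$. So in each case it suffices to show that the produced graphs avoid semi-induced co-matchings, respectively ladders, of some fixed order $\ell$.

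\textbf{Co-matching-free class.} I would take the unit disk graphs (or unit square graphs), on which \textsc{Dominating Set} is $\W{1}$-hard~\cite{Marx06}. The key step is a geometric packing argument showing that a semi-induced co-matching $(a_1,b_1),\dots,(a_n,b_n)$ has bounded order.
\begin{itemize}
  \item Every $a_i$ lies within distance $2$ of all $b_j$ with $j\neq i$.
  \item Hence all $a_i$ and all $b_j$ lie in a region of bounded diameter.
  \item Within that region, I would partition the plane into cells of diameter below $2$. Any two points in the same cell are adjacent.
  \item By pigeonhole, for large $n$ many indices $i$ have $a_i$ and $b_i$ in cells that are adjacent as cells, i.e., every point of one is close to every point of the other.
  \item That contradicts the required non-edge $a_ib_i$ once one also uses the constraints coming from the other $b_j$.
\end{itemize}
If the direct packing argument is messy, I would first try the one-dimensional analogue (unit interval graphs), where a co-matching of order $3$ already fails by an ordering argument, and then lift it via the grid.

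\textbf{Ladder-free class.} Geometric intersection graphs contain long half-graphs, so here I would instead use the graphs produced by the standard reduction from \textsc{Multicolored Clique}. That reduction has the following ingredients.
\begin{itemize}
  \item Each colour class $V_i$ and each pair class $E_{ij}$ becomes a selection gadget built from a co-matching: a vertex set $S$ plus private guard vertices, each adjacent to all of $S$ except its partner. Selecting one vertex per gadget is then forced.
  \item Consistency between a chosen edge and its endpoints is enforced by further co-matching-type connections indexed by vertex identities.
\end{itemize}
Co-matchings and matchings have ladder index at most $2$. The produced graph is obtained from boundedly many such pieces by a bounded number of unions and edge flips between parts, and stability is closed under such Boolean combinations (cf.\ \cref{cor:ladder-free-delta_1^k}). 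From this I would conclude a uniform bound on the ladder index.

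\textbf{Expected obstacle.} The main difficulty is this last step. In a naive reduction the consistency edges between $V_i$ and $E_{ij}$ follow an arbitrary bipartite pattern, and such a pattern may contain long ladders. I would therefore encode each vertex identity as a co-matching between ``identity'' vertices rather than by a threshold comparison, so that every interaction between two parts is a disjoint union of (co-)matchings. One must also check that the parameter stays $f(k)$. Failing this, I would fall back on citing the verified bounds of \cite[Theorem~5 and Corollary~2]{DreierMS26}.
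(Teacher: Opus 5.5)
The paper gives no proof of its own here; it imports the result from \cite{DreierMS26}, who bound the co-matching and ladder indices of known $\W{1}$-hard classes. Your attempt to reprove it fails in the first half and is incomplete in the second.

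Unit disk graphs are \emph{not} co-matching-free. Take $a_i=(\cos\theta_i,\sin\theta_i)$ with $0<\theta_1<\dots<\theta_n<\pi$, pairwise at least $\epsilon$ apart, and set $b_i=-a_i$. Then $|a_i-b_j|=|a_i+a_j|=2|\cos((\theta_i-\theta_j)/2)|$. This equals $2$ for $i=j$ and is at most $2\cos(\epsilon/2)<2$ for $i\neq j$. So, after rescaling, any adjacency threshold in $[2\cos(\epsilon/2),2)$ gives a semi-induced co-matching of order $n$ between points in the open upper and lower half-planes.

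The broken step is your pigeonhole. It yields many indices with $(a_i,b_i)$ in one fixed pair of cells, but nothing forces these two cells to be mutually close. They cannot be, since $a_ib_i$ is a non-edge. A sub-co-matching is again a co-matching, so no contradiction appears.

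Your parenthetical alternative, unit square graphs (also covered by \cite{Marx06}), does work, but by a coordinatewise ordering argument rather than packing. Each non-edge $a_ib_i$ is witnessed by one of four strict inequalities, such as $x(a_i)>x(b_i)+1$. If $i\neq i'$ share the same one, then $a_i\sim b_{i'}$ forces $x(b_i)<x(b_{i'})$, while $a_{i'}\sim b_i$ forces the reverse. Hence the co-matching index is at most $4$. This lifting of the one-dimensional argument works precisely because $L_\infty$-non-adjacency is a disjunction over coordinates, which fails for Euclidean disks.

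For the ladder-free half you give a plan, not a proof, and the justification offered would not yield it. Ladder-freeness needs a bound independent of $k$. A \textsc{Multicolored Clique} reduction has $\Theta(k^2)$ parts, and closure of stability under Boolean combinations bounds the ladder index only in terms of the number of relations combined.

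The invariant that every interaction between two parts is a (co-)matching does not rescue this. Take parts $P_1,\dots,P_m,Q_1,\dots,Q_m$ of size $2$, with designated $a_i\in P_i$ and $b_j\in Q_j$. For $i>j$, join $P_i$ to $Q_j$ by a co-matching in which $a_i$'s partner is not $b_j$. For $i\le j$, join them by a matching in which $a_i$'s partner is not $b_j$. This yields a semi-induced ladder of order $m$.

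So a uniform ladder bound needs a global analysis of one specific construction. That is exactly what \cite[Theorem~5 and Corollary~2]{DreierMS26} supplies, and what both your fallback and the paper rely on.
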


Moreover,
this implies that \textsc{Dominating Set} is hard on classes of bounded $\dom^k$-ladder index using \cref{cor:ladder-free-delta_1^k}.
Hence,
there is no hope to extend the progressive exploration framework to classes of bounded
$\dom^k$-ladder index, as we did for bounded $\dom^k$-co-matching index.

\begin{remark}\label{rem:dom-k-ladder-hardness}
    There exists a class of bounded $\dom^k$-ladder index
    on which \textsc{Dominating Set} is $\W{1}$-hard.
\end{remark}

\subsection{VC-Dimension and Neighborhood Complexity of Co-Matching-Free Classes}

We show now that we can improve the bound on the neighborhood complexity by analyzing the \emph{VC-dimension} of set systems of co-matching-free graph
classes.
A \emph{set system} $(U,\F)$ consists of a universe $U$ and a collection
$\F$ of subsets of $U$. The \emph{Vapnik-Chervonenkis dimension}
(\emph{VC-dimension}) of a set system $(U,\F)$ is the largest subset of $U$
that can be \emph{shattered} by $\F$. That is, the largest $A\subseteq U$
such that for every $A'\subseteq A$, there exists some $S\in \F$ such that
$A'=A\cap S$. The VC-dimension of a graph $G$ is the VC-dimension of the set
system induced by the close neighborhood of its vertices, i.e., $U=V(G)$ and
$\F=\Set{N[v]\given v\in V(G)}$. Note that the incidence graph of such a set
system is isomorphic to $\dom(G)$. We show that the VC-dimension of $G$ can
be bounded by the size of forbidden structures in $\dom(G)$.

\begin{lemma}\label{lem:delta_1-VC-dimension} Let $G$ be a graph and $H$ be
    a bipartite graph with vertices $A\cup B$. If $N(a_1)\neq N(a_2)$ for
    any distinct $a_1,a_2\in A$, and $H$ is not an induced subgraph of
    $\dom(G)$, then the VC-dimension of $G$ is at most $\Abs{B}-1$.
\end{lemma}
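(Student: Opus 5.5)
We prove the contrapositive: if $G$ has VC-dimension at least $\Abs{B}$, then $H$ occurs as an induced subgraph of $\dom(G)$, with $A$ embedded into the candidate side and $B$ into the witness side. So suppose some set $S\subseteq V(G)$ with $\Abs{S}=\Abs{B}$ is shattered by $\F=\Set{N[v]\given v\in V(G)}$.

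\textbf{Step 1: embed $B$.} Fix any bijection $\beta\colon B\to S$. We place the vertices of $B$ on the witness side of $\dom(G)$ by $b\mapsto \beta(b)$. This map is injective since $\beta$ is a bijection.

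\textbf{Step 2: embed $A$.} For each $a\in A$, consider the target trace $S_a\coloneq\beta(N_H(a))\subseteq S$. Since $S$ is shattered, there is a vertex $v_a\in V(G)$ with $N[v_a]\cap S=S_a$. Place $a$ on the candidate side as $v_a$. Then, for every $b\in B$, the pair $v_a\beta(b)$ is an edge of $\dom(G)$ iff $\beta(b)\in N[v_a]$, iff $\beta(b)\in S_a$, iff $ab\in E(H)$.

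\textbf{Step 3: injectivity on $A$.} This is the one point that needs the hypothesis. Suppose $a_1\neq a_2$ in $A$. By assumption $N_H(a_1)\neq N_H(a_2)$, so $S_{a_1}\neq S_{a_2}$. Hence $N[v_{a_1}]\cap S\neq N[v_{a_2}]\cap S$, which forces $v_{a_1}\neq v_{a_2}$.

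\textbf{Conclusion.} The images of $A$ and $B$ lie on opposite sides of the bipartite graph $\dom(G)$, so they are automatically disjoint. The edges between them are exactly the images of the edges of $H$, and there are no edges inside either side. Therefore the subgraph of $\dom(G)$ induced on $\Set{v_a\given a\in A}\cup S$ is isomorphic to $H$, contradicting the assumption. So every shattered set has size less than $\Abs{B}$, i.e., the VC-dimension of $G$ is at most $\Abs{B}-1$.

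The only delicate point is Step 3, making sure distinct vertices of $A$ are not realized by the same vertex of $G$; the distinct-neighborhood hypothesis handles this directly. If $A$ happens to be larger than $2^{\Abs{B}}$, the hypothesis cannot hold, so that case does not arise.
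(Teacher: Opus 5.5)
Your proof is correct and matches the paper's argument: fix a bijection from $B$ onto a shattered set $S$, realize each $a\in A$ by a vertex whose closed-neighborhood trace on $S$ is the image of $N_H(a)$, and use the distinct-neighborhood hypothesis to get injectivity on $A$. Your remark that the two sides of $\dom(G)$ are disjoint copies of $V(G)$ makes explicit a point the paper leaves implicit.
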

\begin{proof}
    Suppose that a set $S\subseteq V(G)$ of size $\Abs{B}$ is shattered. Let
    $f:B\to S$ be an arbitrary bijection, and for each $S'\subseteq S$, let
    $v_{S'}\in V(G)$ be such that $N[v_{S'}]\cap S=S'$. Note that for any
    $a_1\neq a_2$ in $A$, we have $v_{f(N(a_1))}\neq v_{f(N(a_2))}$. Thus,
    by mapping each $a\in A$ to $v_{f(N(a))}$ and each $b\in B$ to $f(b)$,
    we have $f(b)\in N[v_{f(N(a))}]$ if and only if $ab\in E(H)$ for any
    $(a,b)\in A\times B$, i.e., $\dom(G)$ contains $H$ as an induced
    subgraph.
\end{proof}

Using the Sauer-Shelah lemma, we obtain a polynomial bound on the neighborhood
complexity of graphs of bounded $\dom$-co-matching index.
\begin{corollary}\label{cor:delta_1-co-matching-VC-dimension} Every graph
    $G$ with $\dom$-co-matching index at most $\ell$ has VC-dimension at
    most $\ell$, and therefore a neighborhood complexity of
    $\nu^G(m)=O(m^{\ell})$.
\end{corollary}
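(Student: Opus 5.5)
The plan is to apply \cref{lem:delta_1-VC-dimension} with $H$ chosen as the co-matching of order $\ell+1$, and then finish with the Sauer--Shelah lemma.

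\textbf{Choice of $H$.} Let $H$ have parts $A=\Set{a_1,\dots,a_{\ell+1}}$ and $B=\Set{b_1,\dots,b_{\ell+1}}$, with $a_ib_j\in E(H)$ if and only if $i\neq j$. Then $\Abs{B}=\ell+1$. We check the two hypotheses of \cref{lem:delta_1-VC-dimension}.
\begin{itemize}
    \item \emph{Distinct neighborhoods.} We have $N(a_i)=B\setminus\Set{b_i}$. These sets are pairwise distinct whenever $\ell+1\geq 2$. If $\ell=0$, then $A$ has a single vertex and the condition holds vacuously.
    \item \emph{$H$ does not embed.} The proof of \cref{lem:delta_1-VC-dimension} places $A$ on the candidate side and $B$ on the witness side of $\dom(G)$. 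A copy of $H$ embedded in this way is precisely a co-matching of order $\ell+1$ in $\dom(G)$, which contradicts the assumption that the $\dom$-co-matching index is at most $\ell$.
\end{itemize}
It follows that the VC-dimension of $G$ is at most $\Abs{B}-1=\ell$.

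\textbf{Neighborhood complexity.} For a set $S$ with $\Abs{S}=m$, the family $\mathcal P^G(S)$ is exactly the trace $\Set{N[v]\cap S\given v\in V(G)}$ of the set system $(V(G),\Set{N[v]\given v\in V(G)})$ on $S$. Restricting this system to $S$ cannot increase its VC-dimension, so the trace has VC-dimension at most $\ell$. By the Sauer--Shelah lemma,
\[
    \Abs{\mathcal P^G(S)}\leq \sum_{i=0}^{\ell}\binom{m}{i}=O(m^{\ell}).
\]
Taking the maximum over all $S$ of size at most $m$ gives $\nu^G(m)=O(m^\ell)$.

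\textbf{Main point of care.} The delicate step is the second hypothesis, namely matching the notion of ``induced subgraph'' in \cref{lem:delta_1-VC-dimension} with the side-respecting co-matching index. This works because the lemma's proof only ever produces side-respecting embeddings. If one insisted instead on a fully side-agnostic notion of induced subgraph, the symmetric placement (with $A$ on the witness side) would still be a co-matching after transposing roles. However, the direct reading of the proof already suffices, so this alternative is not needed.
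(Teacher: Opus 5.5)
Your proposal is correct and takes essentially the same route as the paper: you instantiate \cref{lem:delta_1-VC-dimension} with $H$ the co-matching of order $\ell+1$, whose left vertices have pairwise distinct neighborhoods since $N(a_i)\setminus N(a_j)=\Set{b_j}$, conclude VC-dimension at most $\ell$, and finish with Sauer--Shelah. Your extra care about the side-respecting embedding, the case $\ell=0$, and passing to the trace on $S$ makes explicit what the paper leaves implicit.
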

\begin{proof}
    Let $(u^1,v^1),\dots,(u^n,v^n)$ be a co-matching. Note that
    $N(u^i)-N(u^j)=\Set{v^j}\neq\emptyset$ for $i\neq j$. Therefore, by
    \cref{lem:delta_1-VC-dimension}, the VC-dimension of $G$ is at most
    $\ell$.
\end{proof}
Therefore, every graph class of bounded $\dom$-co-matching index has a
polynomial neighborhood complexity,
which can be used to tighten the running time of
the Co-Matching Algorithm,
as formally stated in \cref{cor:runtime-class}.

\CorRuntimeClass

\begin{proof}
    We use the trivial bound that the $\dom^k$-semi-ladder index $\lambda$
    is at most $\Abs{G}$.
    For the $\dom^k$-co-matching index $\mu$,
    by \cref{lem:co-matching-Ramsey,lem:co-matching-semi-induced-delta_1,lem:double-ladder-semi-induced-delta_1},
    we have $\mu<R^{k^2}(3\ell+1)=k^{O(k^2\ell)}$.
    For the neighborhood complexity,
    by \cref{cor:delta_1-co-matching-VC-dimension},
    we have $\nu^G(m)=O(m^\ell)$,
    and thus $\nu^G(\mu+1)^k=\mu^O(k\ell)=k^{O(k^3\ell^2)}$.
    It follows that by \cref{thm:co-matching-algo-runtime},
    the running time of the Co-Matching Algorithm is
    \[
        O\Paren*{\Norm{G} + \lambda\cdot k\cdot\Abs{G} +
            \lambda\cdot\mu\cdot\Abs{G}+ \lambda\cdot\mu\cdot\nu^G(\mu+1)^k}
        =O\Paren*{k^{O(\ell k^2)}\cdot\Abs{G}^2+
            k^{O(\ell^2 k^3)}\cdot\Abs{G}}\,.\qedhere
    \]
\end{proof}

\subsection{\texorpdfstring{$\dom^k$}{dom\^k}-Co-Matching Index of Weakly \texorpdfstring{$\gamma$}{γ}-Closed Graphs}
We consider here the notion of \emph{weakly $\gamma$-closed} graphs.
The notion of (weak) closure derives from the triadic closure principle in
social networks~\cite{fox20}.

\begin{definition}[$c$-closed and weakly $\gamma$-closed \cite{LokshtanovS21}] \label{def:closed}
    Given an integer $c\geq 1$,
    a graph $G$ is \emph{$c$-closed} if for every $uv \notin E(G)$, the number
    of common neighbors of $u$ and $v$ is less than $c$, i.e.,
    $\Abs{N(u) \cap N(v)} < c$.

    Given an integer $\gamma\geq 1$,
    a graph $G$ is \emph{weakly $\gamma$-closed} if for every
    induced subgraph $H$ of $G$, there exists a vertex $v \in V(H)$ such that
    for every $u \in V(H)$ with $uv \notin E(H)$,
    it holds that $\Abs{N_H(u)\cap N_H(v)} < \gamma$.
    A class $\C$ of graphs has \emph{bounded weak closure}
    if there exists a constant $\gamma$ such that every graph in $\C$
    is weakly $\gamma$-closed.
\end{definition}
While $c$-closed graphs generalize graphs of bounded degree,
weakly $\gamma$-closed graphs generalize graphs of bounded degeneracy.

In \cite{LokshtanovS21}, it is proven that the $\dom^k$-co-matching index,
i.e., the maximum size of all minimal $k$-domination cores
(see \cref{sec:threshold-set}),
of a weakly $\gamma$-closed graph is at most
$((\gamma-1)k+1)R^{3^{15}k^2}(3\gamma)$.
Using the characterization from \cref{thm:characterize-co-matching-index},
we can easily reprove the result and improve the bound to
$R^{k^2}(3\gamma)$.

We start by bounding the $\dom$-co-matching and $\dom$-double-ladder index
of weakly $\gamma$-closed graphs,
as formally stated in
\cref{lem:weakly-closed-co-matching,lem:weakly-closed-double-ladder}.

\begin{lemma}\label{lem:weakly-closed-co-matching}
    Let $G$ be a weakly $\gamma$-closed graph. Then, $\dom(G)$ has co-matching
    index smaller than $\gamma+2$.
\end{lemma}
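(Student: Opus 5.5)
The plan is to argue by contradiction directly from the definition of weak closure, without passing through semi-induced structures, since \cref{lem:co-matching-semi-induced-delta_1} would lose a factor of $2$. Suppose $\dom(G)$ contains a co-matching $(a_1,b_1),\dots,(a_n,b_n)$ with $n=\gamma+2$. That is, $b_j\in N[a_i]$ if and only if $i\neq j$. Let $H$ be the subgraph of $G$ induced by all vertices $a_i,b_i$; some of these may coincide. By weak $\gamma$-closedness there is a vertex $v\in V(H)$ such that every $u\in V(H)$ with $uv\notin E(H)$ (and $u\neq v$) satisfies $\Abs{N_H(u)\cap N_H(v)}<\gamma$. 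I will show that no vertex of $H$ can play the role of $v$.

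\textbf{Step 1 (cliques on each side).} Take $i\neq j$ with $a_i\neq a_j$ and $a_ia_j\notin E(G)$. Then every $b_l$ with $l\notin\Set{i,j}$ lies in $N[a_i]\cap N[a_j]$. These are $n-2=\gamma$ distinct vertices. To see that each is a genuine common neighbour, I must rule out $b_l\in\Set{a_i,a_j}$. For example, $b_l=a_i$ would force $a_i=b_l\in N[a_j]$, contradicting the assumed non-adjacency. So the pair $a_i,a_j$ has at least $\gamma$ common neighbours. The symmetric argument applies to two non-adjacent $b_i,b_j$, using the $a_l$ with $l\notin\Set{i,j}$.

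\textbf{Step 2 (the chosen vertex).} Suppose first that $v=a_i$. Step 1 shows that $a_i$ is adjacent to every other $a_j$; otherwise $a_j$ would be a non-neighbour of $v$ with at least $\gamma$ common neighbours. Now consider $u=b_i$. Since $b_i\notin N[a_i]$, we have $u\neq v$ and $uv\notin E$. Every $a_j$ with $j\neq i$ is adjacent to $a_i$ and satisfies $b_i\in N[a_j]$. Hence the $a_j$ lie in $N[a_i]\cap N[b_i]$, giving $n-1=\gamma+1$ candidates. At most one of them can coincide with $b_i$ (if $a_j=b_i$, then $a_j$ is not adjacent to $a_i$, which is already excluded), and none equals $a_i$. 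So $a_i$ and $b_i$ have at least $\gamma$ common neighbours in $H$, a contradiction. The case $v=b_i$ is symmetric: $b_i$ is adjacent to all other $b_j$, and then $u=a_i$ has the $b_j$ with $j\neq i$ as at least $\gamma$ common neighbours. Since every vertex of $H$ is some $a_i$ or some $b_i$, no valid $v$ exists, contradicting weak $\gamma$-closedness. This shows the co-matching index of $\dom(G)$ is smaller than $\gamma+2$.

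\textbf{Main obstacle.} The delicate part is the bookkeeping of coincident vertices, such as $a_j=b_l$, which the definition of a co-matching in $\dom(G)$ permits. Each time I claim $\gamma$ common neighbours, I must check that the listed vertices are distinct from the two endpoints and from each other, and that adjacency rather than mere equality holds. I expect the closed-neighbourhood constraints ($b_j\in N[a_i]$ iff $i\neq j$), together with the adjacencies already forced in Step 1, to exclude every problematic coincidence, as in the two checks sketched above. If a residual case survives, it costs at most one vertex, and the slack of $n-1=\gamma+1$ candidates in Step 2 absorbs it.
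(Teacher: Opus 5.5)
Your proof is correct and follows the paper's argument. In $G[C\cup W]$, each $a_i$ either has a non-neighbour $a_j$ sharing the $\gamma$ vertices $b_l$ ($l\neq i,j$), or is adjacent to all other $a_j$ and then shares these $\gamma+1$ vertices with its non-neighbour $b_i$; the argument for each $b_i$ is symmetric. Your coincidence checks are complete, and your parenthetical in Step 2 shows that no $a_j$ can equal $b_i$, so you get $\gamma+1$ common neighbours there and the hedging in your closing paragraph is unnecessary.
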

\begin{proof}
    Let $G$ be a weakly $\gamma$-closed graph,
    and let $\dom(G)=(C^*,W^*,E)$ be its agreement graph (with $k=1$).
    Suppose $C\coloneq\Set{c^1,\dots,c^{\gamma+2}}\subseteq C^*$ and
    $W\coloneq\Set{w^1,\dots,w^{\gamma+2}}\subseteq W^*$ be such that
    the sequence $(c^1,w^1),\dots,(c^{\gamma+2},w^{\gamma+2})$ forms a
    co-matching of order $\gamma+2$ in $\dom(G)$.
    That is, $c^iw^j\in E$ if and only if $i\neq j$.
    Consider the induced subgraph $G'\coloneq G[C\cup W]$.
    We show for each vertex $u$ in $G'$,
    there is a non-adjacent vertex $v$
    that shares at least $\gamma$ neighbors with it.
    In particular,
    we identify a pair of vertices in $\dom(G)$ together with
    a set of at least $\gamma$ common neighbors $S\subseteq N(u)\cap N(v)$.
    Although the same vertex in $G$ may in general
    appear in both $C^*$ and $W^*$,
    since $uv\notin E(G)$,
    they must be distinct from every vertex in $S$.
    It will be clear from our choice of $S$
    that there are no duplicate vertices of $G$ in $S$ either.

    For each $u=c^i$,
    If $C\subseteq N[c^i]$,
    we choose $v\coloneq w^i$ and $S\coloneq A-c^i$.
    We thus have $c^iw^i\notin E$,
    and $\Abs{N(c^i)\cap N(w^i)}\geq \Abs{A-c^i}=\gamma+1$.
    Otherwise,
    we choose $v\coloneq c^j\in C-N[c^i]$ and $S\coloneq W-w^i-w^j$.
    Then, $\Abs{N(c^i)\cap N(c^j)}\geq\Abs{B-w^i-w^j}=\gamma$.

    The case for $u\in W$ is analogous.
\end{proof}

\begin{lemma}\label{lem:weakly-closed-double-ladder}
    Let $G$ be a weakly $\gamma$-closed graph. Then, $\dom(G)$ has
    double-ladder index smaller than $3\gamma$.
\end{lemma}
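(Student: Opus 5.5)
The plan is to argue by contradiction with the definition of weak closure, in the same way as \cref{lem:weakly-closed-co-matching}. Suppose $\dom(G)$ contains a double-ladder $(a^1,b^1,c^1),\dots,(a^{n},b^{n},c^{n})$ of order $n=3\gamma$. In terms of $G$ this means three things: $a^i$ dominates $b^j$ if and only if $i>j$; $c^j$ dominates $b^i$ if and only if $j<i$; and the edges among the $a$'s, among the $c$'s, and between $a$'s and $c$'s are arbitrary. The goal is an induced subgraph $H$ of $G$ in which \emph{every} vertex $u$ has a non-neighbour $v$ with $\Abs{N_H(u)\cap N_H(v)}\geq\gamma$. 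This contradicts weak $\gamma$-closedness, so the double-ladder index is smaller than $3\gamma$.

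The factor $3$ is used by cutting the index range $[3\gamma]$ into three blocks: low $L=[1,\gamma]$, middle $M=[\gamma+1,2\gamma]$ and high $U=[2\gamma+1,3\gamma]$. The basic observation is that for $i\in U$ and $j\in L$, the $\gamma$ vertices $b^m$ with $m\in M$ are dominated both by $a^i$ (since $m<i$) and by $c^j$ (since $j<m$). Before using this I would check that these are distinct vertices and really neighbours rather than the vertices themselves. This goes as in \cref{lem:double-ladder-semi-induced-delta_1}: coincidences $a^i=b^m$ or $c^j=b^m$ are only possible for $i>m$, respectively $j<m$, and in those cases a single vertex is lost, which is absorbed by choosing the blocks with one unit of slack, or excluded by the non-adjacency of the pair under consideration.

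The core is a case analysis in the style of \cref{lem:weakly-closed-co-matching}. I would take $H$ induced on $\Set{a^i\given i\in U}\cup\Set{c^j\given j\in L}\cup\Set{b^m\given m\in[3\gamma]}$.
\begin{itemize}
\item For $u=a^i$ with $i\in U$: if $a^i$ misses some $c^j$ with $j\in L$, take $v=c^j$; the middle $b$'s are then $\gamma$ common neighbours. Otherwise take $v=b^i$, which $a^i$ does not dominate; then all $c^j$ with $j\in L$ dominate $b^i$ and are adjacent to $a^i$, giving $\gamma$ common neighbours.
\item The vertices $c^j$ with $j\in L$ are handled symmetrically, using $v=b^j$ and the common neighbours $a^i$, $i\in U$.
\item For $b$'s with index in $U$ or $L$, I would pair $b^m$ with $a^m$ or with $c^m$, respectively (both non-dominating). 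The common neighbours come from the opposite block, provided the corresponding $a$/$c$ is complete to it.
\end{itemize}

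The main obstacle is the residual case: vertices that are adjacent to almost everything in $H$, in particular the middle $b$'s, and the $b^m$ whose partner misses a vertex of the opposite block. For these there may be no good non-neighbour. My first attempt would be to shrink $H$: drop the middle $b$'s from $H$ and keep them only as common-neighbour witnesses. This does not work directly, because common neighbours must lie in $H$. So the alternative I would try is to choose $H$ iteratively. Start with all $3\gamma$ triples. Whenever some vertex has no suitable non-neighbour, it is adjacent to a large part of the structure, and I would use this to either delete it or restrict the index set to a sub-double-ladder, keeping at least $\gamma$ indices in each block. The point of $3\gamma$ rather than $2\gamma$ is precisely to leave room for this pruning. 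I would expect the final bookkeeping to show that the pruning terminates with a non-empty $H$ in which every vertex has a non-adjacent partner with $\gamma$ common neighbours.
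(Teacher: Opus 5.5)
Your treatment of $a^i$ for $i\in U$ and of $c^j$ for $j\in L$ is correct and coincides with the paper's. Your remark on coincidences is also right for the reason you give last: if $u\neq v$ are not $\dom$-related, any common $\dom$-neighbour differs from both. No slack is needed, and with $n=3\gamma$ there is none.

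However, the proposal does not prove the lemma. The $b$-vertices are left open, and the iterative pruning is a hope rather than an argument. Worse, with your $H=G[\{a^i: i\in U\}\cup\{c^j: j\in L\}\cup\{b^m: m\in[3\gamma]\}]$ the argument cannot succeed in general. A middle $b^m$ is forced to be adjacent to every $a^i$ with $i\in U$ and every $c^j$ with $j\in L$, while the $b$--$b$ adjacencies are unconstrained. So if, say, all vertices are distinct and the $b$'s form a clique, then $b^m$ is a universal vertex of $H$, and $H$ does not violate weak $\gamma$-closedness. Deleting such vertices removes exactly the common neighbours $b^m$, $m\in M$, that your $(a^i,c^j)$ case relies on, so the pruning as described gives no control. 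Your outer-$b$ case likewise has no fallback when $a^m$ (resp.\ $c^m$) is not complete to the opposite block.

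Two ideas are missing. For $S\subseteq[3\gamma]$ write $A^S=\{a^i: i\in S\}$, and similarly $B^S$ and $C^S$.

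\textbf{Enlarge $H$.} Take $H=G[A^{M\cup U}\cup B^{L\cup M\cup U}\cup C^{L\cup M}]$. This still avoids the only possible $a$--$c$ coincidence $a^1=c^{3\gamma}$, and it now contains the non-dominating partner $a^i$ or $c^i$ of every $b^i$.

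\textbf{Use a same-letter dichotomy.} Replace ``provided it is complete to the opposite block'' by a dichotomy against a block of the same letter. This exploits that the $a$'s in $U$ dominate every lower-indexed $b$, and the $c$'s in $L$ dominate every higher-indexed $b$.
\begin{itemize}
\item For $b^i$ with $i\in L$: if $b^i$ is adjacent to all of $B^M$, take $v=c^i$ with common neighbours $B^M$. Otherwise some non-neighbour $b^j\in B^M$ shares the $\gamma$ neighbours $A^U$ with $b^i$.
\item For $i\in M$, compare with $B^L$: either $v=a^i$, or a non-neighbour $b^j$ sharing $A^U$.
\item For $i\in U$, compare with $B^M$: either $v=a^i$, or a non-neighbour $b^j$ sharing $C^L$.
\item The new middle vertices $a^i$, $i\in M$, are compared with $A^U$. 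If complete, take $v=b^i$ with common neighbours $A^U$; otherwise a non-neighbour $a^j\in A^U$ shares $B^L$.
\item The middle $c^i$ are handled symmetrically with $C^L$ and $B^U$.
\end{itemize}
With these cases every vertex of $H$ has a non-neighbour with at least $\gamma$ common neighbours. This is where all three blocks are genuinely needed, and it is the paper's proof.
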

\begin{proof}
    Let $G$ be a weakly $\gamma$-closed graph,
    and let $\dom(G)=(C^*,W^*,E)$ be its agreement graph.

    Suppose $A\coloneq\Set{a^1,\dots,a^{3\gamma}}\subseteq C^*$,
    $B\coloneq\Set{b^1,\dots,b^{3\gamma}}\subseteq W^*$,
    and $C\coloneq\Set{c^1,\dots,c^{3\gamma}}\subseteq C^*$
    be such that
    $(a^1,b^1,c^1),\dots,(a^{3\gamma},b^{3\gamma},c^{3\gamma})$ forms a
    double-ladder in $\dom(G)$ of order $q\coloneq 3\gamma$.
    That is,
    $a^ib^j\in E$ if and only if $i>j$ and $b^ic^j\in E$ if and only if
    $i>j$.
    We write $A^{i,j}\coloneq \Set{a^t\given i\leq t\leq j}$ and
    similarly $B^{i,j}$ and $C^{i,j}$ to ease notation.
    Consider the induced subgraph
    $G'=G[A^{\gamma+1,3\gamma}\cup B\cup C^{1,2\gamma}]$.
    We show that for each vertex $u$ in $G'$,
    there is a non-adjacent vertex $v$ that shares at least
    $\gamma$ neighbors with it.
    In particular,
    we identify a pair of vertices in $\dom(G)$ together with
    a set of at least $\gamma$ common neighbors $S\subseteq N(u)\cap N(v)$.
    Although the same vertex in $G$ may in general
    appear in both $C^*$ and $W^*$,
    since $uv\notin E(G)$,
    they must be distinct from every vertex in $S$.
    It will be clear from our choice of $S$
    that there are no duplicate vertices of $G$ in $S$ either.

    First consider some $a^i\in A^{\gamma+1,2\gamma}$.
    If $A^{2\gamma+1,3\gamma}\subseteq N(a^i)$,
    then we choose $v\coloneq b^i$ and $S\coloneq A^{2\gamma+1,3\gamma}$.
    We have $a^ib^i\notin E$ and
    $\Abs{N(a^i)\cap N(b^i)}\geq \Abs{A^{2\gamma+1,3\gamma}}=\gamma$.
    Otherwise,
    we choose $v\coloneq a^j\in A^{2\gamma+1,3\gamma}-N(a^i)$
    and $S\coloneq B^{1,\gamma}$.
    We have,
    $\Abs{N(a^i)\cap N(a^j)}\geq \Abs{B^{1,\gamma}}=\gamma$.
    The case for $u\in C^{\gamma+1,2\gamma}$ is analogous:
    simply replace $S=A^{2\gamma+1,3\gamma}$ with $C^{1,\gamma}$
    and $S=B^{1,\gamma}$ with $B^{2\gamma+1,3\gamma}$.

    Next, suppose $u=a^i\in A^{2\gamma+1,3\gamma}$.
    If $C^{1,\gamma}\subseteq N(a^i)$,
    we choose $v\coloneq b^i$ and $S\coloneq C^{1,\gamma}$.
    We have $a^ib^i\notin E$ and
    $\Abs{N(a^i)\cap N(b^i)}\geq \Abs{C^{1,\gamma}}=\gamma$.
    Otherwise,
    we choose $v\coloneq c^j\in C^{1,\gamma}-N(a^i)$ and
    $S\coloneq B^{\gamma+1,2\gamma}$.
    We then have
    $\Abs{N(a^i)\cap N(c^j)}\geq\Abs{B^{\gamma+1,2\gamma}}=\gamma$.
    The case for $u\in C^{1,\gamma}$ is analogous:
    simply replace $S=C^{1,\gamma}$ with $A^{2\gamma+1,3\gamma}$.

    Finally, consider the case where $u=b^i\in B^{1,\gamma}$.
    If $B^{\gamma+1,2\gamma}\subseteq N(b^i)$,
    we choose $v\coloneq c^i$ and $S\coloneq B^{\gamma+1,2\gamma}$.
    We have $b^ic^i\notin E$ and
    $\Abs{N(b^i)\cap N(c^i)}\geq \Abs{B^{\gamma+1,2\gamma}}=\gamma$.
    Otherwise,
    we choose $v\coloneq b^j\in B^{\gamma+1,2\gamma}-N(b^i)$
    and $S\coloneq A^{2\gamma+1,3\gamma}$.
    We have
    $\Abs{N(b^i)\cap N(b^j)}\geq \Abs{A^{2\gamma+1,3\gamma}}=\gamma$.
    The case for $u\in B^{\gamma+1,3\gamma}$ is analogous:
    for $u\in B^{2\gamma+1,3\gamma}$,
    replace $S=A^{2\gamma+1,3\gamma}$ with $C^{1,\gamma}$,
    and for $u\in B^{\gamma+1,2\gamma}$,
    replace $S=B^{\gamma+1,2\gamma}$ with $B^{1,\gamma}$.
\end{proof}

Combining
\cref{lem:co-matching-Ramsey,lem:weakly-closed-co-matching,lem:weakly-closed-double-ladder},
we can bound the $\dom^k$-co-matching index of weakly $\gamma$-closed
graphs.

\ThmWeaklyClosedDomkCoMathing

\begin{proof}
    By \cref{lem:weakly-closed-co-matching},
    $\dom(G)$ has co-matching index smaller than $\gamma+2\leq3\gamma$.
    Similarly, by
    \cref{lem:weakly-closed-double-ladder},
    $G$ has double-ladder index smaller than $3\gamma$.
    By \cref{lem:co-matching-Ramsey}, $\dom^k(G)$ has co-matching index
    smaller than $R^{k^2}(3\gamma)$.
\end{proof}

By \cref{cor:runtime-class},
the running time of the Co-Matching Algorithm is
$k^{O(\gamma^2 k^3)}\Abs{G}^{O(1)}$,
which matches that of the algorithm given in \cite{LokshtanovS21}.

\CorWeaklyClosedRuntime

Finally,
combining
\cref{lem:weakly-closed-co-matching,cor:delta_1-co-matching-VC-dimension},
we can improve the bound on VC-dimension of weakly $\gamma$-closed graphs
from $6\gamma$~\cite[Theorem~3]{LokshtanovS21} to $\gamma+1$,
almost matching the lower bound $\gamma$ given in \cite{LokshtanovS21}.

\subsection{Relation to Different Graph Classes}
We investigate here the properties of graph classes of bounded
$\dom^k$-co-matching index and how they relate to other well-studied graph
classes.
Since all biclique-free classes have bounded semi-ladder index,
they also have bounded $\dom^k$-co-matching index.
We consider dense classes and show two non-inclusion results.
On the one hand,
we show that classes of bounded shrubdepth do not have bounded
$\dom^k$-co-matching index.
On the other hand,
we show that $c$-closed graphs (\cref{def:closed}),
which have bounded $\dom^k$-co-matching index,
have unbounded radius-$1$ merge-width.
These two results suggest that the notion of bounded $\dom^k$-co-matching index
is incomparable to well established tractability frontiers in the dense regime.

We start by showing that the classes of graphs of bounded \emph{shrubdepth,}
one of the simplest known hereditary dense graph classes and a generalization
of bounded treedepth, have unbounded co-matching index.

A functionally equivalent notion is the \emph{SC-depth,} which we define as
follows. Given a graph $G$ and a set of vertices $A\subseteq V(G)$, applying
a \emph{flip} on $A$ in $G$ results in the graph $G\oplus A$ obtained from
$G$ by inverting the adjacency relation within $A$. The \emph{SC-depth} of
$G$ is then defined as
\[
    \SCd(G)=\begin{dcases*}
        0                                                        & if $G$ has one vertex;      \\
        \max\Set{\SCd(C)\given \text{$C$ is a component of $G$}} & if $G$ is disconnected; and \\
        1+\min_{A\subseteq V(G)}\SCd(G\oplus A)                  & otherwise.
    \end{dcases*}
\]
For classes of shrubdepth at most $d$, the \textsc{Dominating Set} problem
can be solved in time $O(2^{dk} n^{O(1)})$ using at most $n^{O(1)}$
space~\cite{BergougnouxCGKMOPL23}.

We show here that the class of co-matchings has bounded shrubdepth, and
thus graph classes of bounded shrubdepth do not have bounded
$\dom^k$-co-matching index.

\ThmShrubdepthCoMatching
\begin{proof}
    We show that the
    SC-depth of co-matchings is at most $4$. Let $G=(U\cup V,E)$, where
    $U=\Set{u_1,\dots,u_n}$ and $V=\Set{v_1,\dots,v_n}$ are such that
    $(u_1,v_1),\dots,(u_n,v_n)$ forms a co-matching. The graph $G'\coloneq
        G\oplus (U\cup V)\oplus U\oplus V$ is exactly a perfect matching, i.e.,
    each connected component in $G'$ is an isolated edge. The SC-depth of
    $G'$ is $1$, and thus the SC-depth of $G$ is at most $4$.
\end{proof}

We next investigate how they
relate to other graph classes. Since graph classes of bounded $\dom^k$-co-matching index generalize graph classes of bounded
degeneracy, they are not monadically dependent.
\emph{Merge-width}~\cite{DreierT25} is a family of graph parameters that
falls outside the monadically stable regime and generalizes the parameters clique-width, twin-width and (weak) coloring numbers. In particular, graph classes
of bounded radius-$1$ merge-width generalize those of bounded degeneracy in
the dense setting, and the two coincide when the graph class is
biclique-free.

\begin{theorem}[{\cite[Corollary~7.7]{DreierT25}}]\label{thm:mw_sparse} A
    graph class $\C$ has bounded degeneracy if and only if
    $\mw_1(\C)<\infty$, and it excludes some biclique $K_{t,t}$ as subgraph.
\end{theorem}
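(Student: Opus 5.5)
The plan is to prove the two directions separately. Since this is a cited result (\cite[Corollary~7.7]{DreierT25}), I would work directly from the definition of a construction sequence and radius-$1$ merge-width given there. Recall its shape. A construction sequence starts from the partition of $V(G)$ into singletons with all pairs undecided. Each step either merges two parts or \emph{resolves} a pair of parts, declaring all pairs between them to be edges or non-edges consistently with $G$. At the end everything is one part and all pairs are resolved. The radius-$1$ width at a given time is, up to the exact convention, the maximum over vertices $v$ of the number of parts meeting the set of vertices joined to $v$ by a resolved edge (plus the part of $v$). $\mw_1(G)$ is the minimum over sequences of the maximum width over time.

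For the direction ``bounded degeneracy $\Rightarrow$ bounded $\mw_1$ and biclique-free'', I would argue as follows. A $d$-degenerate graph cannot contain $K_{d+1,d+1}$ as a subgraph, since that subgraph has minimum degree $d+1$; this settles biclique-freeness. For $\mw_1$, fix a degeneracy ordering $v_1,\dots,v_n$ in which each vertex has at most $d$ neighbours later in the order. Process the vertices from first to last, and at step $i$ do two things:
\begin{itemize}
\item resolve positively the at most $d$ edges from $v_i$ to later vertices, which are still singleton parts;
\item resolve negatively all remaining pairs between $v_i$ and the merged ``processed'' part, then merge $v_i$ into that part.
\end{itemize}
Then a vertex sees via resolved edges only the processed part plus at most $d$ singleton parts of later neighbours, and an already processed vertex likewise has at most $d$ later neighbours. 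So the width stays $O(d)$. The point to check carefully is that every resolution is homogeneous: pairs between the processed part and a later vertex $u$ must only be resolved once $u$ itself is processed, and only negatively for the non-neighbours. Edges into the processed part are always already resolved individually before merging, which is why this ordering works.

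For the converse, assume $\mw_1(G)\le m$ and $G$ excludes $K_{t,t}$ as a subgraph. I aim to show that every subgraph has a vertex of degree at most $f(m,t)$; since both hypotheses are hereditary, it suffices to find one low-degree vertex in $G$. The idea is to look at the moment in an optimal construction sequence when edges get resolved. Each edge $uv$ is resolved as part of a pair of parts $(P,Q)$ that are completely adjacent, so $G[P,Q]$ is a complete bipartite graph. Biclique-freeness therefore forces $\min(|P|,|Q|)<t$ whenever a positive resolution occurs. Now charge each edge to the endpoint lying in the smaller side at the moment of its resolution. Bounded width means a vertex $v$ sees at most $m$ parts through resolved edges at any time. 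I then want to show that the number of edges charged to $v$ is $O(mt)$, or more precisely that some vertex has $O(mt)$ charged edges. Summing, this gives $|E(H)|=O(mt)\,|V(H)|$ for every subgraph $H$, and hence bounded degeneracy.

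The main obstacle is this charging step. Parts keep merging, so the $m$ parts visible to $v$ change over time, and a naive count does not bound the total number of positively resolved neighbours of $v$. The first thing I would try is to bound the resolved neighbourhood of $v$ at the final moment of the sequence. At that point all edges incident to small parts have been absorbed into large parts, and the width bound says these neighbours lie in at most $m$ parts. Those parts are large, and each is complete to some small part containing a vertex charged with that edge. I would then combine this with a K\H{o}v\'ari--S\'os--Tur\'an-type density bound for $K_{t,t}$-free graphs. Failing that, I would rely on the paper's correspondence in \cite{DreierT25} between radius-$r$ merge-width and generalized colouring numbers on weakly sparse classes, specialised to $r=1$, where the relevant colouring number is the degeneracy.
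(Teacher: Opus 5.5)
The paper gives no proof of this statement (it is imported from \cite{DreierT25}), so your proposal has to stand on its own. Your first direction is fine: in the degeneracy-order sequence, the only unresolved pairs between $v_i$ and the processed part are non-edges, and the width stays at most $d+2$. The converse, however, has a genuine gap. Its key step claims that a positive resolve of $(P,Q)$ makes $G[P,Q]$ complete bipartite. That claim is incompatible with the resolve rule your first direction relies on. A resolve only requires the pairs between $P$ and $Q$ that are \emph{not yet resolved} to be homogeneous. (If you required full homogeneity instead, resolving $v_i$ negatively against the processed part would be illegal whenever $v_i$ has an earlier neighbour.) Under this rule, $\min(|P|,|Q|)<t$ fails. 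Take a perfect matching $p_1q_1,\dots,p_nq_n$. For $j=1,\dots,n$, resolve $\{p_j\}$ negatively against $\{q_1,\dots,q_{j-1}\}$ and $\{q_j\}$ negatively against $\{p_1,\dots,p_{j-1}\}$, then merge each into its side. The width stays at most $3$, and at the end two parts of size $n$ are resolved positively in a $K_{2,2}$-free graph. So the charging argument, which you leave open anyway, has no basis. The ``final moment'' idea is vacuous, since at that point there is a single part. Falling back on the colouring-number correspondence of \cite{DreierT25} is circular, because that is the statement being proved.

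Two ingredients are missing. (i) At every moment, the unresolved pairs between two current parts $A,B$ (possibly $A=B$) are all edges or all non-edges. This holds because parts never split, so the first later resolve touching one such pair touches all of them. (ii) The width counts the parts hit by \emph{all} resolved pairs, non-edges included. Otherwise every graph would have width $1$: resolve all non-edges between singletons, merge everything, and resolve $(V,V)$ positively.

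With these, assume $|V(G)|\ge t$ and consider the first moment a part $C$ with $|C|\ge t$ arises, by merging two parts of size $<t$. Just before, each vertex has its resolved partners inside at most $m$ parts of size $<t$, so it has fewer than $mt$ of them; a merge does not change this. Fix $c\in C$ and a neighbour $u\notin C$ with $cu$ unresolved. By (i), $u$ lies in a part whose unresolved pairs towards $C$ are all edges. Hence either $u$ has a resolved partner in $C$, which happens for fewer than $2t\cdot mt$ vertices $u$, or $C\subseteq N(u)$, which happens for fewer than $t$ vertices $u$ since $G$ is $K_{t,t}$-free. The remaining neighbours of $c$ are its fewer than $mt$ resolved neighbours and fewer than $2t$ neighbours inside $C$. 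This gives $\deg(c)=O(mt^2)$. Both hypotheses pass to induced subgraphs by restricting the sequence, so the degeneracy is $O(mt^2)$.
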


We show that weakly $\gamma$-closed graph classes do not coincide with
classes of bounded degeneracy when projected to biclique-freeness, and
therefore have unbounded radius-$1$ merge-width.

\ThmWeaklyClosedMergeWidth

We prove this by constructing a class of
graphs that is $8$-closed, $K_{3,13}$-free, and has unbounded degeneracy.
The remainder of the section is devoted to this construction.

For each prime $p$, let $P_p=\Set{0,\dots,p-1}$ to ease notation. We
construct a graph $G_p$ with vertex set $V_p\coloneq P_p^2$. We use
$a,b,c,d,x,y$ for elements in $P_p$, and $u,v,w$ for elements in $V_p$.

\begin{definition}
    We call a set of vertices $S\subseteq V_p$ a \emph{line} if
    $S=\Set{(x,y)\given ax+by\equiv c \bmod p}$ for some $a,b,c\in P_p$ with
    $(a,b)\neq (0,0)$.
\end{definition}

Since $\Z_p$ is a field, every line contains exactly $p$ vertices, and any
two lines intersect at more than one point if and only if they are the same
line. Moreover, two lines $S_1=\Set{(x,y)\given a_1x+b_1y\equiv c_1 \bmod
        p}$ and $S_2=\Set{(x,y)\given a_2x+b_2y\equiv c_2 \bmod p}$ are the same if
and only if for some $x\in P_p$, $a_1 x\equiv a_2\bmod p$, $b_1 x\equiv
    b_2\bmod p$, and $c_1x \equiv c_2\bmod p$.

Let $L_{(a,b)}$ denote the line through $(0,a)$ and $(1,b)$ modulo $p$,
i.e., $L_{(a,b)}\coloneq\Set{(x,y)\in V_p\given y\equiv (b-a)x+a\bmod{p}}$.
We show that every vertex in $V_p$ corresponds to a unique line.
\begin{lemma}\label{lem:L-unique} For $(a,b)\neq (c,d)$, $L_{(a,b)}\neq
        L_{(c,d)}$.
\end{lemma}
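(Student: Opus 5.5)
We need to show that $L_{(a,b)}=L_{(c,d)}$ forces $(a,b)=(c,d)$. The plan is to read off the parameters from the line itself, i.e., evaluate the defining equation at $x=0$ and $x=1$.

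By definition, $L_{(a,b)}=\Set{(x,y)\in V_p\given y\equiv (b-a)x+a\bmod p}$. For each fixed $x\in P_p$, this set contains exactly one vertex with first coordinate $x$, namely the one with $y\equiv (b-a)x+a \bmod p$. Hence the line is the graph of a function $P_p\to P_p$, and that function is determined by the set.

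The steps are as follows.
\begin{enumerate}
    \item Setting $x=0$, the unique vertex of $L_{(a,b)}$ with first coordinate $0$ is $(0,a)$.
    \item Setting $x=1$, the unique vertex with first coordinate $1$ is $(1,(b-a)+a)=(1,b)$.
    \item Suppose $L_{(a,b)}=L_{(c,d)}$. Comparing the unique vertices with first coordinates $0$ and $1$ gives $(0,a)=(0,c)$ and $(1,b)=(1,d)$.
    \item Since $a,b,c,d\in P_p=\Set{0,\dots,p-1}$ are canonical residues, it follows that $a=c$ and $b=d$.
\end{enumerate}
This contradicts $(a,b)\neq(c,d)$.

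Alternatively, one could use the criterion stated before the lemma, which characterizes when two lines are the same via a common scalar multiple. With the coefficient of $y$ normalized to $1$, that scalar must be $1$, and comparing the remaining coefficients gives $b-a\equiv d-c$ and $a\equiv c$. The direct argument above is simpler, though.

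There is no real obstacle here. The only point needing care is that vertices are pairs of canonical residues in $P_p$, so congruence modulo $p$ amounts to equality. This also gives the intended consequence: the map $(a,b)\mapsto L_{(a,b)}$ is an injection from $V_p$ into the set of lines, so every vertex corresponds to a unique line.
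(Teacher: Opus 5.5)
Your proof is correct, and it takes a slightly different route from the paper's. The paper argues through the scalar-multiple criterion stated just before the lemma. If $L_{(a,b)}=L_{(c,d)}$, the coefficient triples of the two defining equations differ by a factor $x$. The $y$-coefficient forces $x\equiv 1$, and comparing the remaining coefficients gives $b-a\equiv d-c$ and $a\equiv c$. (The printed proof has a slip here, writing $ax\equiv d$ instead of $ax\equiv c$.) This is exactly the alternative you sketch. Your main route instead evaluates at $x=0$ and $x=1$, using only that $L_{(a,b)}$ is the graph of $x\mapsto (b-a)x+a$ through $(0,a)$ and $(1,b)$. It is fully self-contained and does not depend on the line-equality criterion, which the paper asserts without proof. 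What the paper's route buys is uniformity with \cref{lem:T-unique}. There $T_{(a,b)}=\Set{(x,y)\given ay\equiv(a-1)x+b\bmod p}$ becomes the vertical line $x=b$ when $a=0$, so a ``graph over $x$'' evaluation would need a case split. The coefficient comparison handles both lemmas the same way.
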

\begin{proof}
    If $L_{(a,b)}=L_{(c,d)}$, then there exists $x\in P$ such that $x\equiv
        1 \bmod p$, $(b-a)x\equiv (d-c)\bmod p$, and $ax\equiv d\bmod p$. It
    follows that $(a,b)=(c,d)$.
\end{proof}

We assign $L_{x,y}$ to be the set of neighbors of $(x,y)$, and iteratively
add edges until the resulting graph fulfills our requirements.

We first address the symmetry issue. Let $T_{(a,b)}=\Set{(x,y)\given
        (a,b)\in L_{x,y}}$. If we take $L_{(a,b)}\cup T_{(a,b)}$ to be the
neighborhood of $(a,b)$, then, by construction, $(c,d)\in L_{(a,b)}$ if and
only if $(a,b)\in T_{(c,d)}$, and the symmetry of the adjacency relation is
guaranteed.

It remains to show that the $T_{(a,b)}$'s are also lines.

\begin{lemma}\label{lem:T-unique} For any $(a,b)\in P^2$, $T_{(a,b)}$ is a
    line. Moreover, for $(a,b)\neq (c,d)$, $T_{(a,b)}\neq T_{(c,d)}$.
\end{lemma}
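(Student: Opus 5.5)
The plan is a direct computation: rewrite membership in $T_{(a,b)}$ as a single linear congruence in the unknowns $(x,y)$. Then both claims follow from the facts about lines recorded just before \cref{lem:L-unique}, in the same way as \cref{lem:L-unique} itself.

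\textbf{Step 1 (rewriting the condition).} By definition, $(x,y)\in T_{(a,b)}$ if and only if $(a,b)\in L_{(x,y)}$. Unfolding the definition of $L_{(x,y)}$, this means
\[
    b\equiv (y-x)a+x \pmod p,
    \qquad\text{equivalently}\qquad
    (1-a)\,x + a\,y \equiv b \pmod p .
\]
Here $(a,b)$ is fixed and $(x,y)$ ranges over $V_p$. Hence
\[
    T_{(a,b)}=\Set{(x,y)\given (1-a)x+ay\equiv b \bmod p}.
\]

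\textbf{Step 2 (it is a line).} To match the definition of a line, we must check that the coefficient pair $(1-a,a)$ is not $(0,0)$ modulo $p$. This holds because the two coefficients sum to $1\not\equiv 0$, so they cannot both vanish. Reducing $1-a$ into $P_p$, we conclude that $T_{(a,b)}$ is a line.

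\textbf{Step 3 (distinctness).} Suppose $T_{(a,b)}=T_{(c,d)}$. By the stated criterion for equality of lines, there is some $t\in P_p$ with
\[
    (1-a)t\equiv 1-c,\qquad at\equiv c,\qquad bt\equiv d \pmod p .
\]
Adding the first two congruences gives $t\equiv 1$. Substituting back yields $a\equiv c$ and $b\equiv d$, so $(a,b)=(c,d)$. This mirrors the proof of \cref{lem:L-unique}.

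The only mildly delicate point is Step 2, namely ruling out the degenerate coefficient pair. The observation that the coefficients always sum to $1$ settles it, so I do not expect any real obstacle.
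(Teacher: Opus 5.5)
Your proof is correct and matches the paper's argument: the paper also rewrites the membership condition as the linear congruence $ay\equiv (a-1)x+b \pmod p$, and it also uses the equality criterion for lines to force the scaling factor to be $1$, just as you do by adding the first two congruences. Your explicit check that the coefficient pair $(1-a,a)$ cannot vanish, because its entries sum to $1$, is a detail the paper leaves implicit.
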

\begin{proof}
    By definition,
    \[
        \begin{array}{RCL}
            T_{(a,b)} & = & \Set{(x,y)\given (a,b)\in L_{x,y}}             \\
                      & = & \Set{(x,y)\given b\equiv (y-x)a+x\bmod{p}}     \\
                      & = & \Set{(x,y)\given ay\equiv (a-1)x+b\bmod{p}}\,,
        \end{array}
    \]
    which is a line.

    Assume that $T_{(a,b)}=T_{(c,d)}$. Then, for some $x\in P$, we have
    $ax\equiv c\bmod p$, $(a-1)x\equiv c-1\bmod p$, and $bx\equiv d\bmod p$.
    From the first two conditions, we get $x=1$, and thus $a=c$ and $b=d$.
\end{proof}

To keep the closure of the graph low, we add edges between vertices that
share too many neighbors. Note that for any $u,v\in V_p$ with $u\neq v$,
\[
    \begin{array}{RCL}
        \Abs*{(L_u\cup T_u)\cap (L_v\cup T_v)} & \leq & \Abs*{L_u\cap L_v}+\Abs{L_u\cap T_v}+\Abs{T_u\cap L_v}+\Abs{T_u\cap T_v} \\
                                               & \leq & \Abs{L_u\cap T_v}+\Abs{T_u\cap L_v}+2\,.
    \end{array}
\]
It suffices to account for the cases where $L_u=T_v$ or $T_u=L_v$. By
\cref{lem:L-unique,lem:T-unique}, we can define $f(u)$ to be the unique
vertex such that $L_u=T_{f(u)}$ if it exists, and similarly $g(u)$ the
unique vertex such that $T_u=L_{g(u)}$ if it exists. If such a vertex does
not exist, we define $f(u)=u$ and $g(u)=u$, so that $f$ and $g$ are
well-defined on $V_p$ and $g(f(u))=f(g(u))=u$ holds for all $u\in V_p$.

The graph $G_p=(V_p,E_p)$ is defined such that the open neighborhood of $u$
is $N(u)=L_u\cup T_u\cup\Paren*{\Set{f(u),g(u)}\setminus\Set{u}}$.
We now prove \cref{thm:weakly-closed-merge-width} by showing that $G_p$ has
the claimed properties.

\begin{proof}[Proof of \cref{thm:weakly-closed-merge-width}]
    We show that the graph class $\C=\Set{G_p\given \text{$p$ is prime}}$
    is $8$-closed, $K_{3,13}$-free, and has unbounded
    degeneracy. We first show that $\C$ has unbounded degeneracy by showing
    that $G_p$ has minimum degree at least $p-1$. For every $u\in V_p$,
    \[
        \deg(u)=\Abs{N(u)}\geq\Abs{L_u-u}\geq\Abs{L_u}-1= p-1\,.
    \]

    Next, we show that it is $8$-closed. That is, for each $u\in V_p$ and
    each $v\in V_p-N[u]$, $\Abs{N(u)\cap N(v)}\leq 8$. Note that
    \[
        \begin{array}{RCL}
            \Abs{N(u)\cap N(v)} & \leq & \Abs*{(L_u\cup T_u)\cap (L_v\cup T_v)} +\Abs{\Set{f(u),g(u),f(v),g(v)}}    \\
                                & \leq & \Abs*{L_u\cap L_v}+\Abs{L_u\cap T_v}+\Abs{T_u\cap L_v}+\Abs{T_u\cap T_v}+4 \\
                                & \leq & 8\,.
        \end{array}
    \]

    Finally, we show that $G_p$ is $K_{3,13}$-free. For any three distinct
    vertices $u,v,w\in V_p$,
    \[
        \begin{array}{RCL}
            \Abs{N(u)\cap N(v)\cap N(w)} & \leq & \Abs*{\Paren[\Big]{(L_u\cup T_u)\cap (L_v\cup T_v)}\cap(L_w\cup T_w)}      \\
                                         &      & {}+\Abs{\Set{f(u),g(u),f(v),g(v),f(w),g(w)}}                               \\
                                         & \leq & \Abs{L_u\cap L_v}+\Abs{T_u\cap T_v}+ \Abs{L_u\cap T_v\cap(L_w\cup T_w)}    \\
                                         &      & {}+\Abs{T_u\cap L_v\cap(L_w\cup T_w)}+6                                    \\
                                         & \leq & \Abs{L_u\cap L_w}+\Abs{ T_v\cap T_w}+\Abs{T_u\cap T_w}+\Abs{L_v\cap L_w}+8 \\
                                         & \leq & 12\,.
        \end{array}
    \]

    Since $\C$ excludes $K_{13,13}$ as subgraph and has unbounded
    degeneracy, by \cref{thm:mw_sparse}, $\C$ has unbounded radius-$1$
    merge-width.
\end{proof}

\section{Beyond Dominating Set}\label{sec:beyond-domset}
We turn now our attention to applying the Co-Matching Algorithm to more
problems than \textsc{Dominating Set}.
As our work is an extension to \cite{FabianskiPST19},
we discuss first the settings considered in the paper beyond
\textsc{Dominating Set},
namely the distance-$r$ variant and other types of domination problems,
as well as \textsc{Independent Set},
and show how our results apply in these settings.
We further consider the \textsc{Partial Dominating Set} problem and give
a randomized approximation algorithm using the Semi-Ladder Algorithm.
Finally,
we draw inspirations from \cite{Guillemot25} and extend the framework to
set systems.

\subsection{From Distance-\texorpdfstring{$1$}{1} to Distance-\texorpdfstring{$r$}{r} Dominating Set}
\label{sec:dist-r-ds}
We discuss here how our results can be applied to the
\textsc{Distance-$r$ Dominating Set} problem.
We start with some terminology.
For $r\in\N$, we write $\dom_r(x,y)$ to denote the formula
$\dist(x,y)\leq r$.
The \textsc{Distance-$r$ Dominating Set} problem is defined by the formula
\[
    \dom_r^k(\vx; y)  \coloneq \bigvee_{i=1}^k \dom_r(x^i ; y)\,.
\]

For the Co-Matching Algorithm, the adaptation of the oracles is
straightforward. It suffices to extend the notion of neighborhood complexity
to distance-$r$ neighborhood complexity, yielding an fpt algorithm on graph
classes of bounded $\dom_r^k$-co-matching index.

For the characterization,
since $\dom_r(G)$ and $\dom(G^r)$ are isomorphic,
\begin{theorem}
    \label{thm:distance-r-characterization}
    Let $k\in\N$, let $\C$ be a graph class, and let $r\ge 1$.
    Then $\C$ has bounded $\dom_r^k$-co-matching index if and only if
    the class $\Set{G^r\mid G\in\C}$ of $r$-th powers has bounded
    $\dom^k$-co-matching index.
    Consequently, the Co-Matching Algorithm solves
    \textsc{Distance-$r$ Dominating Set} on graph classes whose $r$-th powers
    are co-matching- and double-ladder-free.
\end{theorem}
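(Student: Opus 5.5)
The plan is to show that the agreement graphs $\dom_r^k(G)$ and $\dom^k(G^r)$ are in fact equal (isomorphic via the identity on candidates and witnesses). Once that is established, the first claim is immediate, and the second follows by applying \cref{thm:characterize-co-matching-index} and the runtime analysis of \cref{sec:co-matching-algo} to the class of $r$-th powers.

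The first step is the pointwise identity. By definition, $G^r$ has vertex set $V(G)$, and distinct $u,v$ are adjacent in $G^r$ iff $\dist_G(u,v)\le r$. Hence $N_{G^r}[v]=\Set{u\given \dist_G(u,v)\le r}$ for every $v$. Here we use that the closed neighborhood includes $v$ itself, which matches $\dist_G(v,v)=0\le r$. It follows that $G\models\dom_r(x;y)$ iff $G^r\models\dom(x;y)$ for all $x,y\in V(G)$. Taking the disjunction over $i\in[k]$ gives $G\models\dom_r^k(\vc;w)$ iff $G^r\models\dom^k(\vc;w)$ for every $k$-tuple $\vc$ and every vertex $w$. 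Both agreement graphs have $V(G)^k$ as candidate side and $V(G)$ as witness side, so they coincide. In particular, they have the same co-matching index, and taking suprema over $G\in\C$ yields the stated equivalence.

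For the consequence, suppose $\Set{G^r\given G\in\C}$ is co-matching-free and double-ladder-free. For $k\ge 2$, \cref{thm:characterize-co-matching-index} applied to this class bounds its $\dom^k$-co-matching index. For $k=1$, the remark after that theorem says forbidding semi-induced co-matchings suffices; one can also see this directly from \cref{lem:co-matching-semi-induced-delta_1}. By the first part, $\mu$ for $\dom_r^k(G)$ is then bounded by a function of $k$ and the class.

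It remains to run the algorithm with oracles for $\dom_r^k(G)$. We could simply compute $G^r$ in polynomial time, for instance by BFS from every vertex, and then run the Co-Matching Algorithm for \textsc{Dominating Set} on $G^r$. \Cref{thm:co-matching-algo-runtime}, with $\nu^{G^r}(m)\le 2^m$ and $\lambda\le\Abs{G}$, gives an fpt bound. Alternatively, we could implement the oracles directly on $G$ by replacing $N[\cdot]$ with balls of radius $r$; the realized neighborhoods become distance-$r$ neighborhood traces.

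The only point needing care is that the equivalence is at the level of agreement graphs rather than the graphs themselves. In particular, one should not claim that $\C$ itself is co-matching-free. Once the agreement graphs are identified, no step is a real obstacle.
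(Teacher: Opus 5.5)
Your proposal is correct and follows the paper's argument. The paper's justification is exactly that $\dom_r(G)$ and $\dom(G^r)$ coincide, hence so do their $k$-fold disjunctions $\dom_r^k(G)$ and $\dom^k(G^r)$. The consequence then follows from \cref{thm:characterize-co-matching-index} together with the Co-Matching Algorithm whose oracles use radius-$r$ balls, as you describe, and your separate treatment of $k=1$ and the option of simply computing $G^r$ are harmless additions.
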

However, it is not clear which well-studied graph classes have bounded
$\dom_r^k$-co-matching index but unbounded $\dom_r^k$-semi-ladder index. As
far as we are aware, there is not much known about the latter classes, and it would
be interesting to explore this connection further.

\subsection{Domination-Type Problems}
\label{sec:domination-type}
The Semi-Ladder Algorithm in \cite{FabianskiPST19} applies to more general
\emph{domination-type problems},
which is any problem
defined by an FO formula of the form $\exists\vx\forall\vy\phi(\vx;\vy)$,
where $\phi$ is a positive Boolean combination of formulas of the form
$\dom(x;y)$.%
\footnote{We focus on the distance-$1$ case here for simplicity,
    but all arguments work for the distance-$r$ case as well.}
The agreement graph $\phi(G)$ for a formula $\phi(\vx;\vy)$ is defined as a
bipartite graph over the vertex set $V(G)^{\Abs{\vx}}\cup V(G)^{\Abs{\vy}}$,
where for $\vc\in V(G)^{\Abs{\vx}}$ and $\vw\in V(G)^{\Abs{\vy}}$,
$(\vc,\vw)\in E(\phi(G))$ if and only if $G\models \phi(\vc;\vw)$.

The Semi-Ladder Algorithm is fpt for any domination-type problem on $G$ if the $\phi^k$-semi-ladder index of $G$ is bounded.
Similarly,
the Co-Matching Algorithm is fpt if the $\phi^k$-co-matching index of $G$
is bounded.
However,
while the former classes are always characterized by semi-ladder-freeness of
$G$ (\cref{cor:semi-ladder-free-delta_1^k} and \cite[Lemma 4]{FabianskiPST19}),
the latter sometimes collapses to semi-ladder-free classes
(as opposed to co-matching-free and double-ladder-free) as well,
in which case the Co-Matching Algorithm loses its advantage over the
Semi-Ladder Algorithm.

The reason is as follows.
The proof of \cref{lem:co-matching-Ramsey} is analogous to that of
\cite[Lemma~4]{FabianskiPST19},
where given a positive Boolean combination $\psi(\vx;\vy)$ of
formulas $\phi_1(\vx;\vy),\dots,\phi_k(\vx;\vy)$,
the semi-ladder index of $\psi(G)$ is bounded by the semi-ladder indices of
$\phi_1(G),\dots,\phi_k(G)$.
However,
due to the pair of colors in our proof,
it implicitly makes use of the
double-ladder index with respect to two formulas at once, where the two
ladders together form a co-matching.
In the following example, we show that
the double-ladder index with respect to two formulas obtained by adding
spurious variables to the same formula $\phi$ can in general
only be bounded by the
ladder index of $\phi(G)$ instead of its double-ladder index.

\begin{example}\label{ex:double-ladder-spurious} Let
    $\psi(\vx;\vy)\coloneq\phi_1(\vx;\vy)\vee\phi_2(\vx;\vy)$, where
    $\phi_1(\vx;\vy)\coloneq \dom(x^1;y^1)$ and $\phi_2(\vx;\vy)\coloneq
        \dom(x^2;y^2)$, and let $G$ contain a ladder $(a_1,b_1),\dots,(a_4,b_4)$
    of order $4$. Then, as demonstrated in \cref{fig:spurious},
    $((a_1,b_1),(b_1,a_1)),\dots,((a_4,b_4),(b_4,a_4))$ forms a co-matching
    of order $4$ in $\psi(G)$.
\end{example}
\begin{figure}[t]
    \centering
    \includegraphics{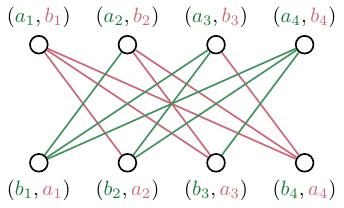}
    \caption{Illustration of a co-matching in $\psi(G)$ in
        \cref{ex:double-ladder-spurious}.
        $\phi_1$ is marked green and $\phi_2$ is marked red.}
    \label{fig:spurious}
\end{figure}
Essentially, the example shows that a ladder can be ``read'' twice in the
opposite direction to form a co-matching.
This example originates from \cite{MaehlmannS26}, in which co-matchings are
explored under a restricted notion of transduction. There, the example is
used to show that having a long ladder implies some form of
\emph{monadic non-equality property,}
which is similar to the $\phi$-co-matching index in
the progressive exploration framework.

To extend \cref{lem:co-matching-Ramsey} to allow all positive Boolean
combinations,
one would need to forbid adding spurious variables arbitrarily.
In particular,
one way to forbid reading a ladder twice is to fix $\Abs{\vy}=1$,
which is the case for $\dom_r^k(\vx;y)$.
Thus,
the issue does not appear for \textsc{(Distance-$r$) Dominating Set},
and \cref{lem:co-matching-Ramsey} can indeed be generalized to the
distance-$r$ setting.

Nonetheless,
we remark that
our approach can be applied to the following domination-type problems
proposed in \cite[Example~8]{FabianskiPST19}:
\begin{itemize}
    \item $y$ is at distance at most $r$ from at least two of the vertices
          $x^1,\dots,x^k$.
    \item The sum $\dist(x^1,y)+\dots+\dist(x^k,y)$ is at most $r$.
\end{itemize}

\subsection{Independent Set}
Next, we discuss the application of the progressive exploration framework
to the \textsc{Independent Set} problem.
There, they extend the algorithm to work on agreement graphs which are not
semi-ladder-free.
The agreement graph for the \textsc{Independent Set} problem has to have bounded ladder index and the \emph{weak $p$-Helly} property, a less restrictive version of the co-matching index.
This change is motivated by the fact that the agreement graph of many interesting graph classes, e.g. nowhere dense classes, is not semi-ladder-free for \textsc{Independent Set}.
The more restrictive algorithm comes at a cost:
The number of rounds the algorithm makes can only be
bounded by $R^p(2\ell)$, where $\ell$ is the ladder index of the agreement
graph.
This means that, even if the Compression Step can be added to the
Ladder Algorithm, this bound explodes to $O(|G|^p)$ for classes of
bounded $\dom^k$-co-matching index, instead of to $O(|G|)$.
Thus, we can only get an XP-time algorithm.

Moreover, it is recently shown that \textsc{Independent Set} is fpt in
semi-ladder-free graph classes using indiscernible
sequences~\cite{DreierMS26}, i.e., the weak Helly property is not
necessary. It seems that progressive
exploration may not be the best framework to tackle this problem.

\subsection{Partial Dominating Set}
\label{sec:pds}
We consider here the \textsc{Partial Dominating Set} problem,
which asks if there exists a set of size $k$ that dominates at least
$t\leq \Abs{G}$ vertices.
We show that we can adapt the progressive exploration framework to tackle
this variant,
albeit
the changes are of a different nature than the ones for \textsc{Distance-$r$
    Dominating Set} and only work for the Semi-Ladder Algorithm.

In contrast to the previous problems, \textsc{Partial Dominating Set} is not
expressible in first-order logic, especially not in the form
$\exists\vx\forall\vy\phi(\vx;\vy)$ for any formula $\phi$.

With a few simple changes, we can adapt the Semi-Ladder Algorithm to
\emph{approximate} the maximum value of $t$ of a \textsc{Partial Dominating
    Set} instance within a factor of $1-\epsilon$ on semi-ladder-free classes in
fpt time when $t = \Theta(\Abs{G})$ and for arbitrary $\epsilon>0$. Here, only
$k$, $\epsilon$ and the semi-ladder index of the agreement graph are
considered as parameters, not $t$.

The changes to the Semi-Ladder Algorithm are rather simple. The Witness
Oracle is replaced by the \emph{Randomized Witness Oracle} which selects
uniformly at random a vertex that does not agree with the current solution.
The analysis relies on the fact that if a large set is dominated,
then we can sample a witness from the set with sufficiently high probability.
Consequently, the algorithm only needs to be repeated a constant number of
times to
amplify the success probability to, say, $2/3$. The number of repetitions
depends only on $k$, $\epsilon$ and the semi-ladder index of the agreement
graph.

However, we refrain from giving more details here,
as the result is subsumed by the more general result
of \cite{BadanidiyuruKL12},
who establish that the optimal value of $t$ in the
\textsc{Partial Dominating Set} problem can be approximated within a factor
of $1-\epsilon$ in fpt time on graph classes of bounded VC-dimension.
This result holds for any $t$, also when $t = o(\Abs{G})$.
Nevertheless, it demonstrates the flexibility of the framework and may suggest potential for further adaptations.

\subsection{Set Cover}\label{sec:set-cover}
In \cite{Guillemot25},
Guillemot generalizes semi-ladder-free graph classes to semi-ladder-free
set systems (or, hypergraphs) and proposed a branching algorithm and a
kernelization for the \textsc{Set Cover} problem in such systems.
While the focus of \cite{Guillemot25} is to devise an algorithm that does
not rely on Ramsey arguments,
it provides an intuitive generalization of the structure and obstructions
into set systems.

In short,
a semi-ladder of order $\ell$ in a set system $(U,\F)$ is a sequence
$(u^1,S^1),\dots,(u^{\ell},S^{\ell})$ such that $u^i\in S^j\iff i <j$.
Since \textsc{Dominating Set} can be formulated as \textsc{Set Cover} using
the closed neighborhood set system,
their definition of semi-ladder-freeness corresponds to having
bounded $\dom$-semi-ladder index in the context of \textsc{Dominating Set}.

The \textsc{Set Cover} problem asks, given a finite universe $U$, a family
$\F\subseteq 2^U$, and an integer $k$, whether there exist at most $k$ sets
in $\F$ whose union is $U$.

We discuss here how our results apply in this setting.
First,
it is easy to see that all oracles can still be implemented efficiently in
the case of set systems.
By defining co-matchings and double-ladders in the same fashion,
the Ramsey argument relating $\dom^k$-semi-ladder indices to
$\dom$-semi-ladder indices~\cite[Lemma~4]{FabianskiPST19} and that relating
$\dom^k$-co-matching indices to $\dom$-co-matching indices and
$\dom$-double-ladder indices (\cref{lem:co-matching-Ramsey}) can be easily
adapted to work for $k$-fold unions of sets.
Thus,
the Co-Matching Algorithm (resp., the Semi-Ladder Algorithm) work for
\textsc{Set Cover} on co-matching-free
and double-ladder-free set systems (resp., on semi-ladder-free set systems).

\begin{theorem}
    \label{thm:set-cover-generalization}
    The Co-Matching Algorithm solves
    \textnormal{\textsc{Set Cover}} on
    co-matching-free and double-ladder-free set systems
    in fixed-parameter tractable time.
\end{theorem}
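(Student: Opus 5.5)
The plan is to transfer the whole pipeline of \cref{sec:co-matching-algo} and \cref{sec:graph-class-characterization} verbatim from the closed-neighbourhood system of a graph to an arbitrary set system $(U,\F)$.

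\textbf{Setting up the agreement graph.} Define the agreement graph $\mathrm{cov}^k(U,\F)$ with candidates $\F^k$ and witnesses $U$. A candidate $(S_1,\dots,S_k)$ agrees with $u$ iff $u\in\bigcup_i S_i$. The base graph $\mathrm{cov}(U,\F)$ is the case $k=1$, i.e.\ the incidence graph. Co-matchings, double-ladders and semi-ladders of the set system are defined as the corresponding structures in $\mathrm{cov}(U,\F)$. Here the two sides $U$ and $\F$ are already disjoint, so no semi-induced versus non-semi-induced subtlety arises, and \cref{lem:co-matching-semi-induced-delta_1,lem:double-ladder-semi-induced-delta_1} are not needed.

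\textbf{Step 1: the abstract algorithm.} \cref{lem:co-matching-algo-analysis,thm:co-matching-algo-runtime-abstract} are stated for an arbitrary bipartite agreement graph, so they apply unchanged to $\mathrm{cov}^k(U,\F)$. This gives:
\begin{itemize}
\item correctness;
\item at most $\lambda+1\le |U|+1$ rounds, since a semi-ladder uses distinct witnesses;
\item witness sets of size at most $\mu+1$, where $\mu$ is the co-matching index of $\mathrm{cov}^k(U,\F)$.
\end{itemize}

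\textbf{Step 2: the oracles.} The Witness Oracle computes $\bigcup_i S_i$ in time $O(\sum_i|S_i|)$. For the other two oracles, replace realized neighborhoods by traces $\Set{S\cap W\given S\in\F}$ on $W$. Their number is at most $2^{|W|}$; better, it is polynomial in $|W|$ by Sauer--Shelah, because an incidence-graph co-matching bound implies bounded VC-dimension by the argument of \cref{cor:delta_1-co-matching-VC-dimension}. Maintain the auxiliary incidence structure $B(W)$ between $\F$ and $W$ incrementally as before. The Candidate and Compression Oracles then enumerate $k$-combinations of traces exactly as in \cref{sec:domset-oracles}. The resulting running time is $O(\lambda\cdot(\mu+k)\cdot\|(U,\F)\| + \lambda\mu k\cdot\nu(\mu+1)^k)$, where $\|(U,\F)\|=\sum_{S\in\F}|S|+|U|+|\F|$. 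This is fpt once $\mu$ is bounded by a function of $k$ and the class.

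\textbf{Step 3: bounding $\mu$.} I would redo \cref{lem:co-matching-Ramsey} for $k$-fold unions. Take a co-matching $(\bar S^i,u^i)_{i\le q}$ in $\mathrm{cov}^k$ with $q=R^{k^2}(\ell)$. For $i>j$:
\begin{itemize}
\item color $(i,j)$ by $(p,p)$ if some $p$ has $u^j\in S^i_p$ and $u^i\in S^j_p$;
\item otherwise color it by $(p_1,p_2)$ with $u^j\in S^i_{p_1}$ and $u^i\in S^j_{p_2}$. Then necessarily $p_1\neq p_2$, $u^i\notin S^j_{p_1}$ and $u^j\notin S^i_{p_2}$.
\end{itemize}
Note that $u^i\notin S^i_p$ for every $p$, since $\bar S^i$ disagrees with $u^i$. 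Ramsey gives a monochromatic set $X$ of size $\ell$. A diagonal color yields a co-matching $(S^i_p,u^i)_{i\in X}$ in $\mathrm{cov}(U,\F)$. An off-diagonal color yields a double-ladder $(S^i_{p_1},u^i,S^i_{p_2})_{i\in X}$, with the edge/non-edge pattern checked exactly as in the six bullet points of the original proof. So if the set system forbids co-matchings and double-ladders of order $\ell$, then $\mu<R^{k^2}(\ell)$, and the theorem follows from Step 2.

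\textbf{Main obstacle.} The only delicate point is one that is easy to get wrong: a double-ladder in a set system must be defined with both outer rows in $\F$ and the middle row in $U$. This matches the agreement-graph orientation, with candidates on the outside. Only with this orientation does the Ramsey case $p_1\neq p_2$ produce exactly the forbidden structure.
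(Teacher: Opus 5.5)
Your proposal is correct and follows the same route as the paper. The paper only sketches this result: it observes that the oracles carry over to set systems and that \cref{lem:co-matching-Ramsey} adapts to $k$-fold unions, and your Steps~1--3 fill in exactly those details. Your point about orienting the double-ladder with both outer rows in $\F$ is a worthwhile precision of the paper's phrase ``in the same fashion'': unlike $\dom(G)$, the incidence graph of a set system is not symmetric, so the orientation genuinely matters.
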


Moreover,
similar to how \textsc{Dominating Set} can be generalized to domination-type
problems,
as discussed in \cref{sec:domination-type},
we can generalize \textsc{Set Cover} to \emph{coverage-type problems}
by replacing the $k$-fold union by any formula $\phi(S^1,\dots,S^k)$
consisting of only unions and intersections on the $k$ sets.
The problem asks whether there exist sets $S^1,\dots,S^k\in\F$ such that
\[
    U\subseteq\phi(S^1,\dots,S^k).
\]
The discussion in \cref{sec:domination-type} applies to such problems,
and thus the Co-Matching Algorithm is fpt on co-matching-free
and double-ladder-free set systems for such problems.
We also remark that the branching algorithm proposed in \cite{Guillemot25}
seems to work for such problems as well with straightforward modification.

We remark that the open question posed in \cite{Guillemot25} whether ladder-free
\textsc{Set Cover} is fpt on ladder-free set systems can be answered negatively
using \cref{prop:domset-hardness},
since the closed neighborhood set system of a ladder-free class is also
ladder-free.
It will be interesting to study whether co-matching-free and double-ladder-free
set systems also contain exploitable structures that allow for more
efficient algorithms and/or kernelization.

\section{Conclusion}\label{sec:conclusion}

We have shown that for \textsc{Dominating Set}, the progressive exploration
framework can be made to work with only the Helly property, and that this is
equivalent to graph classes being both co-matching- and double-ladder-free. This
subsumes many known fixed-parameter tractable cases of \textsc{Dominating Set},
namely on biclique-free classes and weakly $\gamma$-closed classes while also
matching the asymptotic running time on those classes.

These results can be generalized to the distance-$r$ setting, and an
interesting direction is to characterize the classes of bounded $\dom_r^k$-co-matching
index, i.e., the classes where the Co-Matching Algorithm for \textsc{Distance-$r$ Dominating Set} is fpt.
It is unclear which well-studied classes have this property.

We are surprised that for bounded ladder index, the progressive exploration
framework does not (or cannot) work.
This answers a question stated by Guillemot~\cite{Guillemot25} in the negative.
Hence, we cannot generalize the
tractability results for \textsc{Dominating Set} on (monadically) stable
classes. Is there a notion which generalizes monadically stable classes but
still allows for fixed-parameter tractability of \textsc{Dominating Set}?

\bibliography{references.bib}

\end{document}